\documentclass[11pt]{article}

\usepackage{amsmath} 
\usepackage{amsthm} 
\usepackage{amssymb}	
\usepackage{graphicx} 
\usepackage{multicol} 
\usepackage{multirow}
\usepackage{xcolor}
\usepackage{bm}
\usepackage{bbm}
\usepackage[letterpaper,margin=1in,bottom=1in]{geometry}
\usepackage{authblk}

\usepackage[utf8]{inputenc}
\usepackage[english]{babel}

\usepackage{mathtools}

\newtheorem{theorem}{Theorem}[section]

\newtheorem{lemma}[theorem]{Lemma}
\newtheorem{corollary}[theorem]{Corollary}
\newtheorem{proposition}[theorem]{Proposition}

\newtheorem{definition}{Definition}[section]

\DeclarePairedDelimiter\ket{\lvert}{\rangle}
\DeclarePairedDelimiter\bra{\langle}{\rvert}

\newcommand{\proj}[1]{\ket{#1}\!\bra{#1}}

\DeclarePairedDelimiter\norm{\lVert}{\rVert}

\usepackage{algorithm}
\usepackage{algpseudocode}

\usepackage{tabularx}
\usepackage{booktabs}
\usepackage{threeparttable}
\usepackage{adjustbox} 

\usepackage[
    colorlinks=true,
    linkcolor=blue!55!black,
    urlcolor=blue!55!black,
    citecolor=blue!55!black
]{hyperref}
\usepackage[capitalize,noabbrev]{cleveref}
\makeatletter
\providecommand{\theHALG@line}{\thealgorithm.\arabic{ALG@line}}
\makeatother

\newcommand{\cO}{\mathcal O}
\newcommand{\Tr}{\operatorname{Tr}}
\newcommand{\Dtr}{\mathrm{d}_{\mathrm{tr}}}
\newcommand{\transduce}[1]{\stackrel{#1}{\rightsquigarrow}}
\newcommand{\DownTransduce}{%
  \scalebox{0.7}{\rotatebox[origin=c]{270}{$\rightsquigarrow$}}}

\title{Optimal Query Complexity for Ground-State Preparation}

\author[1]{Boyang Chen\thanks{\texttt{by-chen24@mails.tsinghua.edu.cn}}}
\author[2,3]{Minbo Gao\thanks{\texttt{gmb17@tsinghua.org.cn}}}
\author[4,5]{Xinzhao Wang\thanks{\texttt{xinzhaowang3@gmail.com}}}
\author[4,5]{Shuo Zhou\thanks{\texttt{antientropy@pku.edu.cn}}}

\affil[1]{Department of Computer Science and Technology,
Tsinghua University, Beijing, China}

\affil[2]{Institute of Software,
Chinese Academy of Sciences, Beijing, China}

\affil[3]{University of Chinese Academy of Sciences,
Beijing, China}

\affil[4]{Center on Frontiers of Computing Studies,
Peking University, Beijing, China}

\affil[5]{School of Computer Science,
Peking University, Beijing, China}
\date{}

\begin{document}
\maketitle

\begin{abstract}
We determine the optimal query complexity of ground-state preparation to
trace-distance error $\varepsilon$ when an energy threshold in the spectral
gap is known.  Let $U_H$ be an $\alpha$-block-encoding of a Hamiltonian
with unique ground state $\ket{\psi_0}$, and suppose
$|\langle\psi_0|U_I|0\rangle|\ge\gamma$ for a state-preparation oracle $U_I$.
The threshold lies at least $\Delta/2$ above the ground-state energy and at
least $\Delta/2$ below every excited-state energy.  We give two algorithms
that prepare a state within trace distance $\varepsilon$ of the ground state.
One uses $\cO((\alpha/\Delta)(\gamma^{-1}+\log(1/\varepsilon)))$ calls to
$U_H$ in expectation; the other uses
$\cO((\alpha/(\gamma\Delta))\log(1/\varepsilon))$ calls to $U_H$ in the worst
case.  We prove a lower bound matching the expected query count; the
corresponding worst-case lower bound follows from Somma and de
Wolf~\cite{SdW26}.  The respective bounds on
calls to $U_I$ are $\cO(1/\gamma)$ in expectation and
$\cO(\gamma^{-1}\log(1/\varepsilon))$ in the worst case.  On $(N+1)$-dimensional systems,
these $U_I$ bounds are also optimal when the expected or worst-case count of
$U_H$ calls, respectively, is $o((\alpha/\Delta)\sqrt N)$.  Both algorithms
use a constant-accuracy spectral filter to construct a purifier, which we
then sequentially compose during amplitude amplification to prepare a state
with constant overlap with the ground state.  The expected-query algorithm
repeats the preparation followed by one high-accuracy spectral filter until
success.  The worst-case algorithm uses filters of increasing accuracy and
limits the total number of queries.
\end{abstract}

\newpage
\tableofcontents

\newpage

\section{Introduction}
\label{sec:introduction}

Ground-state preparation is a fundamental task in quantum algorithms for
many-body physics and quantum chemistry, enabling the study of ground-state
properties~\cite{PW09,McArdle20}.  Preparing such states is computationally
hard in general: an efficient method for arbitrary local Hamiltonians,
followed by energy estimation, would solve the QMA-complete Local Hamiltonian
problem~\cite{KKR06}.  Algorithms with provable efficiency guarantees
therefore typically assume additional information about the initial state
and the low-energy spectrum.

We consider a Hamiltonian $H$ with a unique ground state $\ket{\psi_0}$.  The
Hamiltonian is accessed through an $\alpha$-block-encoding $U_H$, a unitary whose
all-zero ancilla block is $H/\alpha$~\cite{GSLW19}.  A second oracle $U_I$
prepares a state $\ket\phi$ satisfying
$|\langle\psi_0|\phi\rangle|\ge\gamma$.  We assume a known threshold $\mu$
such that the ground-state energy lies at least $\Delta/2$ below $\mu$ and
every excited-state energy lies at least $\Delta/2$ above $\mu$.  The goal is to prepare
$\ket{\psi_0}$ to trace-distance error at most $\varepsilon$.  Let $q_H,q_I$
count calls to $U_H,U_H^\dagger$ and $U_I,U_I^\dagger$, respectively, and let
$Q_H,Q_I$ be worst-case bounds that hold on every run.

Lin and Tong~\cite[Section~3 and Theorem~6]{LT20} use quantum singular value
transformation (QSVT)~\cite{GSLW19} to construct a spectral filter that
suppresses the excited-state components of $\ket\phi$.  The filter succeeds
with probability $\Omega(\gamma^2)$.  Amplitude amplification
raises this probability to a constant in $\cO(1/\gamma)$ iterations~\cite{BHMT02}.
Each amplitude-amplification iteration uses a reflection about $\ket\phi$, implemented with $U_I$ and
$U_I^\dagger$, and a reflection about the ground state.  Approximating the
latter to error $\cO(\gamma\varepsilon)$ per use keeps the total amplification
error within $\cO(\varepsilon)$.  The spectral filter can be used to
approximate this ground-state reflection using
$\cO((\alpha/\Delta)\log(1/(\gamma\varepsilon)))$ calls to $U_H$ and
$U_H^\dagger$.  Thus, their algorithm succeeds with constant
probability, and its worst-case query counts satisfy
\begin{equation}
Q_H=\cO\!\left(
\frac{\alpha}{\gamma\Delta}
\log\frac1{\gamma\varepsilon}
\right),
\qquad
Q_I=\cO\!\left(\frac1\gamma\right).
\label{eq:intro-baseline}
\end{equation}
Repeating independent runs of this algorithm until success produces a state
within trace distance $\varepsilon$ of $\proj{\psi_0}$, with $\mathbb E[q_H]$
and $\mathbb E[q_I]$ obeying the corresponding bounds
in~\eqref{eq:intro-baseline}.

In this paper, we give an algorithm whose expected number of calls to
$U_H$ satisfies
\begin{equation}
\mathbb E[q_H]=\cO\!\left(
\frac{\alpha}{\gamma\Delta}
+\frac{\alpha}{\Delta}\log\frac1\varepsilon
\right).
\label{eq:intro-optimal}
\end{equation}
It also has $\mathbb E[q_I]=\cO(1/\gamma)$.  We prove a matching lower bound on
$\mathbb E[q_H]$.

We also give an algorithm whose output state is within trace distance
$\varepsilon$ of $\proj{\psi_0}$ and whose query counts are bounded on every
run as follows:
\begin{equation}
Q_H=\cO\!\left(
\frac{\alpha}{\gamma\Delta}\log\frac1\varepsilon
\right),
\qquad
Q_I=\cO\!\left(\frac1\gamma\log\frac1\varepsilon\right).
\label{eq:intro-unconditional}
\end{equation}
The worst-case $Q_H$ bound matches the lower bound of Somma and de
Wolf~\cite{SdW26}.  We also prove matching lower bounds on calls to $U_I$
when $\mathbb E[q_H]$ in the expected-query setting or $Q_H$ in the
worst-case setting is $o((\alpha/\Delta)\sqrt N)$ on
$(N+1)$-dimensional systems:
\[
\mathbb E[q_I]=\Omega(1/\gamma),
\qquad
Q_I=\Omega\!\left(\frac1\gamma\log\frac1\varepsilon\right).
\]
A restriction on calls to $U_H$ is needed for these $U_I$ lower bounds:
at constant error,
$\cO((\alpha/\Delta)\sqrt N)$ calls to $U_H$ suffice without $U_I$;
see Section~\ref{subsec:initial-query-lower}.

To obtain the expected bound~\eqref{eq:intro-optimal}, we make two changes to
the filtering approach behind~\eqref{eq:intro-baseline}.  First, amplitude
amplification is used only to produce a state $\rho_c$ satisfying
$\bra{\psi_0}\rho_c\ket{\psi_0}=\Omega(1)$; a single high-accuracy spectral
filter then yields a state within trace distance $\varepsilon$ of
$\ket{\psi_0}$ on success.  If the ground-state reflections in amplitude
amplification are implemented by QSVT circuits, the two-stage algorithm's
worst-case query count satisfies
\[
Q_H=\cO\!\left(
\frac{\alpha}{\gamma\Delta}\log\frac1\gamma
+\frac{\alpha}{\Delta}\log\frac1\varepsilon
\right).
\]
Second, we combine the space-efficient purifier of Belovs and
Jeffery~\cite{BJ26} with the transducer composition results of Belovs,
Jeffery, and Yolcu~\cite{BJY24} to remove the remaining $\log(1/\gamma)$.
This allows the $\cO(1/\gamma)$ ground-state reflections to be composed
without constructing a separate $\cO(\gamma)$-accurate circuit for each
reflection.  The resulting
two-stage algorithm succeeds with constant probability per run, so repeating
it until success gives
Eq.~\eqref{eq:intro-optimal}.

Our use of transducers is closely related to the ground-state energy estimation algorithm of Jeffery and Witteveen~\cite{JW26}, which removes the same $\log(1/\gamma)$ overhead using queries to a
unitary whose eigenphases encode the Hamiltonian spectrum. Their algorithm is organized around decision problems, composing transducers for decision versions of phase estimation and amplitude amplification to estimate the ground-state energy. In contrast, we use transducers within amplitude amplification to prepare a state with constant ground-state overlap, and then apply high-accuracy spectral filtering to reach error $\varepsilon$.

For the fixed query bound~\eqref{eq:intro-unconditional}, one could instead
run the two-stage algorithm $\cO(\log(1/\varepsilon))$ times, with
target error $\varepsilon/2$, and
return an arbitrary state if every run fails.  Repeating the entire procedure
would multiply both terms of its query cost:
\begin{equation}
Q_H=\cO\!\left[
\left(\frac{\alpha}{\gamma\Delta}
+\frac{\alpha}{\Delta}\log\frac1\varepsilon\right)
\log\frac1\varepsilon
\right].
\label{eq:intro-naive-fixed}
\end{equation}
To avoid paying for the most accurate filter on every attempt, our algorithm
applies filters of increasing accuracy and starts a new attempt when one rejects.
Rejected attempts rarely reach the more expensive filters.  A query cutoff
then gives~\eqref{eq:intro-unconditional} without the extra
$\log(1/\varepsilon)$ multiplying the filtering cost
in~\eqref{eq:intro-naive-fixed}.

Table~\ref{tab:query-comparison} compares these bounds with those of
Lin and Tong~\cite[Theorem~6]{LT20} under the same oracle model.
\begin{table}[htbp]
\centering
\begin{threeparttable}
\vspace{-8pt}
\caption{Query bounds for ground-state preparation to trace-distance error
$\varepsilon$.}
\label{tab:query-comparison}
\small
\setlength{\tabcolsep}{5pt}
\renewcommand{\arraystretch}{1.25}
\begin{tabular}{@{}lcll@{}}
\toprule
Algorithm & Query measure & $U_H,U_H^\dagger$ calls & $U_I,U_I^\dagger$ calls \\
\midrule
Lin--Tong~\cite[Theorem~6]{LT20} & Expected
  & $\cO\!\left(\dfrac{\alpha}{\gamma\Delta}\log\dfrac1{\gamma\varepsilon}\right)$
  & $\cO(1/\gamma)$ \\
This work & Expected
  & $\cO\!\left(\dfrac{\alpha}{\gamma\Delta}
      +\dfrac{\alpha}{\Delta}\log\dfrac1\varepsilon\right)$
  & $\cO(1/\gamma)$ \\
This work & Worst case
  & $\cO\!\left(\dfrac{\alpha}{\gamma\Delta}\log\dfrac1\varepsilon\right)$
  & $\cO\!\left(\dfrac1\gamma\log\dfrac1\varepsilon\right)$ \\
\bottomrule
\end{tabular}
\end{threeparttable}
\end{table}

For an $n$-qubit system, when the block encoding $U_H$ uses $a_H$ ancilla
qubits, our expected-query algorithm uses
\[
\cO\!\left[
(a_H+1)\frac{\alpha}{\Delta}
\left(\frac1\gamma+\log\frac1\varepsilon\right)
+\frac{n+a_H+\log(2/\gamma)}{\gamma}
\right]
\]
one- and two-qubit gates in expectation, in addition to the oracle calls.
For the worst-case algorithm, the terms proportional to $1/\gamma$ gain
a factor of $\log(1/\varepsilon)$, while the high-accuracy filtering term
is unchanged.

\paragraph{Concurrent work.}
Jeffery, Somma, Witteveen, and de Wolf~\cite{JSWdW2026}
independently study ground-state preparation in the Hamiltonian-evolution
access model. They give two query-optimal algorithms in this setting,
one of which is also based on transducers.

\paragraph{Related work.}
Spectral filtering is a standard approach to ground-state preparation and
ground-state property estimation.  Such filters can be constructed using
inverse phase estimation~\cite{PW09} or linear combinations of Hamiltonian
evolutions~\cite{GTC19,KDW21}.  Polynomial filters can be implemented from a block encoding using QSVT~\cite{GSLW19,LTFilter20,LT20}, or from Hamiltonian evolution using QET-U and related quantum-signal-processing methods~\cite{DLT22,Kane24,Karacan25,DongLin24}.  Chakraborty et al.~\cite{QSVTNoBE} use Richardson extrapolation to mitigate
Trotter errors in ground-state property estimation, reducing circuit depth
and runtime.  Nearly frustration-free Hamiltonians admit improved gap dependence~\cite{TC23}.

A complementary line reduces coherent quantum resources using classical randomness.  Randomized and hybrid methods for ground-state preparation or ground-state property estimation include low-depth spectral estimators, randomized or hybrid LCU methods, stochastic QSP, and randomized QSVT~\cite{ZhangWangJohnson22,Chakraborty24,WMB24,MartynRall25,Wada25,Sun26,RandomizedQSVT}.

Ground states can also be prepared using adiabatic and eigenpath methods~\cite{Farhi01,BKS09,BKS10,AlbashLidar18}, imaginary-time evolution~\cite{McArdle19,Motta20,Gluza26,Schwartzman26}, and engineered dissipative dynamics~\cite{Verstraete09,Cubitt23,Ding24,LiZhanLin25,ZhanEtAl26,DingEtAl26}.  These approaches use different access models and assumptions from the spectral-filtering setting considered here.

When the threshold is not supplied, ground-state energy estimation can first
locate the ground energy to within a constant fraction of the
gap~\cite{LT20,LTGSEE22,DLT22}.  Randomized phase estimation provides
another approach~\cite{WBC22}.  Mande and de Wolf established upper and lower
bounds for guided eigenphase estimation~\cite{MdW26}.  Jeffery and
Witteveen~\cite{JW26} removed the $\log(1/\gamma)$ overhead using transducers.

For constant-error ground-state preparation with $\alpha=\Theta(1)$, Lin and
Tong~\cite[Theorem~10]{LT20} proved lower bounds of $\Omega(1/\gamma)$
calls to $U_H$ when $\Delta$ is constant and $\Omega(1/\Delta)$ calls when
$\gamma$ is constant.  Arihara and Murao~\cite[Corollary~7]{AM26} obtained a joint
$\Omega(1/(\gamma\Delta))$ lower bound at constant error for
$\alpha=\Theta(1)$, in a related model with query access to a block-encoding
unitary channel.  Somma and de Wolf~\cite[Section~1.2]{SdW26} further proved,
in the small-overlap regime and subject to a dimension condition, the joint
lower bound
\[
\Omega\!\left(\frac{\alpha}{\gamma\Delta}\log\frac1\varepsilon\right)
\]
for ground-state preparation to trace-distance error $\varepsilon$; their
lower bound also extends to block-encoding access.
Lin and Tong~\cite[Theorem~10]{LT20} also showed that, at constant spectral
gap and error, the $1/\gamma$ dependence of the number of calls to $U_I$
cannot be polynomially improved if the number of calls to $U_H$ is bounded
by a polynomial in $1/\gamma$.

\section{Problem setting and results}
\label{sec:model}

We now give the formal assumptions, oracle model, and output condition.  For a
register $\mathsf X$, $\ket0_{\mathsf X}$ denotes its all-zero
computational-basis state.

\subsection{Oracle and output model}

Let $\mathsf S$ be an $(N+1)$-dimensional system register, $N\ge1$, and let $H$
act on it with a unique ground state
$\ket{\psi_0}$ and spectral decomposition
\[
H=\sum_{k=0}^{N}E_k\proj{\psi_k},
\qquad
E_0<E_1\le\cdots\le E_N.
\]
We assume known parameters \(\mu\in\mathbb R\) and \(\Delta>0\) satisfying
\begin{equation}
E_0\le\mu-\frac\Delta2,
\qquad
E_1\ge\mu+\frac\Delta2.
\label{eq:separator-promise}
\end{equation}

Following~\cite[Definition~43 in the arXiv full version]{GSLW19}, a unitary
$U$ is an $\alpha$-block-encoding of an operator $A$ with $a$ ancilla qubits if
\[
(\bra{0^a}\otimes I)U(\ket{0^a}\otimes I)=\frac{A}{\alpha}.
\]
The Hamiltonian oracle $U_H$ is an $\alpha$-block-encoding of $H$ with
$a_H$ ancilla qubits.
The state-preparation oracle satisfies
\[
\ket{\phi}:=U_I\ket0,
\qquad
\beta:=|\langle\psi_0|\phi\rangle|\ge\gamma,
\qquad
0<\gamma\le1.
\]
We call any Hamiltonian and pair of oracles satisfying these conditions a
valid instance.  A call to $U_H$, $U_H^\dagger$, or a controlled version
counts as one Hamiltonian query; a call to $U_I$, $U_I^\dagger$, or a
controlled version counts as one state-preparation query.  We write $q_H,q_I$
for the numbers of these calls in a run, which may be random, and $Q_H,Q_I$
for worst-case bounds that hold on every run.

A circuit run outputs a system state and may also return a classical flag
$b\in\{0,1\}$, where $b=1$ denotes success.  We write $\rho$ for the system
state averaged over measurement outcomes and classical randomness, and
$\rho_{\rm succ}$ for the state conditioned on $b=1$.  Error is measured by
\[
\Dtr(\rho,\sigma):=\frac12\norm{\rho-\sigma}_1.
\]
An algorithm with random query counts is required to output a state with
probability one.
In asymptotic bounds, $\alpha,\gamma,\Delta$ may vary with $N$.

\subsection{Main results}

We first give an algorithm with the following expected query bounds
and prove a matching lower bound on $\mathbb E[q_H]$.

\begin{theorem}[Expected query complexity]
\label{thm:expected-query-complexity}
Let $0<\gamma\le1/\sqrt2$, $0<\varepsilon\le1/10$, and
$0<\Delta\le\alpha/8$.  There is an algorithm that outputs $\rho$ satisfying
$\Dtr(\rho,\proj{\psi_0})\le\varepsilon$.  Its expected query counts are
\[
\mathbb E[q_H]=\cO\!\left(
\frac{\alpha}{\gamma\Delta}
+\frac{\alpha}{\Delta}\log\frac1\varepsilon
\right),
\qquad
\mathbb E[q_I]=\cO\!\left(\frac1\gamma\right).
\]
Conversely, any algorithm meeting this guarantee on all valid instances
satisfies
\[
\mathbb E[q_H]=\Omega\!\left(
\frac{\alpha}{\gamma\Delta}
+\frac{\alpha}{\Delta}\log\frac1\varepsilon
\right)
\]
on some instance, even with unrestricted access to $U_I$.  Moreover, if
$\mathbb E[q_H]=o((\alpha/\Delta)\sqrt N)$ uniformly over valid
$(N+1)$-dimensional instances, then for all sufficiently large $N$ some
valid instance satisfies
\[
\mathbb E[q_I]=\Omega(1/\gamma).
\]
\end{theorem}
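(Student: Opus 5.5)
The proof has three parts: the upper bound via a two-stage algorithm, the lower bound on $\mathbb E[q_H]$ from two separate hard families, and the $U_I$ lower bound from an unstructured-search reduction.

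\textbf{Upper bound.} The algorithm has a stage that produces constant overlap and a stage that filters to accuracy $\varepsilon$, and the whole procedure is repeated until it succeeds.

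First, a QSVT polynomial in $(H-\mu I)/\alpha$ of degree $\cO(\alpha/\Delta)$ gives a constant-accuracy spectral filter. Its block-encoding separates $\ket{\psi_0}$ from the excited subspace up to a constant error $\delta$; here we use $\Delta\le\alpha/8$.

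Second, this filter is turned into an approximate reflection about $\ket{\psi_0}$ using the Belovs--Jeffery purifier~\cite{BJ26}. A transducer with $\cO(\alpha/\Delta)$ query complexity implements the reflection with \emph{constant} error. The point is that the error is constant, not $\cO(\gamma)$.

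Third, we run amplitude amplification with $T=\Theta(1/\gamma)$ rounds. Each round alternates the reflection about $\ket\phi$ (which uses $U_I,U_I^\dagger$) with the purified ground-state reflection. The rounds are composed sequentially using the transducer composition theorem of~\cite{BJY24}. Because the transducer costs add under composition, the total Hamiltonian cost is $\cO((\alpha/\Delta)/\gamma)$ rather than $T$ times a $\log(1/\gamma)$-accurate circuit. The output is a state $\rho_c$ with $\bra{\psi_0}\rho_c\ket{\psi_0}\ge c$ for an absolute constant $c$. To get this, $\delta$ is chosen so small that the accumulated purification error, measured in the transducer's total-query/error accounting, stays below a constant.

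Fourth, we apply one QSVT filter of degree $\cO((\alpha/\Delta)\log(1/\varepsilon))$ and measure its flag. On success the post-selected state is within $\varepsilon$ of $\ket{\psi_0}$, and success occurs with probability at least $c/2$. Repeating until success gives a geometric number of trials, which yields both expected bounds. We also have to check that the averaged output is still $\varepsilon$-close: conditioning on success in each trial is independent of the trial index.

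\textbf{Lower bound on $\mathbb E[q_H]$.} This splits into two terms. For the $\frac{\alpha}{\gamma\Delta}$ term, we reduce from unstructured search to obtain the joint $\Omega(1/(\gamma\Delta))$ bound, following Arihara--Murao and Somma--de Wolf. The hard family has $H=\alpha(\cdot)$ with a hidden marked direction, where the gap $\Delta$ between a ground state $\ket{\psi_0}$ with overlap $\gamma$ and its complement can only be resolved after $\Omega(\alpha/\Delta)$ "time". A hybrid argument then shows that distinguishing a constant fraction of instances needs $\Omega(\alpha/(\gamma\Delta))$ queries. To pass from worst-case to expected counts, we use Markov: truncate at $10\,\mathbb E[q_H]$. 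This gives an algorithm with constant error and bounded queries, to which the worst-case lower bound applies.

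For the $\frac{\alpha}{\Delta}\log(1/\varepsilon)$ term, we take $\gamma$ constant and a two-level family: a qubit Hamiltonian whose eigenbasis rotates by a small angle $\theta$ according to a hidden bit. Distinguishing the two resulting ground states to error $\varepsilon$ requires polynomial approximation of a step of width $\Delta/\alpha$ to accuracy $\varepsilon$. The polynomial method then gives degree $\Omega((\alpha/\Delta)\log(1/\varepsilon))$. The obstacle is again the expectation: Markov truncation only yields a constant error. So instead we argue via the success branch. Alternatively, we run the algorithm for $k\,\mathbb E[q_H]$ steps, and if it has not halted the output is arbitrary. This costs an additional error of at most $1/k$, and taking $k=1/\varepsilon$ loses only a $\log$ factor... which is insufficient. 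I therefore expect \emph{this step to be the main obstacle}. The plan is to use the polynomial method on the \emph{mixture over stopping times}: the output is $\sum_t p_t\rho_t$, and each $\rho_t$ is polynomial of degree $t$ in the instance parameter. Applying a Bernstein-type inequality to the weighted sum, with weights decaying geometrically beyond the mean, should then give the bound.

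\textbf{Lower bound on $\mathbb E[q_I]$.} We embed search over $N$ items. $U_I$ prepares a state with overlap $\gamma$ on a hidden $\ket{\psi_0}$ among the basis states, and $H$ is diagonal with a marked energy. Without $U_I$, search needs $\Omega(\sqrt N)$ Hamiltonian queries, each costing $\alpha/\Delta$ in normalized time. Hence if $\mathbb E[q_H]=o((\alpha/\Delta)\sqrt N)$, the information about $\ket{\psi_0}$ must come from $U_I$, and amplitude-amplification optimality (BBBV hybrid, combined with Markov truncation) forces $\Omega(1/\gamma)$ calls.
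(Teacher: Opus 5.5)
Your sketches of the overlap term and of the $\mathbb E[q_I]$ bound match the paper's hybrid arguments. The $\frac{\alpha}{\Delta}\log\frac1\varepsilon$ term of the $\mathbb E[q_H]$ lower bound, however, is left open, and the obstacle you name rests on a false premise. Markov truncation costs only a constant factor in the \emph{infidelity}, not a constant error, provided the truncated run reports failure instead of emitting an arbitrary state. Your passing remark about arguing via the success branch is exactly right, but you abandon it.

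Stop the algorithm before its $(\lfloor2\,\mathbb E[q_H]\rfloor+1)$st Hamiltonian query and flag failure if it has not halted; it then succeeds with probability $p\ge1/2$. Writing $\rho=p\rho_{\rm succ}+(1-p)\rho'$ gives $p\bigl(1-\bra{\psi_0}\rho_{\rm succ}\ket{\psi_0}\bigr)\le1-\bra{\psi_0}\rho\ket{\psi_0}\le\varepsilon$. Fuchs--van de Graaf then yields $\Dtr(\rho_{\rm succ},\proj{\psi_0})\le\sqrt{2\varepsilon}$, and $\log(1/\sqrt{2\varepsilon})=\Theta(\log(1/\varepsilon))$.

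What remains is a precision lower bound proved in the success-flag model. The paper runs the algorithm independently on exact $1$-block-encodings, defined for every $\theta$, of $H_{\theta,\pm}=\frac{\sin\theta\pm2h}{1+2h}Z$ with $h/(1+2h)=\delta$. It sets $p(\theta)=\Pr[b_+=b_-=1,\ z_+=1,\ z_-=0]$. Take an algorithm with at most $T$ queries, success probability at least $s_0$, and conditional error $\eta$. Then $p(\theta)$ is a real trigonometric polynomial of degree at most $4T$. It is at most $\eta$ whenever $|\sin\theta|\ge3h$, since the two ground states then coincide and one flagged output must be wrong. Yet it is at least $s_0^2(1-\eta)^2$ at $\theta=0$. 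The Remez-type growth inequality gives $448\,T\delta\ge\log\bigl(s_0^2/(4\eta)\bigr)$, applied with $T=2\,\mathbb E[q_H]$, $s_0=1/2$, and $\eta=\sqrt{2\varepsilon}$.

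Your fallback plan has no footing. An expectation bound yields only the harmonic tail $\Pr[q_H>k\,\mathbb E[q_H]]\le1/k$, so there are no geometrically decaying weights over stopping times. Unflagged truncation at $k=1/\varepsilon$ weakens the bound by a factor $\varepsilon$, not by a logarithm.

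Your upper-bound outline also misstates its central mechanism. The purified ground-state reflection carries no error. With a spectral test $V$ built from a fixed-accuracy filter, the answer-$1$ probabilities satisfy $p_0\le19/100<1/2<99/100\le p_k$ for $k\ge1$. The purifier therefore transduces $\varphi_0=V(\ket0\ket{\psi_0})$ exactly into itself and each $\varphi_k$ with $k\ge1$ exactly into $-\varphi_k$, so $S_g$ has transduction action exactly $R_g$. The filter's constant accuracy enters only through the margin $|p_k-1/2|\ge31/100$, which bounds the catalyst norm by $\cO(1)$. For this to hold on superpositions you also need a single oracle that reflects about every $\varphi_k$ on mutually orthogonal invariant subspaces $\mathcal K(\varphi_k)$, so that the catalysts for different $k$ are orthogonal.

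Exactness is what lets $j$ copies of $S_\phi*S_g$ compose to the exact action $G^j$ with complexity $\cO(j)$. The only error arises once, from the circuit-implementation theorem at constant $\varepsilon_{\rm c}$, using $\cO(j)$ executions. If each reflection really carried a constant error that accumulated, $\Theta(1/\gamma)$ rounds would force per-reflection accuracy $\cO(\gamma)$ and reinstate the $\log(1/\gamma)$ you want to remove.

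Finally, a fixed round count $T=\Theta(1/\gamma)$ can overshoot, because only $\beta\ge\gamma$ is known. Instead, sample $j$ uniformly from $\{0,\ldots,\lceil2/\gamma\rceil-1\}$ and output $\ket\phi$ itself with probability $1/2$. This gives average ground-state overlap at least $3/16$ before the circuit error.
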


The upper bounds are proved in
Section~\ref{subsec:constant-success-preparation}.
The lower bounds on $\mathbb E[q_H]$ and $\mathbb E[q_I]$ are proved in
Sections~\ref{subsec:hamiltonian-query-lower}
and~\ref{subsec:initial-query-lower}, respectively.

For query counts bounded on every run, we give an algorithm with the
same output guarantee that matches the lower bound on $Q_H$ of
Somma and de Wolf~\cite{SdW26}.

\begin{theorem}[Worst-case query complexity]
\label{thm:unconditional-output}
Let $0<\gamma\le1/\sqrt2$, $0<\varepsilon\le1/10$, and
$0<\Delta\le\alpha/8$.  There is an algorithm that, on every valid instance,
outputs $\rho$ satisfying $\Dtr(\rho,\proj{\psi_0})\le\varepsilon$.
Its worst-case query counts satisfy
\[
Q_H=\cO\!\left(
\frac{\alpha}{\gamma\Delta}\log\frac1\varepsilon
\right),
\qquad
Q_I=\cO\!\left(\frac1\gamma\log\frac1\varepsilon\right).
\]
Conversely, any algorithm meeting this guarantee on all valid instances
satisfies
\[
Q_H=\Omega\!\left(
\frac{\alpha}{\gamma\Delta}\log\frac1\varepsilon
\right)
\]
on some instance.  This bound follows from the lower bound of Somma and de
Wolf~\cite[Section~1.2]{SdW26} and Lemma~\ref{lem:precision-lower}.
Moreover, if $Q_H=o((\alpha/\Delta)\sqrt N)$ on valid
$(N+1)$-dimensional instances, then for all sufficiently large $N$ some
valid instance satisfies
\[
Q_I=\Omega\!\left(\frac1\gamma\log\frac1\varepsilon\right).
\]
\end{theorem}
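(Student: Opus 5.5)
The plan has four parts: build a constant-overlap preparation from transducer-composed amplitude amplification, wrap it in a ladder of filters of increasing accuracy with a hard query cutoff, and then obtain the two lower bounds.

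\textbf{Constant-overlap preparation.} I would take the two-stage construction from the expected-query algorithm as the basic building block. The Belovs--Jeffery purifier is built from a constant-accuracy QSVT spectral filter, which costs $\cO(\alpha/\Delta)$ calls to $U_H$. These purifier-based ground-state reflections are composed through the transducer results of Belovs, Jeffery, and Yolcu inside $\cO(1/\gamma)$ rounds of amplitude amplification. The output is a state $\rho_c$ with $\bra{\psi_0}\rho_c\ket{\psi_0}=\Omega(1)$. This is produced with worst-case cost $\cO(\alpha/(\gamma\Delta))$ calls to $U_H$ and $\cO(1/\gamma)$ calls to $U_I$.

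\textbf{Filter ladder and cutoff.} After preparing $\rho_c$, I would apply QSVT filters $F_j$ for $j=1,\dots,J$ in sequence, with $J=\lceil\log_2\log_2(1/\varepsilon)\rceil+\cO(1)$.
\begin{itemize}
\item Filter $F_j$ has precision $\delta_j=2^{-2^j}$ and costs $\cO((\alpha/\Delta)2^j)$ queries. The total over the whole ladder is $\cO((\alpha/\Delta)\log(1/\varepsilon))$.
\item After the first filter accepts, the post-selected state is $\delta_1$-close to $\ket{\psi_0}$. Every subsequent filter then accepts with probability $1-\cO(\delta_{j-1}^2)$.
\item If any filter rejects, we restart from $\rho_c$.
\end{itemize}
A single attempt costs $\cO(\alpha/(\gamma\Delta))$ for preparation, $\cO(\alpha/\Delta)$ for $F_1$, and, if $F_1$ accepts, the remaining ladder. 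Since $F_1$ accepts with constant probability, I would bound the total by a query cutoff of $Q_H=C(\alpha/(\gamma\Delta))\log(1/\varepsilon)$. This budget allows $\Theta(\log(1/\varepsilon))$ attempts through $F_1$. Rejections after $F_1$ occur with summably small probability, so they rarely waste the expensive filters. If the budget is exhausted, the algorithm outputs an arbitrary state.

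\textbf{Error accounting.} The final error is the filter error $\delta_J\le\varepsilon/2$ plus the probability that the budget is exhausted. To bound the latter, I would argue that each attempt succeeds with probability at least a constant $p$, and that each attempt consumes at most $\cO(\alpha/(\gamma\Delta))$ queries plus the ladder cost. Since the ladder is entered only after $F_1$ accepts, a Chernoff or geometric tail bound gives failure probability $(1-p)^{\Omega(\log(1/\varepsilon))}\le\varepsilon/2$. The $U_I$ count is the number of attempts times $\cO(1/\gamma)$, which is $\cO(\gamma^{-1}\log(1/\varepsilon))$.

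\textbf{Main obstacle.} The hardest step is making the worst-case cost bound deterministic. A run could in principle enter the ladder many times and be rejected late each time. I would handle this by charging the cost of entering filter $F_j$ against the $\cO(\alpha/(\gamma\Delta))$ preparation cost already spent in that attempt. This works when $2^j\lesssim 1/\gamma$. When $2^j$ is larger, the ladder cost is at most $\cO((\alpha/\Delta)\log(1/\varepsilon))$ per attempt, and the cutoff argument absorbs it at constant tail probability, because $\Pr[\text{reach }F_j\text{ then reject}]$ decays doubly exponentially in $j$.

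\textbf{Lower bounds.}
\begin{itemize}
\item The bound $Q_H=\Omega\bigl((\alpha/(\gamma\Delta))\log(1/\varepsilon)\bigr)$ is the Somma--de Wolf bound. Lemma~\ref{lem:precision-lower} transfers it to this oracle model and covers the regimes outside its small-overlap and dimension conditions.
\item For the $U_I$ lower bound, I would embed an unstructured search problem with $\Theta(1/\gamma^2)$ marked-state ambiguity. When $Q_H=o((\alpha/\Delta)\sqrt N)$, the Hamiltonian queries alone cannot locate the ground state. Each attempt with error $1/3$ then needs $\Omega(1/\gamma)$ reflections about $\ket\phi$.
\item Amplifying to error $\varepsilon$ requires $\Omega(\log(1/\varepsilon))$ independent-success repetitions. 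I would prove this with a hybrid argument on a product-of-instances hard family, following Section~\ref{subsec:initial-query-lower}.
\end{itemize}
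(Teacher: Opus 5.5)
Your upper-bound architecture matches the paper: a transducer-composed constant-overlap preparation, then filters of doubly exponentially increasing accuracy with restarts and a cutoff. Your $Q_H$ lower-bound route also matches: Somma--de Wolf for small $\gamma$, Lemma~\ref{lem:precision-lower} for constant $\gamma$. One correction there: the lemma does not ``transfer'' their bound to this oracle model. Their own block-encoding reduction does that, after negating and rescaling with $\theta=\arcsin(\Delta/\alpha)$ and $\mu=-\Delta/2$. Two steps, however, do not go through as you propose them.

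\textbf{Wasted filtering in rejected attempts.} You need the filter cost spent in rejected attempts, summed over $\Theta(\log(1/\varepsilon))$ attempts, to fit the budget except with probability $\cO(\varepsilon)$. An argument that works ``at constant tail probability'' only gives constant failure probability. Your crude charge of $\cO((\alpha/\Delta)\log(1/\varepsilon))$ per late rejection cannot certify even that. For $\gamma=1/\sqrt2$, every level $j\ge1$ counts as large, and rejection at $F_2$ can have constant probability. Your accounting then charges $\Theta((\alpha/\Delta)\log^2(1/\varepsilon))$ in expectation, which exceeds $C(\alpha/(\gamma\Delta))\log(1/\varepsilon)$. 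With the refined charge $\Theta(2^j\alpha/\Delta)$ at level $j$, the expected waste is $\cO(\alpha/\Delta)$ per attempt, but Markov still gives only a constant tail.

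The missing idea is an exponential-moment bound. Take $X$ to be one attempt's waste in units of $C/\delta$, so $2^X=2^{2^{j+1}-1}$ on rejection at level $j$. The estimate $q_j\le2p\eta_j+(1-p)\prod_{\ell<j}\eta_\ell^2\le3\cdot16^{-2^j}$ beats this growth, so $\mathbb E[2^X]<4$. Independence and Markov applied to $2^{\sum_tX_t}$ then give $\sum_tX_t\le2R+\log_2(2/\zeta)$ except with probability $\zeta/2$. With your accuracies $2^{-2^j}$, the same argument works with any base $\lambda\in(1,\sqrt2)$ in place of $2$. The waste is then $\cO((\alpha/\Delta)\log(1/\zeta))$, and no charging against the preparation is needed.

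Two smaller points on the upper bound. First, your count ``attempts times $\cO(1/\gamma)$'' for $U_I$ needs the number of attempts capped at $R=\cO(\log(1/\varepsilon))$, as in Algorithm~\ref{alg:increasing-accuracy-filtering}. A $U_H$ cutoff alone admits up to $\Theta(\gamma^{-1}\log(1/\varepsilon))$ cheap attempts. Second, after $F_1$ the state is not guaranteed close to $\ket{\psi_0}$. With ground weight $1/8$ and accuracy $1/4$, up to $7/16$ of the accepted weight can be excited. Bound the $q_j$ directly instead.

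\textbf{The $\log(1/\varepsilon)$ factor in the $Q_I$ lower bound.} A BBBV hybrid argument bounds Euclidean distances between final states, and these are $\Theta(1)$ whether the error is $1/3$ or $\varepsilon$. On the search family it therefore yields only $\gamma Q_I+\delta Q_H/\sqrt N=\Omega(1)$, which is Theorem~\ref{thm:initial-query-tradeoff} with no $\log(1/\varepsilon)$.

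The intuition that error $\varepsilon$ needs $\Omega(\log(1/\varepsilon))$ repetitions of a constant-error procedure is not a lower-bound argument. An algorithm need not be a repetition. For unstructured search, the small-error cost is known to be $\Theta(\sqrt{N\log(1/\varepsilon)})$ (Buhrman, Cleve, de Wolf, and Zalka), not $\sqrt N\log(1/\varepsilon)$. A product-of-instances family concerns a multi-instance task and does not transfer to a single instance.

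The paper uses the polynomial method (Lemma~\ref{lem:unconditional-initial-query-lower}):
\begin{enumerate}
\item Keep $U_{H,t}$ fixed and vary the guiding state as $\cos\theta\ket{\bot}+\sin\theta\ket{t}$. The probability $p_t(\theta)$ of measuring $t$ is then a trigonometric polynomial of degree at most $2Q_I$.
\item Use the hybrid argument only at $\theta=0$, where $U_I=I$. With $Q_H=o(\sqrt N/\delta)$, this gives some $t$ with $p_t(0)\le1/2$.
\item The output guarantee gives $p_t(\theta)\ge1-\varepsilon$ whenever $|\sin\theta|\ge\gamma$.
\item Apply Lemma~\ref{lem:trig-growth} to $(1-p_t)/\varepsilon$, whose exceptional set has measure $4\arcsin\gamma$. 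This gives $1/(2\varepsilon)\le\exp(32Q_I\arcsin\gamma)$.
\end{enumerate}
The multiplicative $\log(1/\varepsilon)$ comes from the continuum of admissible overlaps, which forces $p_t$ to be near $1$ outside a set of measure $\Theta(\gamma)$.
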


The upper bounds are proved in Section~\ref{subsec:bounded-query-preparation}.
The lower bounds on $Q_H$ and $Q_I$ are proved in
Sections~\ref{subsec:hamiltonian-query-lower}
and~\ref{subsec:initial-query-lower}, respectively.

More generally, we allow the algorithm to report failure with probability
at most $\zeta$.  Conditioned on success, its output state is within trace
distance $\varepsilon$ of $\proj{\psi_0}$.

\begin{theorem}[Upper bound with a failure flag]
\label{thm:upper-bound}
Let $0<\gamma\le1/\sqrt2$, $0<\varepsilon\le\zeta\le1/10$, and
$0<\Delta\le\alpha/8$.  There is an algorithm that, on every valid instance,
succeeds with probability at least $1-\zeta$ and,
conditioned on success, satisfies
$\Dtr(\rho_{\rm succ},\proj{\psi_0})\le\varepsilon$.  Its worst-case query
counts satisfy
\[
Q_H=\cO\!\left(
\frac{\alpha}{\gamma\Delta}\log\frac1\zeta
+\frac{\alpha}{\Delta}\log\frac1\varepsilon
\right),
\qquad
Q_I=\cO\!\left(\frac1\gamma\log\frac1\zeta\right).
\]
\end{theorem}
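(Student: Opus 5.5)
The plan is to build a single ``attempt'' that succeeds with constant probability and, on success, outputs a state $\varepsilon$-close to $\proj{\psi_0}$. We then run $\cO(\log(1/\zeta))$ independent attempts, so that all of them fail with probability at most $\zeta$. The obstacle is that naive repetition would also multiply the $(\alpha/\Delta)\log(1/\varepsilon)$ filtering cost by $\log(1/\zeta)$, which the claimed bound forbids. So the attempts must be structured so that the expensive high-accuracy filter is paid for essentially once.

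\textbf{Stage 1 (constant-overlap preparation).} Use a constant-accuracy QSVT spectral filter (degree $\cO(\alpha/\Delta)$, using the separator promise~\eqref{eq:separator-promise} and $\Delta\le\alpha/8$) to build a purifier in the sense of Belovs--Jeffery~\cite{BJ26}. This purifier acts as an approximate reflection about $\ket{\psi_0}$ in transducer form. Compose $\cO(1/\gamma)$ copies sequentially with reflections about $\ket\phi$ inside amplitude amplification, via the composition results of~\cite{BJY24}. The composed transducer's total query cost should be $\cO(\alpha/(\gamma\Delta))$ to $U_H$ and $\cO(1/\gamma)$ to $U_I$. The output is a state $\rho_c$ with $\bra{\psi_0}\rho_c\ket{\psi_0}\ge c$ for an absolute constant $c$; $\gamma\le1/\sqrt2$ guarantees the amplification angle is well defined. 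This is the key step. I would first verify that the transducer's error can be kept a small constant, independent of $\gamma$, since the composition bound controls total error additively in the transduction action. That is exactly what removes the $\log(1/\gamma)$.

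\textbf{Stage 2 (filter ladder).} Apply filters $F_1,F_2,\dots,F_m$ of increasing accuracy $\delta_j=2^{-2^j}$, or geometrically decreasing $\delta_j$, up to $\delta_m\approx\varepsilon$. Filter $F_j$ has degree $\cO((\alpha/\Delta)\log(1/\delta_j))$, and each applies the block $P_j(H)$ with $|P_j|\approx1$ near $E_0$ and $|P_j|\le\delta_j$ above $\mu+\Delta/2$. Measure success after each; on rejection, abort the attempt and restart from Stage 1.

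After a successful $F_1$ on $\rho_c$, the overlap is $1-\cO(\delta_1^2)$. So $F_2,\dots$ reject with probability $\cO(\delta_{j-1}^2)$, and the cost of reaching $F_j$ is weighted by small failure probabilities. Each attempt has constant success probability, independently; failures arise mainly from Stage 1 or $F_1$. Run attempts until $K=\cO(\log(1/\zeta))$ attempts have been started or one succeeds through $F_m$.

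The worst-case accounting needs care. Impose a global query cutoff: allow $\cO(\log(1/\zeta))$ Stage-1 runs and constant-cost cheap filters, plus a budget for expensive filters of $\cO((\alpha/\Delta)(\log(1/\zeta)+\log(1/\varepsilon)))$. The sum over $j$ of degrees forms a geometric series dominated by $\log(1/\varepsilon)$, provided that in the worst case only $\cO(1)$ attempts reach deep filters. To make this hold deterministically, I would choose the ladder so that $F_j$ is entered at most $\lceil\log(1/\zeta)/2^j\rceil$-ish times. The cutoff is reached only with probability at most $\zeta/2$, by a Chernoff-type bound using $\Pr[\text{reach }F_j\text{ and fail}]\le\delta_{j-1}^2$. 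Declare failure when the cutoff hits. This accounting is the main obstacle; I would tune $\delta_j$ so that the number of attempts allowed to reach level $j$ times its cost sums to $\cO((\alpha/\Delta)(\log(1/\zeta)+\log(1/\varepsilon)))$.

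\textbf{Correctness.} Conditioned on success, the output passed $F_m$. The post-selected state has excited-state weight at most $\delta_m^2/(c\cdot\text{const})$, and we choose $\delta_m=\Theta(\varepsilon)$. Stage-1 imperfections only reduce the ground overlap constant, so the bound holds independently of history. Aborting on cutoff affects only the success probability, which is at least $1-\zeta$. Hence $\Dtr(\rho_{\rm succ},\proj{\psi_0})\le\varepsilon$. The query totals are then $Q_I=\cO(\gamma^{-1}\log(1/\zeta))$ and $Q_H=\cO\bigl((\alpha/(\gamma\Delta))\log(1/\zeta)+(\alpha/\Delta)\log(1/\varepsilon)\bigr)$.
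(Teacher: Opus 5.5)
Your overall plan follows the paper: transducer-based amplification to a state $\rho_c$ with constant ground overlap, then a doubly-exponential ladder of filters that restarts from Stage~1 at the first rejection, with $\cO(\log(1/\zeta))$ attempts and a cutoff. The gap is the step you flag as the main obstacle, the worst-case accounting of filter queries, and the scheme you propose for it does not give the claimed bound. With per-level entry caps $c_j\approx\lceil\log(1/\zeta)/2^j\rceil$ and $F_j$ costing $\Theta((\alpha/\Delta)2^j)$ queries, each level reserves $2^jc_j\ge\log(1/\zeta)$ units. Over the $\Theta(\log\log(1/\varepsilon))$ levels, the deterministic guarantee is therefore $\Omega\bigl((\alpha/\Delta)\log(1/\zeta)\log\log(1/\varepsilon)\bigr)$. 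For $\gamma=\Theta(1)$ and $\zeta=\varepsilon$ this exceeds the target $\cO((\alpha/\Delta)\log(1/\varepsilon))$. The loss comes from budgeting levels separately: each cap must cover its own level's tail, so each of the first $\approx\log_2\log(1/\zeta)$ levels reserves $\Theta(\log(1/\zeta))$ units, even though these tails are essentially never reached simultaneously. Your alternative of geometrically decreasing $\delta_j=\rho^j$ fails more directly, since a single accepted attempt then costs $\Theta\bigl((\alpha/\Delta)\sum_{j\le m}j\bigr)=\Theta\bigl((\alpha/\Delta)\log^2(1/\varepsilon)\bigr)$.

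The missing idea is one global filtering budget together with an exponential-moment tail bound on the \emph{summed cost} of rejected attempts. Measure cost in units in which $F_j$ costs $2^j$. Let $X$ be an attempt's filter cost if it is rejected and $X=0$ otherwise, so a first rejection at level $j$ gives $X=2^{j+1}-1$. Its probability satisfies $q_j\le2\eta_j+\prod_{\ell<j}\eta_\ell^2$. The first term is the ground component's own rejection, $1-|f_j(E_0)|^2\le2\eta_j$, which your estimate $\delta_{j-1}^2$ omits. With $\eta_j=16^{-2^j}$ this gives $q_j\le3\eta_j$ and $\mathbb E[2^X]\le3+\tfrac32\sum_{j\ge1}4^{-2^j}<4$. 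Independence and Markov's inequality applied to $2^{\sum_tX_t}$ then give $\Pr[\sum_{t\le R}X_t>2R+\log_2(2/\zeta)]<\zeta/2$. So a budget of $2^{K+1}-1+2R+\lceil\log_2(2/\zeta)\rceil$ units, i.e.\ $\cO((\alpha/\Delta)(\log(1/\varepsilon)+\log(1/\zeta)))$ queries, halts the algorithm before the first accepted attempt finishes with probability below $\zeta/2$. The moment base must match the schedule's decay: with your $\delta_j=2^{-2^j}$, the available bound on $q_j2^{2^{j+1}-1}$ grows like $2^{2^j}$, so you would need a base below $\sqrt2$.

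Two smaller points concern Stage~1. First, the iteration count must be randomized, since only $\beta\ge\gamma$ is known and a fixed count can overshoot; the paper mixes $\ket\phi$ with $G^j\ket\phi$ for uniform $j<\lceil2/\gamma\rceil$. Second, the $\log(1/\gamma)$ disappears not because errors add in the transduction action; constant per-reflection errors over $\Theta(1/\gamma)$ reflections would not stay constant. It disappears because $S_g$'s transduction action is \emph{exactly} $R_g$, since the constant-accuracy test gives $p_0\le19/100<1/2<99/100\le p_k$. The composed action is then exactly $G^j$, only the transduction complexities add (to $\cO(j)$), and the sole approximation is one application of the circuit-implementation theorem at constant error.
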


The upper bound is proved in Section~\ref{subsec:bounded-query-preparation}.
Taking the error target and $\zeta$ both to be $\varepsilon/2$ and returning
an arbitrary state on failure gives the worst-case upper bounds in
Theorem~\ref{thm:unconditional-output}.

A lower bound on $Q_I$ must account for algorithms that do not call $U_I$.
For an $(N+1)$-dimensional system maximally entangled with a reference, the
reduced system state is $I/(N+1)$, so
$\bra{\psi_0}(I/(N+1))\ket{\psi_0}=1/(N+1)$.  Spectral filtering can
therefore prepare $\ket{\psi_0}$ without $U_I$, at a cost in calls to $U_H$
that grows with $N$.  The following theorem quantifies the tradeoff.

\begin{theorem}[Query tradeoff]
\label{thm:initial-query-tradeoff}
Fix \(0<s_0\le1\), \(0<\varepsilon<1\),
\(0<\gamma\le1/\sqrt2\), $\alpha>0$, and $0<\Delta<2\alpha$.
For every integer $N\ge4/(s_0(1-\varepsilon))$, there is a family of
\((N+1)\)-dimensional valid instances such that every
algorithm that succeeds with probability at least $s_0$ and satisfies
$\Dtr(\rho_{\rm succ},\proj{\psi_0})\le\varepsilon$ on every member of the
family obeys
\begin{equation}
\gamma Q_I+\frac{\Delta Q_H}{2\alpha\sqrt N}
\ge\frac{s_0(1-\varepsilon)}{4\sqrt2}.
\label{eq:initial-H-tradeoff}
\end{equation}
\end{theorem}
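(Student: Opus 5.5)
We prove Theorem~\ref{thm:initial-query-tradeoff} with a hybrid (adversary) argument over a family of instances in which the ground state is hidden among $N+1$ candidates. We choose the family so that both oracles leak information about the hidden index only at rates $\cO(\gamma)$ per $U_I$ call and $\cO(\Delta/(\alpha\sqrt N))$ per $U_H$ call.

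\emph{Construction.} Fix an orthonormal basis $\{\ket j\}_{j=0}^{N}$ of $\mathsf S$ and a reference state $\ket u$ spread over many basis vectors. For index $j$, set
\[
H_j=\mu I-\frac\Delta2\bigl(2\proj{j}-I\bigr),
\]
so the unique ground state is $\ket j$, at energy $\mu-\Delta/2$, and the excited states sit at $\mu+\Delta/2$. This satisfies \eqref{eq:separator-promise}. After shifting by $\mu$ (absorbed into a known offset), we block-encode $H_j$ with normalization $\alpha$ as a unitary that differs from a $j$-independent unitary $U_\star$ only through the term $(\Delta/\alpha)\proj j$. Concretely, we encode a rotation by angle $\theta\approx\Delta/(2\alpha)$ conditioned on the system being $\ket j$. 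The condition $\Delta<2\alpha$ is what makes this angle well defined. Then for every state $\ket\chi$,
\[
\norm{(U_{H_j}-U_\star)\ket\chi}\le\frac{2\Delta}{\alpha}\,\norm{\Pi_j\ket\chi},
\]
where $\Pi_j$ projects the system onto $\ket j$. For $U_I$ we use $\ket{\phi_j}=\gamma\ket j+\sqrt{1-\gamma^2}\ket{\perp}$ with a fixed $\ket\perp$ (for instance, $\ket{\perp}$ orthogonal to all candidates via one spare dimension, or the uniform state with a small correction). Composing with a fixed reflection pair gives $\norm{U_{I,j}-U_{I,\star}}\le\sqrt2\,\gamma$, where $U_{I,\star}$ prepares $\ket\perp$. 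Controlled versions and inverses satisfy the same bounds.

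\emph{Hybrid bound.} Let $\ket{\Psi^t_\star}$ denote the algorithm's state after $t$ queries when run with the $j$-independent oracles, and let $\ket{\Psi^T_j}$ denote the final state when run on instance $j$. A standard telescoping argument gives
\[
\norm{\ket{\Psi^T_j}-\ket{\Psi^T_\star}}\le\sqrt2\,\gamma Q_I+\frac{2\Delta}{\alpha}\sum_{t\in\text{$H$-queries}}\norm{\Pi_j\ket{\Psi^t_\star}}.
\]
We average over $j$ uniformly. Cauchy--Schwarz together with $\sum_j\norm{\Pi_j\ket{\Psi}}^2\le1$ gives
\[
\frac1{N+1}\sum_j\norm{\Pi_j\ket{\Psi^t_\star}}\le(N+1)^{-1/2}.
\]
So the average deviation is at most $\sqrt2\gamma Q_I+2\Delta Q_H/(\alpha\sqrt{N})$. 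Classical randomness and intermediate measurements are handled by purification, at worst for the mixed case via an averaged version of the same inequality.

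\emph{Success forces deviation.} On instance $j$, the algorithm succeeds with probability at least $s_0$, and conditioned on success its output satisfies $\bra j\rho_{\rm succ}\ket j\ge1-\varepsilon$. Hence the output has joint weight at least $s_0(1-\varepsilon)$ on ``flag $=1$ and system $=\ket j$''. The reference run has total such weight at most $1$ summed over $j$, so on average it has weight at most $1/(N+1)\le s_0(1-\varepsilon)/4$, using $N\ge4/(s_0(1-\varepsilon))$. The weights therefore differ by at least $\tfrac34 s_0(1-\varepsilon)$. A difference in measurement probability is bounded by twice the vector distance (or by $\norm{\cdot}$ times $2$), which gives an average deviation $\gtrsim s_0(1-\varepsilon)/c$. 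Rescaling the constants in the hybrid bound then yields \eqref{eq:initial-H-tradeoff} with constant $1/(4\sqrt2)$.

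The main obstacle will be the constant bookkeeping: building the block encoding so that its per-query perturbation is exactly proportional to $\Delta/\alpha$ on the $\ket j$ subspace, and pinning the $U_I$ perturbation at $\sqrt2\gamma$ while keeping $\beta\ge\gamma$. I would first try the explicit rotation encoding $U_{H_j}=U_\star\cdot\exp(-i\theta Y\otimes\proj j)$ and adjust $\theta$ and the normalizations until the constant $\tfrac{1}{4\sqrt2}$ comes out.
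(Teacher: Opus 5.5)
Your argument is essentially the paper's. The paper uses a hidden-label family with $\widetilde H_t=\delta(I-2P_t)$, $\delta=\Delta/(2\alpha)$, whose block encoding differs from a fixed reference only by $-2\delta Z\otimes P_t$. Its guiding oracle is a rotation by $\arcsin\gamma$ in $\operatorname{span}\{\ket\bot,\ket t\}$, using $N$ labels plus the spare $\ket\bot$. The proof then runs a BBBV hybrid with Cauchy--Schwarz to get the $1/\sqrt N$ factor on $U_H$ queries, followed by a counting step on the reference run. The paper takes an $\ell_2$ sum over the $N/2$ labels with $p_t^0\le 2/N$; your $\ell_1$ mean over all labels works equally well.

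The constants need tightening. With the ones you state ($2\Delta/\alpha$ per $U_H$ query, and a factor $2$ between probability and vector distance) you only obtain $3s_0(1-\varepsilon)/32$, which is below the required $s_0(1-\varepsilon)/(4\sqrt2)$. Either of two corrections fixes this. First, the exact encoding gives perturbation $(\Delta/\alpha)\norm{\Pi_j\ket\chi}$; your rotation by $2\arcsin\delta$ achieves this. Second, contractivity of trace distance gives $|p_j-p_j^\star|\le\norm{\ket{\Psi^T_j}-\ket{\Psi^T_\star}}$. Either correction alone yields $3s_0(1-\varepsilon)/16>s_0(1-\varepsilon)/(4\sqrt2)$.
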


When $Q_H=o((\alpha/\Delta)\sqrt N)$, the Hamiltonian-query term
in~\eqref{eq:initial-H-tradeoff} vanishes as $N$ grows.
The tradeoff then implies the following lower bound on $Q_I$.
\begin{corollary}[Lower bound on $Q_I$]
\label{cor:initial-query-lower}
Fix $0<s_0\le1$ and $0<\varepsilon\le1/10$.  For each $N$, let
$\alpha,\gamma,\Delta$ be in the ranges of
Theorem~\ref{thm:initial-query-tradeoff}.  Suppose an algorithm succeeds
with probability at least $s_0$ and satisfies
$\Dtr(\rho_{\rm succ},\proj{\psi_0})\le\varepsilon$ on every valid
$(N+1)$-dimensional instance.  If
$Q_H=o((\alpha/\Delta)\sqrt N)$, then for all sufficiently large $N$ some
valid instance satisfies
$Q_I=\Omega_{s_0}(1/\gamma)$.
\end{corollary}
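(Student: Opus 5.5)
The plan is to deduce the corollary directly from Theorem~\ref{thm:initial-query-tradeoff}. Fix $s_0$ and $\varepsilon\le1/10$. For each $N$, the given algorithm succeeds with probability at least $s_0$ and has trace-distance error at most $\varepsilon$ on \emph{every} valid $(N+1)$-dimensional instance. In particular, this holds on every member of the family that Theorem~\ref{thm:initial-query-tradeoff} constructs for the current parameters $\alpha,\gamma,\Delta$, provided $N\ge4/(s_0(1-\varepsilon))$. That condition holds for all sufficiently large $N$, because $s_0$ and $\varepsilon$ are fixed. So for all such $N$, inequality~\eqref{eq:initial-H-tradeoff} applies. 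The worst-case counts $Q_H,Q_I$ are taken over the family, which is a subset of all valid instances, so the family's maximal counts are at most the global ones.

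Next, write $c:=s_0(1-\varepsilon)/(4\sqrt2)$. Since $\varepsilon\le1/10$, we have $c\ge 9s_0/(40\sqrt2)>0$, a constant depending only on $s_0$. The hypothesis $Q_H=o((\alpha/\Delta)\sqrt N)$ means that $\Delta Q_H/(2\alpha\sqrt N)\to0$ as $N\to\infty$. Hence there is an $N_0$ such that for $N\ge N_0$ this term is at most $c/2$.

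For such $N$, the tradeoff gives $\gamma Q_I\ge c/2$, that is, $Q_I\ge c/(2\gamma)=\Omega_{s_0}(1/\gamma)$. This bound is attained on some member of the family, and that member is a valid instance.

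The only delicate point is quantifier bookkeeping. The $o(\cdot)$ hypothesis must be read as uniform over valid instances at each $N$, with $\alpha,\gamma,\Delta$ allowed to vary with $N$, so that it applies to the worst case over the family. The threshold $N_0$ then depends on the algorithm and on $s_0,\varepsilon$, but the constant in the $\Omega$ depends only on $s_0$. No other obstacle is expected, since all the work is in Theorem~\ref{thm:initial-query-tradeoff}.
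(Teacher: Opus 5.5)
Your proposal is correct and follows the paper's proof: apply Theorem~\ref{thm:initial-query-tradeoff}, observe that the Hamiltonian-query term is $o(1)$ under the hypothesis while $s_0(1-\varepsilon)\ge 9s_0/10$, and subtract to obtain $Q_I=\Omega_{s_0}(1/\gamma)$. Your explicit checks of the dimension threshold $N\ge4/(s_0(1-\varepsilon))$ and of the quantifier order are points the paper leaves implicit.
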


Theorem~\ref{thm:initial-query-tradeoff} and
Corollary~\ref{cor:initial-query-lower} are proved in
Section~\ref{subsec:initial-query-lower}.

\section{Algorithm overview}
\label{sec:overview}

Our algorithms use amplitude amplification to obtain a state with constant
overlap with $\ket{\psi_0}$, then apply spectral filtering to reach error
$\varepsilon$.  The key is to use transducer composition in the amplification
stage, avoiding the $\log(1/\gamma)$ overhead from approximating each
ground-state reflection separately by QSVT.  We first describe spectral
filtering and amplitude amplification in terms of reflections, then explain
the transducer construction and the final filtering steps.

After shifting the threshold to zero and rescaling as in
Lemma~\ref{lem:hamiltonian-rescaling}, we write $H$ and $E_k$ for the
normalized Hamiltonian and its eigenvalues throughout this section.  They
satisfy
\begin{equation}
\norm{H}\le1,
\qquad
E_0\le-\delta,
\qquad
E_k\ge\delta\quad\text{for }k\ge1,
\qquad
\delta^{-1}=\cO(\alpha/\Delta).
\label{eq:normalized-promise}
\end{equation}

\subsection{Spectral filtering and approximate reflection}

The promise~\eqref{eq:normalized-promise} places the ground eigenvalue in
$[-1,-\delta]$ and all excited eigenvalues in $[\delta,1]$.  For
$0<\eta<1/2$, QSVT~\cite{GSLW19,LT20} provides a spectral filter
$U_{\eta,\delta}$ acting on the system and an ancilla register $\mathsf F$.
This unitary satisfies
\begin{equation}
U_{\eta,\delta}\ket0_{\mathsf F}\ket{\psi_k}
=f(E_k)\ket0_{\mathsf F}\ket{\psi_k}
+\ket{\omega_{k,\perp}},
\qquad
(\bra0_{\mathsf F}\otimes I)\ket{\omega_{k,\perp}}=0.
\label{eq:overview-filter}
\end{equation}
The polynomial $f$ satisfies
\begin{equation}
|f(E_0)-1|\le\eta,
\qquad
|f(E_k)|\le\eta,\qquad k\ge1.
\label{eq:overview-filter-bounds}
\end{equation}
Implementing $U_{\eta,\delta}$ uses
$\cO((\alpha/\Delta)\log(1/\eta))$ calls to $U_H$ and $U_H^\dagger$.
On outcome $\ket0_{\mathsf F}$, the spectral filter approximately preserves
the ground-state component of the input and suppresses the excited-state
components.

The spectral filter also gives an approximate ground-state reflection:
apply the filter, reflect the $\mathsf F=0$ subspace, and apply the inverse
filter.  Define $R_g:=2\proj{\psi_0}-I$.  Using $U_{\eta^2,\delta}$, set
\[
\widetilde R_g:=U_{\eta^2,\delta}^\dagger
\bigl(2(\proj0_{\mathsf F}\otimes I)-I\bigr)U_{\eta^2,\delta}.
\]
On inputs with $\mathsf F$ initialized to $\ket0_{\mathsf F}$, the bounds
in Eq.~\eqref{eq:overview-filter-bounds} with $\eta$ replaced by $\eta^2$,
together with the unitarity of $U_{\eta^2,\delta}$, imply
\begin{equation}
\norm{\bigl(\widetilde R_g-I_{\mathsf F}\otimes R_g\bigr)
(\proj0_{\mathsf F}\otimes I)}=\cO(\eta).
\label{eq:overview-approximate-reflection}
\end{equation}
Using $\eta^2$ instead of $\eta$ changes only the constant in the logarithm,
so $\widetilde R_g$ uses
$\cO((\alpha/\Delta)\log(1/\eta))$ calls to $U_H$ and $U_H^\dagger$.

\subsection{Amplitude amplification and error accumulation}
\label{subsec:overview-amplification}

To amplify the ground-state component of $\ket\phi$, we alternate $R_g$
with the reflection about $\ket\phi$, as in amplitude
amplification~\cite{BHMT02}.  We first describe the resulting rotation, then
bound the error incurred by replacing $R_g$ with $\widetilde R_g$.
Choose the global phase of $\ket{\psi_0}$ so
that $\langle\psi_0|\phi\rangle=\beta$, and write
\[
\ket\phi=\sin\theta\ket{\psi_0}+\cos\theta\ket{\phi_\perp},
\qquad
\sin\theta=\beta,
\qquad
\langle\psi_0|\phi_\perp\rangle=0,
\]
where $\ket{\phi_\perp}$ is normalized.  Define
\[
R_\phi:=2\proj\phi-I=U_I(2\proj0-I)U_I^\dagger,
\qquad
G:=R_\phi R_g.
\]
Both $R_\phi$ and $R_g$ preserve
$\operatorname{span}\{\ket{\psi_0},\ket{\phi_\perp}\}$.  Direct calculation
on this subspace gives
\begin{equation}
G^j\ket\phi
=(-1)^j\left[
\sin((2j+1)\theta)\ket{\psi_0}
+\cos((2j+1)\theta)\ket{\phi_\perp}
\right].
\label{eq:overview-grover-rotation}
\end{equation}
Thus a suitable $j=\cO(1/\beta)$ gives
$|\bra{\psi_0}G^j\ket\phi|^2=\Omega(1)$.
Since only $\beta\ge\gamma$ is known, an iteration count chosen for
$\beta=\gamma$ may overshoot: for larger $\beta$, the angle
$(2j+1)\theta$ can pass $\pi/2$, where the ground-state overlap
reaches its first maximum.  Randomizing the iteration count over a
range of size $\cO(1/\gamma)$ avoids overshooting~\cite{BBHT98}.

As a baseline, we can implement the ideal rotation in
Eq.~\eqref{eq:overview-grover-rotation} to error $\cO(\varepsilon)$ by
replacing each occurrence
of $R_g$ in $G^j=(R_\phi R_g)^j$ with its QSVT approximation
$\widetilde R_g$, using a separate ancilla register $\mathsf F_t$ for the
$t$th occurrence.  Let $\ket{\Phi_j}$ denote
the output of this circuit on $\ket\phi$, including these registers.  Each of
the $j$ replacements changes the output by $\cO(\eta)$ by
Eq.~\eqref{eq:overview-approximate-reflection}, and the remaining operations
are unitary.  A telescoping sum therefore gives
\begin{equation}
\norm{\ket{\Phi_j}
-\ket0_{\mathsf F_1\cdots\mathsf F_j}G^j\ket\phi}
=\cO(j\eta).
\label{eq:overview-hybrid}
\end{equation}
Thus $j=\cO(1/\gamma)$ iterations turn a per-reflection error
$\cO(\eta)$ into a total error $\cO(\eta/\gamma)$.
Taking $\eta=\cO(\gamma\varepsilon)$ keeps
the accumulated error within $\cO(\varepsilon)$.
Together with the cost of $\widetilde R_g$ and
$\cO(1/\gamma)$ amplification steps, this recovers the query scaling of
Lin and Tong~\cite[Theorem~6]{LT20}:
\[
Q_H=\cO\!\left(
\frac{\alpha}{\gamma\Delta}
\log\frac1{\gamma\varepsilon}
\right).
\]

Our algorithm instead uses two stages.  In the first, amplitude amplification
only needs to produce a state $\rho_c$ satisfying
$\bra{\psi_0}\rho_c\ket{\psi_0}=\Omega(1)$.  The total error in
\eqref{eq:overview-hybrid} therefore need only be a sufficiently small
constant, which allows $\eta=\Theta(\gamma)$ for $j=\cO(1/\gamma)$.
In the second stage, one application of $U_{\eta,\delta}$ to $\rho_c$ with
$\eta=\Theta(\varepsilon)$ yields the outcome $\ket0_{\mathsf F}$ with
constant probability.  Conditioned on this outcome, the system state is within
trace distance $\cO(\varepsilon)$ of $\proj{\psi_0}$.  The two stages use
\[
\cO\!\left(
\frac{\alpha}{\gamma\Delta}\log\frac1\gamma
+\frac{\alpha}{\Delta}\log\frac1\varepsilon
\right)
\]
calls to $U_H$ and $U_H^\dagger$ in total.  The factor $\log(1/\gamma)$ in the
first term comes from approximating each of the $\cO(1/\gamma)$ ground-state
reflections by QSVT to error $\cO(\gamma)$.

\subsection{Ground-state reflection by a transducer}
\label{subsec:overview-transducer}

We remove the remaining $\log(1/\gamma)$ factor by using transducers to
compose the reflections.  Following~\cite{BJY24}, we encode a unitary $U$
on a public space $\mathcal H$ using a transducer $S$, a unitary acting on
$\mathcal H\oplus\mathcal L$, where $\mathcal L$ is the private space.
For each public input $\xi\in\mathcal H$, a catalyst $v\in\mathcal L$ satisfies
\[
S(\xi\oplus v)=U\xi\oplus v.
\]
We write $\xi\transduce{S}U\xi$ for this relation.  The unitary $U$ is the
transduction action of $S$, and $W(S,\xi):=\norm{v}^2$ is the transduction
complexity~\cite[Section~7.1]{BJY24}.  The catalyst is not
supplied with the public input, so one execution of $S$ need not map $\xi$ to
$U\xi$.  For normalized $\xi$ with $W(S,\xi)\le W_0$, the circuit
implementation theorem of~\cite[Theorem~3.2]{BJY24} gives a circuit that
approximates $U\xi$ to Euclidean error $\varepsilon_{\rm c}$ using
\[
\cO\!\left(1+\frac{W_0}{\varepsilon_{\rm c}^2}\right)
\]
controlled executions of $S$.

In the first stage, we construct transducers $S_g$ and $S_\phi$ for the
reflections $R_g$ and $R_\phi$, respectively.  For every normalized system
state $\ket\psi$, $S_g$ satisfies
\[
\ket0\ket\psi \transduce{S_g} \ket0 R_g\ket\psi,
\qquad
W(S_g,\ket0\ket\psi)=\cO(1).
\]
Here $\ket0$ denotes the all-zero state of the ancillas in the public space
of $S_g$.  The transduction action on these inputs is exactly $R_g$ on the
system.
Proposition~\ref{prop:ground-reflection} constructs $S_g$ from the
constant-accuracy filter $U_{1/10,\delta}$, the purifier of Belovs and
Jeffery~\cite{BJ26}, and the composition results of~\cite{BJY24}.
One execution of $S_g$ makes $\cO(1)$ controlled calls to
$U_{1/10,\delta}$ and $U_{1/10,\delta}^\dagger$.
We obtain $S_\phi$ by converting the two-query circuit for $R_\phi$;
its transduction complexity is constant
(Lemma~\ref{lem:circuit-to-transducer}).

We use sequential composition~\cite[Proposition~9.10]{BJY24} to combine $j$
copies each of $S_g$ and $S_\phi$ into a transducer whose transduction action
maps $\ket0\ket\phi$ to $\ket0 G^j\ket\phi$.  Its transduction complexity
is the sum of the component complexities plus $\cO(j)$, hence $\cO(j)$.
Since the transduction actions of $S_g$ and $S_\phi$ are exactly $R_g$ and
$R_\phi$, respectively, their composition introduces no approximation
error.  We apply the circuit implementation
theorem once to the complete transducer.  On input $\ket0\ket\phi$, the
resulting circuit approximates
$\ket0 G^j\ket\phi$ to sufficiently small constant error using $\cO(j)$ calls to
$U_{1/10,\delta}$ and
$U_{1/10,\delta}^\dagger$.  Each such call uses
$\cO(\alpha/\Delta)$ calls to $U_H$ and $U_H^\dagger$.  Since each sampled
iteration count satisfies $j=\cO(1/\gamma)$, the circuit uses
\[
\cO\!\left(\frac{\alpha}{\gamma\Delta}\right)
\]
calls to $U_H$ and $U_H^\dagger$.

The first stage either leaves $\ket\phi$ unchanged or samples $j$ as in
Section~\ref{subsec:overview-amplification} and runs the circuit for
$G^j\ket\phi$.
With ideal reflections, the average squared overlap with
$\ket{\psi_0}$ is at least $3/16$
(Lemma~\ref{lem:ideal-averaged-grover-state}).  The constant-error circuit
implementation therefore yields an average system state $\rho_c$ satisfying
\[
p:=\bra{\psi_0}\rho_c\ket{\psi_0}=\Omega(1).
\]
Preparing $\ket\phi$ and implementing the $R_\phi$ reflections use
$\cO(1/\gamma)$ calls to $U_I$ and $U_I^\dagger$ in this stage.

\subsection{Final filtering}

Having prepared $\rho_c$ with constant overlap with the ground state, we now
use spectral filtering to reach error $\varepsilon$.  We apply the spectral
filter $U_{\eta,\delta}$ from Eq.~\eqref{eq:overview-filter} to $\rho_c$ with
$\eta=\Theta(\varepsilon)$, and declare
success on the outcome $\ket0_{\mathsf F}$.  For this outcome, the ground-state
contribution to the probability is at least $(1-\eta)^2p$, while the total
excited-state contribution is at most $\eta^2(1-p)$.
Since $p=\Omega(1)$, success occurs with constant probability,
and the normalized system state has trace distance
$\cO(\eta/\sqrt p)=\cO(\varepsilon)$ from $\proj{\psi_0}$
(Proposition~\ref{prop:final-filter-correctness}).  The final filter uses
$\cO((\alpha/\Delta)\log(1/\varepsilon))$ calls to $U_H$ and $U_H^\dagger$ and
no calls to $U_I$ or $U_I^\dagger$
(Proposition~\ref{prop:final-filter-cost}).  For one run with constant
success probability, the worst-case query counts therefore satisfy
\[
Q_I=\cO\!\left(\frac1\gamma\right),
\qquad
Q_H=\cO\!\left(
\frac{\alpha}{\gamma\Delta}
+\frac{\alpha}{\Delta}\log\frac1\varepsilon
\right).
\]
Repeating independent runs until success gives the expected-query upper
bounds in Theorem~\ref{thm:expected-query-complexity}.

To reduce the failure probability to $\zeta$, repeating the two-stage algorithm
would also multiply the high-accuracy filtering cost by
$\log(1/\zeta)$.  We avoid this multiplication by applying spectral filters
of increasing accuracy to each newly prepared $\rho_c$, ending with a filter
of error $\cO(\varepsilon)$.  An attempt restarts at the first rejection and
succeeds only if all filters accept.  We square the error target from one
filter to the next, so their query costs grow geometrically.  The earlier
filters suppress the excited-state components before the costly filters are
reached, while the later filters nearly preserve the ground-state component.
A first rejection at a later filter therefore becomes rapidly less likely.
This bounds, with high probability, the total number of calls to $U_H$ and
$U_H^\dagger$ made by the spectral filters in rejected attempts.  The filters
in a successful attempt use $\cO((\alpha/\Delta)\log(1/\varepsilon))$ such
calls, after which the algorithm stops.  Limiting both the number of attempts
and the total number of calls made by the filters gives failure probability
at most $\zeta$ and worst-case query counts satisfying
\[
Q_H=\cO\!\left(
\frac{\alpha}{\gamma\Delta}\log\frac1\zeta
+\frac{\alpha}{\Delta}\log\frac1\varepsilon
\right),
\qquad
Q_I=\cO\!\left(\frac1\gamma\log\frac1\zeta\right).
\]
These are the bounds in Theorem~\ref{thm:upper-bound} and the full analysis is
given in Proposition~\ref{prop:increasing-accuracy-filtering}.

\section{Preliminaries}

\subsection{Shifting and rescaling the Hamiltonian}

We shift the known threshold to zero and rescale the spectrum to obtain the
promise used in~\eqref{eq:normalized-promise}.

\begin{lemma}[Shifting and rescaling]
\label{lem:hamiltonian-rescaling}
Under the access model of Section~\ref{sec:model}, set
\[
\widetilde H:=\frac{H-\mu I}{\bar\alpha},
\qquad
\bar\alpha:=\alpha+|\mu|.
\]
It has the same eigenvectors as $H$, with eigenvalues
$\widetilde E_k=(E_k-\mu)/\bar\alpha$.  Set
\[
\delta:=\min\!\left\{
\frac{\Delta}{2\bar\alpha},\frac12
\right\}.
\]
Then
\[
\norm{\widetilde H}\le1,
\qquad
\widetilde E_0\le-\delta,
\qquad
\widetilde E_k\ge\delta\text{ for }k\ge1,
\qquad
\delta^{-1}\le\frac{4\alpha}{\Delta}.
\]
Moreover, a $1$-block-encoding of $\widetilde H$ can be implemented with one
controlled call to $U_H$, and its adjoint with one controlled call to
$U_H^\dagger$.
\end{lemma}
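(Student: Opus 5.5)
The plan is to verify the spectral claims directly from the promise~\eqref{eq:separator-promise} and the block-encoding normalization, and then build the block-encoding of $\widetilde H$ by a two-term linear combination of unitaries (LCU).

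\textbf{Spectrum and norm.} Since $\widetilde H$ is an affine function of $H$, it has the same eigenvectors, with eigenvalues $\widetilde E_k=(E_k-\mu)/\bar\alpha$. Because $U_H$ is unitary, its block satisfies $\norm{H/\alpha}\le1$, so $\norm H\le\alpha$. The triangle inequality then gives
\[
\norm{H-\mu I}\le\alpha+|\mu|=\bar\alpha,
\qquad\text{so}\qquad
\norm{\widetilde H}\le1 .
\]
The promise gives $E_0-\mu\le-\Delta/2$ and $E_k-\mu\ge\Delta/2$ for $k\ge1$. Dividing by $\bar\alpha$ yields $\widetilde E_0\le-\Delta/(2\bar\alpha)\le-\delta$ and $\widetilde E_k\ge\delta$ for $k\ge1$, using only that $\delta\le\Delta/(2\bar\alpha)$ from the minimum.

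\textbf{The bound on $\delta^{-1}$.} This is the only step needing an observation, since $\mu$ could a priori be large. Because $N\ge1$, the eigenvalue $E_1$ exists. All eigenvalues lie in $[-\alpha,\alpha]$, so
\[
\mu\in[E_0+\Delta/2,\;E_1-\Delta/2]\subseteq[-\alpha,\alpha].
\]
Hence $|\mu|\le\alpha$ and $\bar\alpha\le2\alpha$, which gives $2\bar\alpha/\Delta\le4\alpha/\Delta$. Also $\Delta\le E_1-E_0\le2\alpha$, so $2\le4\alpha/\Delta$. Since $\delta^{-1}=\max\{2\bar\alpha/\Delta,2\}$, we conclude $\delta^{-1}\le4\alpha/\Delta$.

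\textbf{Block-encoding.} We write
\[
\widetilde H=\frac{\alpha}{\bar\alpha}\cdot\frac{H}{\alpha}+\frac{|\mu|}{\bar\alpha}\cdot(-\operatorname{sgn}(\mu)I),
\]
a convex combination of two block-encoded operators. The circuit uses one extra ancilla qubit and proceeds as follows.
\begin{enumerate}
\item Apply a single-qubit rotation $V$ with $V\ket0=\sqrt{\alpha/\bar\alpha}\,\ket0+\sqrt{|\mu|/\bar\alpha}\,\ket1$.
\item Apply $U_H$ controlled on the ancilla being $\ket0$.
\item Apply the phase $-\operatorname{sgn}(\mu)$ controlled on the ancilla being $\ket1$.
\item Apply $V^\dagger$.
\end{enumerate}
Projecting all ancillas, the new qubit together with the $a_H$ ancillas of $U_H$, onto $\ket0$ gives
\[
\frac{\alpha}{\bar\alpha}\cdot\frac{H}{\alpha}-\frac{\mu}{\bar\alpha}I=\widetilde H .
\]
For the cross term this uses $(\bra{0^{a_H}}\otimes I)(\ket{0^{a_H}}\otimes I)=I$. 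The circuit makes one controlled call to $U_H$. Its adjoint is the reversed circuit, which makes one controlled call to $U_H^\dagger$. The case $\mu=0$ is covered by the same circuit, with the second branch having zero weight.
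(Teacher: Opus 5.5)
Your proof is correct and follows the paper's argument essentially step for step. You use the same triangle-inequality norm bound, the same observation that the promise forces $|\mu|<\alpha$ and $\Delta\le E_1-E_0\le2\alpha$ (which gives $\delta^{-1}=\max\{2\bar\alpha/\Delta,2\}\le4\alpha/\Delta$), and the same two-term linear combination of $U_H$ and $\pm I$; the only difference is that you write out the LCU circuit explicitly and check the $\mu=0$ and adjoint cases, where the paper simply cites the standard linear-combination lemma for block-encodings.
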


\begin{proof}
The threshold promise~\eqref{eq:separator-promise} and $\norm{H}\le\alpha$ imply
\[
-\alpha\le E_0<\mu<E_1\le\alpha.
\]
Consequently, $|\mu|<\alpha$ and $\bar\alpha<2\alpha$.  Moreover, the
triangle inequality gives
\[
\norm{\widetilde H}
\le\frac{\norm{H}+|\mu|}{\bar\alpha}
\le1.
\]
The standard linear-combination construction for block-encodings
\cite[Lemma~52 in the arXiv full version]{GSLW19}, applied to $U_H$ and the identity, gives a
$\bar\alpha$-block-encoding of $H-\mu I$, or equivalently a
$1$-block-encoding of $\widetilde H$.
The identity term requires no oracle call, so the claimed query costs follow.

Since $\widetilde H$ is an affine function of $H$, it has the same
eigenvectors as $H$, with eigenvalues $\widetilde E_k=(E_k-\mu)/\bar\alpha$.  The
same promise therefore yields
\[
\widetilde E_0\le-\frac{\Delta}{2\bar\alpha},
\qquad
\widetilde E_k\ge\frac{\Delta}{2\bar\alpha}
\text{ for } k\ge1.
\]
Because $\delta\le\Delta/(2\bar\alpha)$, these inequalities give
$\widetilde E_0\le-\delta$ and $\widetilde E_k\ge\delta$ for every $k\ge1$.
The cap at $1/2$ ensures $0<\delta<1$, as required by
Lemma~\ref{lem:spectral-filter}.

Finally, the threshold promise~\eqref{eq:separator-promise} and
$\norm{H}\le\alpha$ give
\[
\Delta\le E_1-E_0\le2\alpha.
\]
Since $\bar\alpha<2\alpha$, both $2\bar\alpha/\Delta$ and $2$ are at most
$4\alpha/\Delta$.  Hence
\[
\delta^{-1}
=\max\!\left\{\frac{2\bar\alpha}{\Delta},2\right\}
\le\frac{4\alpha}{\Delta}.
\]
This proves the lemma.
\end{proof}

Until Section~\ref{sec:optimality}, $H$ and $E_k$ denote the normalized
operator $\widetilde H$ and its eigenvalues $\widetilde E_k$.  Queries to its
block-encoding are charged to the original oracle according to
Lemma~\ref{lem:hamiltonian-rescaling}.  The linear-combination construction
adds at most one ancilla qubit; this constant increase is absorbed in the
$a_H+1$ factors in the gate bounds below.

\subsection{Spectral filtering by QSVT}

We use the following standard spectral-filtering consequence of quantum
singular value transformation.

\begin{lemma}[QSVT spectral filter~{\cite[Lemma~5]{LT20}}]
\label{lem:spectral-filter}
Let $H$ be Hermitian with $\norm H\le1$, and suppose that a $1$-block-encoding
of $H$ with $a_H$ ancilla qubits is available.  For $0<\eta<1/2$ and
$0<\delta<1$, there is a real polynomial $f$ with $|f(x)|\le1$ for
$x\in[-1,1]$ such that
\[
\begin{cases}
|f(x)-1|\le\eta,&x\in[-1,-\delta],\\
|f(x)|\le\eta,&x\in[\delta,1].
\end{cases}
\]
QSVT gives a unitary $U_{\eta,\delta}$ acting on the system and an ancilla
register $\mathsf F$ of $a_H+\cO(1)$ qubits such that
\[
(\bra0_{\mathsf F}\otimes I)U_{\eta,\delta}
(\ket0_{\mathsf F}\otimes I)=f(H).
\]
The circuit uses
$\cO(\delta^{-1}\log(1/\eta))$ queries to the block-encoding and
$\cO((a_H+1)\delta^{-1}\log(1/\eta))$ one- and two-qubit gates.
\end{lemma}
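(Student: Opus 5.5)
This lemma is a citation of a standard result (Lin--Tong, Lemma~5). The plan is to assemble it from three pieces: a polynomial approximation of the sign function, the QSVT construction of a block-encoding of a real polynomial, and a count of degree versus queries.

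\textbf{Step 1: the polynomial.} By the sign-approximation result of Low--Chuang and Gily\'en et al., for $0<\delta<1$ and $0<\epsilon'<1$ there is an odd real polynomial $p$ of degree $d=\cO(\delta^{-1}\log(1/\epsilon'))$. It satisfies $|p(x)|\le1$ on $[-1,1]$ and $|p(x)-\operatorname{sign}(x)|\le\epsilon'$ for $|x|\ge\delta$. We set
\[
f(x):=\frac{1-p(x)}{2}\cdot(1-\eta')
\]
for a suitable small $\eta'$; a cleaner choice is simply $f=(1-p)/2$ with $\epsilon'=2\eta$.

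Then $f(x)\in[0,1]$ on $[-1,1]$, since $p\in[-1,1]$. For $x\le-\delta$ we have $p(x)\in[-1,-1+\epsilon']$, which gives $|f(x)-1|\le\epsilon'/2=\eta$. For $x\ge\delta$ we get $|f(x)|\le\eta$. The degree is $\cO(\delta^{-1}\log(1/\eta))$.

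\textbf{Step 2: QSVT.} The function $f$ has no definite parity, because it is even part plus odd part. We therefore implement the odd polynomial $-p/2$ and the even constant $1/2$ separately.

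One option is QSVT applied to the Hermitian block-encoding for the real polynomial $p$. This yields a block-encoding of $p(H)$ with one extra ancilla and $d$ (or $d+1$) queries. We then combine it with the identity by a linear combination of unitaries with weights $(1/2,1/2)$. This gives a $1$-block-encoding of $(I-p(H))/2=f(H)$, with no loss in normalization since the weights sum to one.

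Alternatively, the real-polynomial QSVT theorem (GSLW19 Corollary~10) handles any real polynomial bounded by $1$ directly. It uses two QSVT circuits for the even and odd parts and an LCU with weights $1/2$. Applied to $f$ this requires $|f|\le 1$, and the resulting normalization is exactly the one we need. In either case the ancilla register $\mathsf F$ consists of the $a_H$ block-encoding ancillas together with $\cO(1)$ extra qubits for the QSVT phase qubit and the LCU qubit.

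\textbf{Step 3: costs.} QSVT of degree $d$ uses $d$ alternating calls to the block-encoding and its adjoint. Each call is interleaved with a projector-controlled phase rotation, which costs $\cO(a_H+1)$ gates using a multi-controlled Toffoli on the $a_H$ ancillas with $\cO(1)$ helpers. This gives $\cO(\delta^{-1}\log(1/\eta))$ queries and $\cO((a_H+1)\delta^{-1}\log(1/\eta))$ gates.

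\textbf{Main obstacle.} The main subtlety is the parity and normalization bookkeeping in Step~2, that is, ensuring $(\bra0\otimes I)U(\ket0\otimes I)$ equals $f(H)$ exactly rather than $f(H)/2$. The $I/2$ plus $-p(H)/2$ decomposition handles this: since the two weights sum to $1$, the LCU produces precisely $f(H)$.
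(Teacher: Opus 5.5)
Your main route is correct: an odd sign-approximating polynomial $p$, QSVT to block-encode $p(H)$ using $\cO(\delta^{-1}\log(1/\eta))$ queries with $\cO(a_H+1)$ gates per projector-controlled phase, and an LCU with the identity giving exactly $(I-p(H))/2=f(H)$. This is the standard construction behind the Lin--Tong lemma, which the paper cites without reproving. Your ``alternative'' has one slip: the even/odd split with LCU weights $1/2$ yields $f(H)/2$ when one knows only $|f|\le1$ (GSLW's arbitrary-parity theorem assumes $|f|\le1/2$). It gives $f(H)$ here only because both parity parts of $f=(1-p)/2$, namely $1/2$ and $-p/2$, are bounded by $1/2$, and that is just your first route again.
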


\subsection{Transducers}

We recall the transducer formalism of~\cite{BJY24}.  All direct sums below
are orthogonal direct sums.

\begin{definition}[Transducer]
\label{def:transducer}
Let $\mathcal H$ and $\mathcal L$ be Hilbert spaces.  A \emph{transducer with
public space $\mathcal H$ and private space $\mathcal L$} is a unitary
\[
S\colon\mathcal H\oplus\mathcal L\longrightarrow
\mathcal H\oplus\mathcal L.
\]
Given $\xi,\tau\in\mathcal H$ and $v\in\mathcal L$ satisfying
\begin{equation}
S(\xi\oplus v)=\tau\oplus v,
\label{eq:transduction-general}
\end{equation}
we say that $S$ \emph{transduces} $\xi$ into $\tau$, write
$\xi\transduce{S}\tau$, and call $v$ a \emph{catalyst}.
A unitary $U\colon\mathcal H\to\mathcal H$ is the \emph{transduction action}
of $S$ if $\xi\transduce{S}U\xi$ for every $\xi\in\mathcal H$.  We write
$S\DownTransduce_{\mathcal H}=U$~\cite[Definition~2.5]{BJ26}.
\end{definition}

\begin{definition}[Transduction complexity~{\cite[Section~7.1]{BJY24}}]
\label{def:transduction-complexity}
For a transduction $\xi\transduce{S(O)}\tau$ with catalyst $v$, its
transduction complexity is
\[
W(S,O,\xi):=\norm{v}^2.
\]
We use the catalysts specified by the constructions below.
\end{definition}

We abbreviate $S(O)$ to $S$ and omit fixed arguments of $W$ when the context
is clear.

The following result explains how transduction complexity controls the cost
of implementing the transduction action. The catalyst $v$ for $\xi\transduce{S}\tau$ is not an input to this circuit.

\begin{lemma}[Circuit implementation~{\cite[Theorem~3.2]{BJY24}}]
\label{lem:circuit-implementation}
Let $S$ be a transducer, $W_0\ge0$, and $\varepsilon_{\rm c}>0$.
There is a circuit that, for every normalized $\xi\in\mathcal H$ with
$W(S,\xi)\le W_0$, starts from $\xi\oplus0$ and produces a state within
Euclidean distance $\varepsilon_{\rm c}$ of $\tau\oplus0$, where
$\xi\transduce{S}\tau$.  The circuit uses
\[
K=\cO(1+W_0/\varepsilon_{\rm c}^2)
\]
controlled executions of $S$ and $\cO(K)$ additional one- and two-qubit
gates.
\end{lemma}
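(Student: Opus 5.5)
The plan is to reconstruct the argument of~\cite[Theorem~3.2]{BJY24}. The key idea is to spread the catalyst thinly over many copies of the private space, so that its contribution to the state is small. Fix $K$, to be chosen at the end, and add a counter register $\ket t$ with $t\in\{0,\dots,K\}$ attached to the private space. The enlarged space is $\mathcal H\oplus(\mathcal L\otimes\mathbb C^{K+1})$. Define one round $V$ as follows:
\begin{enumerate}
\item Apply $S$ on $\mathcal H\oplus\mathcal L$, where the private component is taken at the current counter value. Concretely, this is a controlled execution of $S$ that moves amplitude between the public space and private copy $t$.
\item Apply a cyclic shift to the counter on the private part only.
\end{enumerate}
The circuit starts from $\xi\oplus0$, applies $V$ a total of $K$ times, and finally projects or uncomputes the counter. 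Each round uses $\cO(1)$ controlled executions of $S$ and $\cO(1)$ additional gates for the counter arithmetic. The shift costs only polylogarithmically in $K$, and that overhead can be absorbed through the standard telescoping counter construction. This gives the claimed $\cO(K)$ gate count.

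The analysis compares the actual run with an ideal run that carries the catalyst. Consider the ideal state
\[
\Psi:=\xi\oplus\frac1{\sqrt K}\sum_{t}v\otimes\ket t .
\]
This state has public part $\xi$ and catalyst $v$ distributed uniformly, with amplitude $1/\sqrt K$, over the counter positions. The transduction identity $S(\xi\oplus v)=\tau\oplus v$ says that $S$ acts on the public part and the catalyst jointly. To track this, write $\xi\oplus0=\Psi-(0\oplus\text{catalyst part})$. By linearity, the run on $\xi\oplus0$ equals the run on $\Psi$ minus the run on the catalyst part.

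The central step is a telescoping argument. Applying the transduction identity round by round, one should show that after the $K$ rounds the run on $\Psi$ ends in $\tau\oplus(\text{a shifted catalyst spread})$, with the public output exactly $\tau$. The catalyst part is a vector of squared norm
\[
\norm{v}^2/K\cdot\cO(1)=\cO(W_0/K).
\]
Because $V$ is unitary, this vector stays of the same norm throughout the run. The final state is therefore within Euclidean distance
\[
\cO\bigl(\sqrt{W_0/K}\bigr)+(\text{boundary term})
\]
of $\tau\oplus0$. Two facts should keep the boundary term of the same order: the counter shift moves the catalyst uniformly, and only one counter slot interacts with the public space per round. Choosing $K=\Theta(1+W_0/\varepsilon_{\rm c}^2)$ then makes the total error at most $\varepsilon_{\rm c}$.

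Two features of the statement come out of this argument. First, uniformity over all $\xi$ with $W(S,\xi)\le W_0$ holds because the circuit depends only on $K$, not on $v$. Second, the catalyst is never supplied to the circuit: it appears only in the analysis, through linearity.

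I expect the main obstacle to be the round-by-round bookkeeping. One must show that the catalyst copies, cycling through the counter, re-enter the interaction exactly in the form required by $S(\xi\oplus v)=\tau\oplus v$ at every round. In particular, after the first round the public part is already $\tau$, and later rounds must not disturb it. My first attempt would be to average over a uniformly random stopping time $t\in[K]$, implemented coherently with a superposed counter, in the style of phase estimation. The hope is that this makes the non-fixed part telescope, so that only the $\cO(\norm v/\sqrt K)$ edge contribution remains.
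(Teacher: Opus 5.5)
There is a genuine gap. Your construction has the roles of the public and private spaces reversed, and the central estimate fails as a result.

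\textbf{Where it breaks.} You keep one public register and cycle $K+1$ private copies past it.
\begin{itemize}
\item Nothing is diluted. Your ideal state $\Psi=\xi\oplus K^{-1/2}\sum_t v\otimes\ket t$ has a catalyst part of squared norm at least $\norm{v}^2$, since there are at least $K$ slots, each of squared norm $\norm{v}^2/K$. It is not $\cO(W_0/K)$.
\item The transduction identity does not apply. It says nothing about $S(\xi\oplus K^{-1/2}v)$, because the public input and the catalyst must appear in matching proportion. By linearity this vector equals $K^{-1/2}(\tau\oplus v)+(1-K^{-1/2})S(\xi\oplus0)$, and $S(\xi\oplus0)$ is uncontrolled.
\item The scheme computes the wrong map. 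If you instead load each private slot with the catalyst matching the current public state, the single public register meets $K$ fresh private copies in turn. The ideal output is then $U^K\xi$ rather than $U\xi$, where $U$ is the transduction action.
\end{itemize}
The last point is exactly the obstacle you flag (``later rounds must not disturb it''). No bookkeeping of the ``boundary term'' can remove it.

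\textbf{The fix.} The paper imports this lemma from \cite[Theorem~3.2]{BJY24} without proof. The standard argument behind it splits the \emph{public input} rather than the catalyst.
\begin{itemize}
\item Take public space $\mathcal H\otimes\mathbb C^K$ and a single shared private space $\mathcal L$.
\item Prepare $\xi\otimes u$ with $u=K^{-1/2}\sum_{t=1}^K\ket t$.
\item In round $t$, apply $S$ to $(\mathcal H\otimes\ket t)\oplus\mathcal L$, acting as the identity on the other public slots.
\end{itemize}
By linearity, round $t$ maps $K^{-1/2}\bigl((\xi\otimes\ket t)\oplus v\bigr)$ to $K^{-1/2}\bigl((\tau\otimes\ket t)\oplus v\bigr)$. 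Hence $w:=K^{-1/2}v$ is a catalyst for the whole $K$-round circuit $C$, with $\norm{w}^2=W(S,\xi)/K$. It follows that $C\bigl((\xi\otimes u)\oplus0\bigr)-\bigl((\tau\otimes u)\oplus0\bigr)=(0\oplus w)-C(0\oplus w)$, which has norm at most $2\sqrt{W_0/K}$. Uncomputing $u$ and taking $K=\max\{1,\lceil4W_0/\varepsilon_{\rm c}^2\rceil\}$ gives the lemma.

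This is the coherent ``random time'' idea from your last paragraph. The superposed index must be the entry slot of the public input, and the private register must be carried across rounds.

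For the $\cO(K)$ gate count, encode $t$ in one-hot form. Each controlled execution is then conditioned on one qubit together with the public/private flag, and $u$ can be prepared with $\cO(K)$ gates. This replaces your appeal to a logarithmic-cost shift.
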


The composition results below use canonical transducers, whose input oracles
act only on part of the private space.

\begin{definition}[Canonical transducer]
\label{def:canonical-transducer}
Let $O$ act on $\mathcal M$.  A transducer on
$\mathcal H\oplus\mathcal L$ is canonical if a subspace of $\mathcal L$
has the form $\mathcal E\otimes\mathcal M$ for some Hilbert space
$\mathcal E$, and
\begin{equation}
S(O)=S^\circ\widetilde O,
\qquad
\widetilde O|_{\mathcal E\otimes\mathcal M}=I_{\mathcal E}\otimes O,
\qquad
\widetilde O|_{(\mathcal E\otimes\mathcal M)^\perp}=I,
\label{eq:canonical-form}
\end{equation}
where the orthogonal complement is taken in $\mathcal H\oplus\mathcal L$
and $S^\circ$ is an oracle-independent unitary.  We call $S^\circ$ the
\emph{work unitary}.
\end{definition}

For an oracle-independent unitary $V$, let $T(V)$ denote the number of
one- and two-qubit gates used to implement $V$; for a canonical transducer
$S$, write $T(S):=T(S^\circ)$.

The following lemma converts a unitary circuit with oracle access into a
canonical transducer whose transduction action is the unitary implemented
by the circuit.  The result is a circuit-model consequence of
\cite[Theorem~3.5 and Figure~3.6]{BJY24}.
\begin{lemma}[Circuit-to-transducer conversion]
\label{lem:circuit-to-transducer}
Let $A(O)\colon\mathcal H\to\mathcal H$ be a unitary circuit with $N$
queries to $O$ and $T_A$ one- and two-qubit gates outside the oracle calls.
There is a canonical transducer $S_A(O)$ with public space $\mathcal H$ such
that, for every $\xi\in\mathcal H$,
\begin{equation}
\xi\transduce{S_A(O)}A(O)\xi,
\qquad
W(S_A,O,\xi)\le N\norm{\xi}^2.
\label{eq:circuit-to-transducer}
\end{equation}
Moreover, $T(S_A)=\cO(T_A+N)$.
\end{lemma}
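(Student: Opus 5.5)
We build $S_A(O)$ explicitly, following the construction behind \cite[Theorem~3.5 and Figure~3.6]{BJY24}. Write the circuit as
\[
A(O)=V_N\,O_N\,V_{N-1}\cdots V_1\,O_1\,V_0,
\]
where each $V_t$ is oracle-independent and each $O_t$ is $O$ acting on a designated subregister of $\mathcal H$, tensored with the identity elsewhere. Let $\xi_t:=O_tV_{t-1}\cdots O_1V_0\,\xi$ be the intermediate state just after the $t$th query, with $\xi_0:=\xi$. Each has norm $\norm{\xi}$ by unitarity.

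\textbf{Private space.} Take $\mathcal L=\bigoplus_{t=1}^{N}\mathcal L_t$ with each $\mathcal L_t\cong\mathcal H$, so that $\mathcal H\oplus\mathcal L\cong\mathbb C^{N+1}\otimes\mathcal H$ through a ``time'' register $\ket t$, $t=0,\dots,N$. The oracle subspace is $\mathcal E\otimes\mathcal M$, where $\mathcal E$ indexes the blocks $t\ge1$ together with the complement of the query subregister. Set $\widetilde O:=\sum_{t\ge1}\proj t\otimes O_t+\proj0\otimes I$, which has the canonical form~\eqref{eq:canonical-form}.

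\textbf{Work unitary.} Let $S^\circ$ first apply $\sum_t\proj t\otimes V_t'$, where $V_t'$ is the next piece of work before the subsequent query. It then applies a cyclic shift on the time register, $\ket t\mapsto\ket{t+1 \bmod (N+1)}$, with the indices arranged so that the block for time $t$ is sent to time $t+1$ and the final block returns to the public block $t=0$. The total $S=S^\circ\widetilde O$ therefore acts as one ``tick'' of the clock. Take the catalyst $v:=\bigoplus_{t=1}^N\xi_t$, possibly with the intermediate states shifted by the appropriate $V$'s depending on where the oracle sits within a tick. Feeding in $\xi\oplus v$, each block advances by one step: the public block $\xi_0$ moves to $\mathcal L_1$ as $\xi_1$ (after the query), each $\xi_t$ moves to $\mathcal L_{t+1}$, and the last block returns to the public slot as $V_N\xi_N=A(O)\xi$. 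So $S(\xi\oplus v)=A(O)\xi\oplus v$, the catalyst reproduces itself, and $W=\norm v^2=\sum_{t=1}^N\norm{\xi_t}^2=N\norm\xi^2$.

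\textbf{Main obstacle.} The main difficulty is aligning the oracle with the tick so that the catalyst is exactly invariant. I would fix the convention that in one tick the block at time $t$ undergoes $O$ (if $t\ge1$), then $V_t$, then the shift to $t+1$. Block $0$ undergoes only $V_0$, and after the shift the block arriving at $0$ is $V_N O\,(\cdot)$. I would then check by induction on $t$ that the stated $v$ is invariant and that the public output equals $A(O)\xi$. If the endpoints are off by one, I would absorb $V_0$ into the first private block instead.

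\textbf{Gate count.} A controlled-on-$t$ application of all the $V_t$ costs $\sum_t T(V_t)=T_A$ gates, since each $V_t$ is controlled only on its own time value. Equivalently, it is a sequence of $N+1$ controlled blocks, each of which adds $\cO(1)$ overhead per gate for the control. That overhead can be made $\cO(1)$ per segment by using a one-hot time encoding. The cyclic shift on a one-hot register of $N+1$ qubits costs $\cO(N)$ swaps. Extending $O$ to $\widetilde O$ needs no extra gates apart from the control of the one-hot bit for block $0$. This gives $T(S_A)=\cO(T_A+N)$.
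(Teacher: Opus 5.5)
Your construction is correct and is essentially the clock-register construction of \cite[Theorem~3.5 and Figure~3.6]{BJY24} that the paper cites. It gives catalyst norm $N\norm{\xi}^2$, and your one-hot clock gives $\cO(T_A+N)$ gates. One detail: under the convention you finally fix (oracle on blocks $t\ge1$, then $V_t$, then the shift), catalyst block $t$ must hold the pre-query state $V_{t-1}\xi_{t-1}$, not $\xi_t$; this does not change the norm count.
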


For the construction in~\cite{BJY24}, $W(S_A,O,\xi)$ is the sum of the
squared norms of the components sent to the oracle.  There are $N$ such
components, each of norm at most $\norm{\xi}$.  The circuit has $T_A+N$
gates in total.

Canonical transducers admit sequential and functional
composition~\cite[Section~9]{BJY24}.  Under sequential composition, the
transduction actions are applied in order; the component transduction
complexities and the squared norms of the intermediate public states add.
Functional composition is used when the transduction
action of one transducer is the input oracle of another.  We record the two
forms used below and include their proofs for completeness.

\begin{proposition}[Sequential composition~{\cite[Definition~9.8 and
Proposition~9.10]{BJY24}}]
\label{prop:sequential-composition}
Let $S_1,\ldots,S_m$ be canonical transducers with the same public space
$\mathcal H$ and input oracle $O$.  Suppose that
\[
\xi_t\transduce{S_t(O)}\xi_{t+1},
\qquad t=1,\ldots,m.
\]
Writing $U_t:=S_t(O)\DownTransduce_{\mathcal H}$, their sequential composition
$S_m*\cdots*S_1$ has transduction action $U_m\cdots U_1$ and transduces
$\xi_1$ into $\xi_{m+1}$.  Its parallel implementation satisfies
\begin{equation}
W(S_m*\cdots*S_1,O,\xi_1)
=\sum_{t=2}^m\norm{\xi_t}^2
+\sum_{t=1}^m W(S_t,O,\xi_t).
\label{eq:sequential-composition}
\end{equation}
The same implementation has gate count
\begin{equation}
T(S_m*\cdots*S_1)
=\cO\!\left(\log m+T\!\left(\bigoplus_{t=1}^m S_t^\circ\right)\right).
\label{eq:sequential-composition-gate-cost}
\end{equation}
\end{proposition}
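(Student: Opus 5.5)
I will prove the sequential-composition statement by building the transducer explicitly, as in \cite[Definition~9.8]{BJY24}, and then checking the transduction and the complexity identity directly. Take the public space $\mathcal H$ and let the private space of the composition be
\[
\mathcal L:=\Bigl(\bigoplus_{t=2}^{m}\mathcal H_t\Bigr)\oplus\bigoplus_{t=1}^m\mathcal L_t ,
\]
where $\mathcal H_t\cong\mathcal H$ is a copy of the public space that stores the intermediate state $\xi_t$. Write $\mathcal H_1:=\mathcal H_{m+1}:=\mathcal H$, the actual public space. The parallel implementation applies all the $S_t$ simultaneously as the block-diagonal unitary $\bigoplus_t S_t(O)$. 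Here $S_t$ acts on $\mathcal H_t\oplus\mathcal L_t$, read as its input copy. This is followed by a fixed oracle-independent cyclic shift $P$ that relabels the output copy of block $t$ as $\mathcal H_{t+1}$, so that the output of block $m$ lands back in the public space.

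Canonicity follows from this structure. Each $S_t(O)=S_t^\circ\widetilde O_t$, and the oracle parts act on disjoint subspaces $\mathcal E_t\otimes\mathcal M\subseteq\mathcal L_t$. Hence $\bigoplus_t\widetilde O_t$ acts as $I_{\bigoplus\mathcal E_t}\otimes O$ on $(\bigoplus_t\mathcal E_t)\otimes\mathcal M$ and as the identity elsewhere. The composed unitary is therefore $P\bigl(\bigoplus_tS_t^\circ\bigr)\cdot\bigl(\bigoplus_t\widetilde O_t\bigr)$, which has the canonical form~\eqref{eq:canonical-form} with work unitary $P\bigoplus_tS_t^\circ$.

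For the transduction, let $v_t$ be the catalyst of $\xi_t\transduce{S_t}\xi_{t+1}$, and set the composite catalyst to
\[
v:=(\xi_2,\ldots,\xi_m)\oplus(v_1,\ldots,v_m).
\]
Applying $\bigoplus_tS_t$ to $\xi_1\oplus v$ sends each $\xi_t\oplus v_t$ to $\xi_{t+1}\oplus v_t$. The shift $P$ then moves $\xi_{t+1}$ into $\mathcal H_{t+1}$, which gives $\xi_{m+1}\oplus v$. Thus $\xi_1\transduce{}\xi_{m+1}$ with catalyst $v$. Orthogonality of the direct sum gives $\norm v^2=\sum_{t=2}^m\norm{\xi_t}^2+\sum_tW(S_t,\xi_t)$, which is~\eqref{eq:sequential-composition}.

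Applying this to $\xi_1$ arbitrary with $\xi_{t+1}=U_t\xi_t$ shows that every $\xi$ is transduced to $U_m\cdots U_1\xi$. This identifies the transduction action as $U_m\cdots U_1$.

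For the gate count, the direct-sum unitary $\bigoplus_tS_t^\circ$ is charged as is. The cyclic shift $P$ acts on a block index register of $\lceil\log m\rceil$ qubits. It is implemented as a modular increment on that register, controlled on whether the state lies in a public-copy subspace, and this costs $\cO(\log m)$ gates. This yields~\eqref{eq:sequential-composition-gate-cost}.

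The main obstacle is making the cyclic shift compatible with the layout. The public part of each block must be identified coherently with the next block's input copy, and $P$ must not disturb the private parts $\mathcal L_t$. I would fix an encoding $\mathcal H\otimes\mathbb C^{m}$ for the copies with a flag separating them from $\bigoplus\mathcal L_t$, so that $P$ is the increment on $\mathbb C^m$ controlled by that flag. I would then check that the logarithmic-cost increment suffices, absorbing constant-factor cost into the $\cO(\cdot)$.
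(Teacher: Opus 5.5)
Your proposal is correct and follows essentially the same route as the paper. It uses the same parallel layout, with the copies $\mathcal H_2,\ldots,\mathcal H_m$ kept in the private space. It uses the same construction $P\bigl(\bigoplus_t S_t^\circ\bigr)\bigl(\bigoplus_t\widetilde O_t\bigr)$, with $\bigoplus_t\mathcal E_t$ giving the canonical form, the same catalyst $\bigoplus_{t\ge2}\xi_t\oplus\bigoplus_t v_t$, and the same $\cO(\log m)$-gate controlled modular increment to implement $P$.
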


\begin{proof}
Let $O$ act on $\mathcal M$.  Write
$S_t(O)=S_t^\circ\widetilde O_t$, and let $\mathcal L_t$ be its
private space.  Take copies $\mathcal H_1,\ldots,\mathcal H_m$ of
$\mathcal H$, with $\mathcal H_{m+1}=\mathcal H_1$.  The transducer
$S_m*\cdots*S_1$ acts on
\[
\bigoplus_{t=1}^m(\mathcal H_t\oplus\mathcal L_t),
\]
where $\mathcal H_1$ is public and all other summands are private.  Let
$P$ send the copy of $x\in\mathcal H$ in $\mathcal H_t$ to its copy in
$\mathcal H_{t+1}$, and let $P$ act as the identity on each
$\mathcal L_t$.  The parallel construction is
\begin{equation}
S_m*\cdots*S_1
=P\left(\bigoplus_{t=1}^m S_t^\circ\right)
\left(\bigoplus_{t=1}^m\widetilde O_t\right).
\label{eq:sequential-composition-construction}
\end{equation}
For each $t$, the canonical form of $S_t$ gives a subspace
$\mathcal E_t\otimes\mathcal M\subseteq\mathcal L_t$ on which
$\widetilde O_t=I_{\mathcal E_t}\otimes O$; it is the identity on the rest
of $\mathcal H_t\oplus\mathcal L_t$.  Set
$\mathcal E:=\bigoplus_{t=1}^m\mathcal E_t$.  Since
\[
\bigoplus_{t=1}^m(\mathcal E_t\otimes\mathcal M)
\cong\mathcal E\otimes\mathcal M,
\]
the operator $\bigoplus_{t=1}^m\widetilde O_t$ acts as
$I_{\mathcal E}\otimes O$ on this private subspace and as the identity
on its orthogonal complement.  Hence $S_m*\cdots*S_1$ has the canonical
form~\eqref{eq:canonical-form}, with one controlled call to $O$.

For each $t$, use the catalyst $v_t\in\mathcal L_t$ of
$\xi_t\transduce{S_t(O)}\xi_{t+1}$, so
$S_t(O)(\xi_t\oplus v_t)=\xi_{t+1}\oplus v_t$ and
$\norm{v_t}^2=W(S_t,O,\xi_t)$.  Put $\xi_t$ in
$\mathcal H_t$ for $t\ge2$ and define a catalyst for
$S_m*\cdots*S_1$ by
\[
v=\bigoplus_{t=2}^m\xi_t\ \oplus\
  \bigoplus_{t=1}^m v_t.
\]
The component of $\xi_1\oplus v$ in $\mathcal H_t\oplus\mathcal L_t$
is $\xi_t\oplus v_t$.  Hence
\[
\begin{aligned}
(S_m*\cdots*S_1)(O)(\xi_1\oplus v)
&=P\left(\bigoplus_{t=1}^m S_t(O)(\xi_t\oplus v_t)\right)\\
&=P\left(\bigoplus_{t=1}^m(\xi_{t+1}\oplus v_t)\right)\\
&=\xi_{m+1}\oplus\left(\bigoplus_{t=2}^m\xi_t\right)
  \oplus\left(\bigoplus_{t=1}^m v_t\right)\\
&=\xi_{m+1}\oplus v.
\end{aligned}
\]
Since $\norm{v}^2=\sum_{t=2}^m\norm{\xi_t}^2+
\sum_{t=1}^m\norm{v_t}^2$, this proves
\eqref{eq:sequential-composition}; the transduction action is $U_m\cdots U_1$.

Encode $t$ in $\cO(\log m)$ qubits and use a bit to distinguish
$\mathcal H_t$ from $\mathcal L_t$.  Then $P$ increments $t$ modulo
$m$ when the component is in $\mathcal H_t$.  Ripple-carry arithmetic
implements this controlled increment modulo $m$ with $\cO(\log m)$
one- and two-qubit gates~\cite[Sections~4.1 and~4.3]{CDKM04}.
The work unitary in
\eqref{eq:sequential-composition-construction} is
$P(\bigoplus_t S_t^\circ)$, giving
\eqref{eq:sequential-composition-gate-cost}.
\end{proof}

For a canonical transducer $S(O)$ on $\mathcal H\oplus\mathcal L$,
$I_{\mathcal E}\otimes S(O)$ is canonical with public space
$\mathcal E\otimes\mathcal H$ and private space
$\mathcal E\otimes\mathcal L$~\cite[Definition~9.6 and Corollary~9.7]{BJY24}.
For $z=\sum_r e_r\otimes x_r$, where $\{e_r\}$ is an orthonormal basis of
$\mathcal E$, use the catalyst $\sum_r e_r\otimes v_r$, where $v_r$ is the
catalyst for $S(O)$ on input $x_r$.  Then
\begin{equation}
W(I_{\mathcal E}\otimes S,O,z)=\sum_r W(S,O,x_r).
\label{eq:transducer-tensor-identity}
\end{equation}

Functional composition replaces a transducer's input oracle by the
transduction action of another transducer.
\begin{proposition}[Functional composition~{\cite[Proposition~9.11]{BJY24}}]
\label{prop:functional-composition}
Let $S_A(O')$ and $S_B(O)$ be canonical transducers.  Suppose that $O'$ is
the only input oracle of $S_A$ and acts on $\mathcal M$, and that $S_B$ has
public space $\mathcal M$ with
\[
S_B(O)\DownTransduce_{\mathcal M}=O'.
\]
Let $\mathcal H_A$ and $\mathcal L_A$ be the public and private spaces of
$S_A$, and suppose its oracle call acts as
$I_{\mathcal E}\otimes O'$ on
$\mathcal E\otimes\mathcal M\subseteq\mathcal L_A$.  Then the
functional composition $S_A\circ S_B$ is a canonical transducer with input
oracle $O$ and
\begin{equation}
(S_A\circ S_B)(O)\DownTransduce_{\mathcal H_A}
=S_A(O')\DownTransduce_{\mathcal H_A}.
\label{eq:functional-composition-action}
\end{equation}
If, for every $z\in\mathcal E\otimes\mathcal M$,
\[
W(I_{\mathcal E}\otimes S_B,O,z)\le c\norm{z}^2,
\]
then, for every public input $\xi$,
\begin{equation}
W(S_A\circ S_B,O,\xi)
\le(1+c)W(S_A,O',\xi).
\label{eq:functional-composition-bound}
\end{equation}
\end{proposition}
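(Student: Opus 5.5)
We construct $S_A\circ S_B$ explicitly, mirroring the parallel construction used for sequential composition. Write $S_A(O')=S_A^\circ\widetilde O'$, where $\widetilde O'$ acts as $I_{\mathcal E}\otimes O'$ on $\mathcal E\otimes\mathcal M\subseteq\mathcal L_A$ and as the identity elsewhere. Let $S_B(O)$ act on $\mathcal M\oplus\mathcal L_B$. The composed transducer acts on $\mathcal H_A\oplus\bigl(\mathcal L_A\oplus(\mathcal E\otimes\mathcal L_B)\bigr)$, with public space $\mathcal H_A$. Here we identify $(\mathcal E\otimes\mathcal M)\oplus(\mathcal E\otimes\mathcal L_B)=\mathcal E\otimes(\mathcal M\oplus\mathcal L_B)$. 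Define
\[
(S_A\circ S_B)(O):=\bigl(S_A^\circ\oplus I_{\mathcal E\otimes\mathcal L_B}\bigr)\,\bigl(I_{\mathcal E}\otimes S_B(O)\oplus I\bigr),
\]
where the second factor acts as $I_{\mathcal E}\otimes S_B(O)$ on $\mathcal E\otimes(\mathcal M\oplus\mathcal L_B)$ and as the identity on the rest. Since $I_{\mathcal E}\otimes S_B$ is canonical (the tensor remark before the proposition), this product is again of the form work unitary times oracle placement. The new work unitary is $(S_A^\circ\oplus I)(I_{\mathcal E}\otimes S_B^\circ\oplus I)$, and the oracle acts on $\mathcal E\otimes\mathcal E_B\otimes\mathcal M$. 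This establishes canonicity.

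Next we verify the transduction. Fix $\xi\in\mathcal H_A$ and let $v_A\in\mathcal L_A$ be the catalyst for $S_A(O')$ on $\xi$. Decompose $v_A=u+w$ with $u\in\mathcal E\otimes\mathcal M$ and $w\perp\mathcal E\otimes\mathcal M$. Apply $I_{\mathcal E}\otimes S_B$ to $u$: by~\eqref{eq:transducer-tensor-identity} and the hypothesis $S_B\DownTransduce_{\mathcal M}=O'$, there is a catalyst $v_B\in\mathcal E\otimes\mathcal L_B$ such that
\[
(I_{\mathcal E}\otimes S_B(O))(u\oplus v_B)=(I_{\mathcal E}\otimes O')u\oplus v_B.
\]
Take the composite catalyst to be $v:=v_A\oplus v_B$. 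The first factor maps $\xi\oplus u\oplus w\oplus v_B$ to $\xi\oplus\widetilde O'(u+w)\oplus v_B$. The second factor then applies $S_A^\circ$ and yields $S_A(O')(\xi\oplus v_A)\oplus v_B=\tau\oplus v_A\oplus v_B$, where $\xi\transduce{S_A(O')}\tau$. Because $\tau=S_A(O')\DownTransduce_{\mathcal H_A}\xi$ for every $\xi$, this proves~\eqref{eq:functional-composition-action}.

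For the complexity, we have $\norm{v}^2=\norm{v_A}^2+\norm{v_B}^2$. The hypothesis gives $\norm{v_B}^2=W(I_{\mathcal E}\otimes S_B,O,u)\le c\norm{u}^2\le c\norm{v_A}^2$. Hence $\norm{v}^2\le(1+c)\norm{v_A}^2=(1+c)W(S_A,O',\xi)$, which is~\eqref{eq:functional-composition-bound}.

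The main obstacle is bookkeeping rather than analysis. We must ensure the space identification is consistent, so that $I_{\mathcal E}\otimes S_B$ acts on exactly the oracle subspace of $S_A$ and as the identity on $\mathcal H_A$ and on the complement in $\mathcal L_A$. We must also check that the catalyst $v_B$ is chosen as a function of $u$ alone. The bound $\norm{u}\le\norm{v_A}$ is the only inequality used, and it is immediate from orthogonality.
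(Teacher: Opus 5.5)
Your proposal is correct and follows the paper's proof essentially step for step: the same composite $(S_A^\circ\oplus I)(I_{\mathcal E}\otimes S_B^\circ\oplus I)(I_{\mathcal E}\otimes\widetilde O_B\oplus I)$ on $\mathcal H_A\oplus\mathcal L_A\oplus(\mathcal E\otimes\mathcal L_B)$, the same split of the $S_A(O')$ catalyst into its $\mathcal E\otimes\mathcal M$ component and the rest, the same catalyst $v_A\oplus v_B$, and the same bound via $\norm{u}\le\norm{v_A}$. The only blemish is cosmetic: you call $I_{\mathcal E}\otimes S_B(O)\oplus I$ the ``second factor'' when defining the product but the ``first factor'' when applying it; it is the factor applied first.
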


\begin{proof}
Write $S_A(O')=S_A^\circ\widetilde O'$ and
$S_B(O)=S_B^\circ\widetilde O_B$ in canonical form, and let
$\mathcal L_B$ be the private space of $S_B$.
The composition acts on
$\mathcal H_A\oplus\mathcal L_A\oplus(\mathcal E\otimes\mathcal L_B)$,
with public space $\mathcal H_A$.
Let $v$ be a catalyst for $S_A(O')$ on input $\xi$, and let $z$ be its
component in $\mathcal E\otimes\mathcal M$, so
$v-z\in\mathcal L_A\cap(\mathcal E\otimes\mathcal M)^\perp$.
Choose a catalyst $w$ for
$I_{\mathcal E}\otimes S_B(O)$ on input $z$.  The transducer
$S_A\circ S_B$ first applies $I_{\mathcal E}\otimes\widetilde O_B$ on
$\mathcal E\otimes(\mathcal M\oplus\mathcal L_B)$.  Its work unitary
then applies $I_{\mathcal E}\otimes S_B^\circ$ on the same space,
followed by $S_A^\circ$ on $\mathcal H_A\oplus\mathcal L_A$.
Each operation acts as the identity on the remaining summands.
Since
\[
(I_{\mathcal E}\otimes S_B(O))(z\oplus w)
=(I_{\mathcal E}\otimes O')z\oplus w,
\]
we have
\[
\begin{aligned}
(S_A\circ S_B)(O)(\xi\oplus v\oplus w)
&=(S_A^\circ\oplus I_{\mathcal E\otimes\mathcal L_B})
\bigl(\xi\oplus(v-z)\oplus
       (I_{\mathcal E}\otimes S_B(O))(z\oplus w)\bigr)\\
&=(S_A^\circ\oplus I_{\mathcal E\otimes\mathcal L_B})
\bigl(\xi\oplus(v-z)\oplus(I_{\mathcal E}\otimes O')z\oplus w\bigr)\\
&=(S_A^\circ\oplus I_{\mathcal E\otimes\mathcal L_B})
\bigl(\widetilde O'(\xi\oplus v)\oplus w\bigr)\\
&=S_A(O')(\xi\oplus v)\oplus w.
\end{aligned}
\]
Thus
$S_A\circ S_B$ has the transduction action in
\eqref{eq:functional-composition-action}.  Its catalyst is $v\oplus w$;
the two components lie in orthogonal spaces, so
\[
W(S_A\circ S_B,O,\xi)=\norm{v}^2+\norm{w}^2
\le(1+c)\norm{v}^2,
\]
because $\norm{z}\le\norm{v}$.  This proves
\eqref{eq:functional-composition-bound}.
\end{proof}

\subsection{Purifier}

Purifiers were introduced by Belovs~\cite{Belovs15} and later implemented
as transducers~\cite{BJY24}.  We use the space-efficient construction of
Belovs and Jeffery~\cite{BJ26}, which takes a reflection about an input
state as its oracle and transduces that state to itself or its negative,
according to whether measuring one qubit yields outcome $1$ with probability
below or above $1/2$.  Let $\mathsf A$ denote that qubit, with Hilbert space
$\mathcal A\cong\mathbb C^2$, and let $\mathcal Y$ be the Hilbert space of the
remaining registers.  Define
\[
P_b:=\proj{b}_{\mathsf A}\otimes I_{\mathcal Y},
\qquad b\in\{0,1\},
\]
and, for a normalized $\varphi\in\mathcal A\otimes\mathcal Y$,
\[
p(\varphi):=\norm{P_1\varphi}^2,
\qquad
\mathcal K(\varphi):=\operatorname{span}\{P_0\varphi,P_1\varphi\}.
\]
The purifier uses an input oracle $O$ that reflects about $\varphi$ on
$\mathcal K(\varphi)$.  In Section~\ref{sec:ground-reflection}, another
transducer supplies this reflection.  To apply
Proposition~\ref{prop:functional-composition}, we put the purifier in
canonical form following~\cite[Proposition~10.4]{BJY24}.  The construction uses a
counter register $\mathsf J$ and a three-state clock register $\mathsf T$,
with
\[
\mathcal J:=\ell^2(\mathbb Z),
\qquad
\mathcal T:=\operatorname{span}\{\ket{-1}_{\mathsf T},\ket0_{\mathsf T},
\ket1_{\mathsf T}\},
\qquad
\mathcal H_{\rm pur}:=
\operatorname{span}\{\ket{-1}_{\mathsf T}\otimes\ket0_{\mathsf J}\}
\otimes\mathcal A\otimes\mathcal Y.
\]
Let $\mathcal L_{\rm pur}$ be the orthogonal complement of
$\mathcal H_{\rm pur}$ in
$\mathcal T\otimes\mathcal J\otimes\mathcal A\otimes\mathcal Y$.
We omit the fixed clock and counter states from public inputs and outputs.
When composing with a circuit on $\mathcal A\otimes\mathcal Y$, that
circuit leaves these two registers fixed.

\begin{theorem}[Purifier in canonical form]
\label{thm:purifier}
Let $O$ be a unitary on $\mathcal A\otimes\mathcal Y$.  There is a canonical
transducer $S_{\rm pur}(O)$ on
$\mathcal T\otimes\mathcal J\otimes\mathcal A\otimes\mathcal Y$ with input
oracle $O$, public space $\mathcal H_{\rm pur}$, and private space
$\mathcal L_{\rm pur}$.  For every
normalized $\varphi$ with $p(\varphi)\ne1/2$, if
\[
O|_{\mathcal K(\varphi)}
=2\varphi\varphi^\dagger-I_{\mathcal K(\varphi)},
\]
then
\[
\varphi\transduce{S_{\rm pur}(O)}
\begin{cases}
+\varphi,&p(\varphi)<1/2,\\
-\varphi,&p(\varphi)>1/2,
\end{cases}
\]
and
\[
W(S_{\rm pur},O,\varphi)
=\cO\!\left(\frac1{|p(\varphi)-1/2|}\right).
\]
\end{theorem}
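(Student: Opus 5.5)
We construct $S_{\rm pur}(O)$ explicitly as a quantum walk on the counter $\mathsf J$, following the canonical-form construction of~\cite[Proposition~10.4]{BJY24} with the space-efficient variant of~\cite{BJ26}, and then exhibit a catalyst by hand.

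\textbf{Reduction to a two-dimensional problem.} Fix $\varphi$ with $p:=p(\varphi)\ne1/2$. Write $\varphi=\sqrt{1-p}\,u_0+\sqrt p\,u_1$, where $u_b:=P_b\varphi/\norm{P_b\varphi}$ (if $p\in\{0,1\}$, $\mathcal K(\varphi)$ is one-dimensional and the claim is immediate). The work unitary $S_{\rm pur}^\circ$ only applies controlled operations of the form $Z_{\mathsf A}$ (the reflection $P_0-P_1$), shifts of $\mathsf J$ conditioned on $\mathsf A$ and $\mathsf T$, and clock updates. The oracle $\widetilde O$ applies $O$ only when $\mathsf T=\ket0_{\mathsf T}$, so $\mathcal E$ is $\operatorname{span}\{\ket0_{\mathsf T}\}\otimes\mathcal J$. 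All these operators, together with $O|_{\mathcal K(\varphi)}$, preserve $\mathcal T\otimes\mathcal J\otimes\mathcal K(\varphi)$. The analysis therefore reduces to a walk on $\mathcal T\otimes\ell^2(\mathbb Z)\otimes\mathbb C^2$. There, the product $(P_0-P_1)(2\varphi\varphi^\dagger-I)$ is a rotation by angle $2\theta$ with $\sin^2\theta=p$, and its eigenvalues $e^{\pm2i\theta}$ lie on the same side of $\pm i$ according to whether $p<1/2$ or $p>1/2$.

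\textbf{Construction.} I would take the walk from~\cite{BJY24}. From $\ket{-1}_{\mathsf T}\ket0_{\mathsf J}\varphi$, the state is moved into the $\mathsf T=0$ layer, where each step applies the oracle and then a $\mathsf J$-shift by $\pm1$ controlled on $\mathsf A$. This lets the counter record the signed difference of outcomes, while the $\mathsf T=1$ layer closes the walk back to the public space.

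\textbf{Catalyst.} I would write the catalyst $v$ as a geometric series over the counter, $v=\sum_{j}c_j\ket{t_j}\ket j\otimes x_j$ with $x_j\in\mathcal K(\varphi)$. The coefficients come from diagonalizing the rotation: along the drift direction the amplitudes decay like $r^{|j|}$ with $1-r=\Theta(|1-2p|)$. Checking $S_{\rm pur}(O)(\varphi\oplus v)=\pm\varphi\oplus v$ is then a recurrence on adjacent $\mathsf J$ levels. The sign is fixed by which boundary term survives: the walk returns $+\varphi$ when the ``0'' outcome dominates and $-\varphi$ otherwise.

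\textbf{Main obstacle.} The hard step is designing the counter dynamics so that one explicit $v$ works for all $\varphi$. Then $W=\norm v^2=\sum_j|c_j|^2=\cO\bigl(1/(1-r^2)\bigr)=\cO\bigl(1/|p-1/2|\bigr)$. Rather than rederive this, I would import the construction and norm bound from~\cite[Theorem~4.3]{BJ26}, or from~\cite[Proposition~10.4]{BJY24}. The remaining work is then only to verify canonical form. The oracle acts as $I_{\mathcal E}\otimes O$ on a private subspace $\mathcal E\otimes(\mathcal A\otimes\mathcal Y)$ disjoint from $\mathcal H_{\rm pur}$, because public states carry $\mathsf T=-1$. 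The rest of the work unitary is oracle-independent, which is exactly Definition~\ref{def:canonical-transducer}.
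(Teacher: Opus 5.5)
\textbf{There is a genuine gap: the proposal never fixes a transducer and catalyst, and the step you call ``only verifying canonical form'' is where the proof's work lies.} Your reduction to $\mathcal T\otimes\mathcal J\otimes\mathcal K(\varphi)$ is correct. The problem is what comes after it.

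The walk you sketch is too loosely specified to check. It applies the oracle only at $\mathsf T=0$, so $\mathcal E=\operatorname{span}\{\ket0_{\mathsf T}\}\otimes\mathcal J$; it then shifts $\mathsf J$ by $\pm1$ controlled on $\mathsf A$, and a $\mathsf T=1$ layer closes the cycle. You also explicitly decline to verify a catalyst for it. The bound $\cO(1/|p-1/2|)$ you want to import (\cite[Theorem~3.7]{BJ26} as used in the paper) is proved for one specific transducer. That transducer is the two-query purifier $\widehat S_{\rm pur}(O)=B_2\widehat O B_1\widehat O B_0$ on $\mathcal J\otimes\mathcal A\otimes\mathcal Y$, where $\widehat O$ applies $O$ only when $\mathsf J\ne0$. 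The bound does not transfer to a different walk. So your construction and your complexity bound refer to different objects.

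Suppose you import that transducer instead. Then canonical form is not a property to verify, because $O$ occurs twice. The paper introduces the clock and sets
\[
S_{\rm pur}^\circ=(\ket0\!\bra{-1})_{\mathsf T}\otimes B_0+(\ket1\!\bra0)_{\mathsf T}\otimes B_1+(\ket{-1}\!\bra1)_{\mathsf T}\otimes B_2,
\]
with $\widetilde O$ applying $\widehat O$ at clock values $0$ and $1$. One application of $S_{\rm pur}^\circ\widetilde O$ advances the clock by a single step. Hence the catalyst $v$ of~\cite{BJ26}, placed at $\mathsf T=-1$, is not a catalyst for the clocked transducer. With $\xi=\ket0_{\mathsf J}\otimes\varphi$, the vector $\ket{-1}_{\mathsf T}\otimes(\xi\oplus v)$ is sent entirely into the private layer $\ket0_{\mathsf T}\otimes B_0(\xi\oplus v)$.

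The missing idea is to put the intermediate states of the two-query computation into the catalyst:
\[
\widetilde v=\ket{-1}_{\mathsf T}\otimes v+\ket0_{\mathsf T}\otimes x_0+\ket1_{\mathsf T}\otimes x_1,
\qquad
x_0=B_0(\xi\oplus v),\quad x_1=B_1\widehat O x_0 .
\]
The transduction $(\ket{-1}_{\mathsf T}\otimes\xi)\oplus\widetilde v\mapsto(\ket{-1}_{\mathsf T}\otimes(\pm\xi))\oplus\widetilde v$ then follows by direct computation, using the definitions of $x_0,x_1$ and $\widehat S_{\rm pur}(O)(\xi\oplus v)=\pm\xi\oplus v$. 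The catalyst $\widetilde v$ is private because $v$ is supported on $\mathsf J\ne0$. Its complexity is $\norm{\widetilde v}^2=\norm{v}^2+\norm{x_0}^2+\norm{x_1}^2=2+3\norm{v}^2$.

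Without this step you have a unitary of the right shape. You do not have the transduction $\varphi\transduce{S_{\rm pur}(O)}\pm\varphi$, nor the bound on $W$.
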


\begin{proof}
We first write the two-query purifier of
\cite[Eq.~(3.8) and Figure~3.2]{BJ26}, then put it in canonical form.  Put
\[
\mathcal X:=\mathcal J\otimes\mathcal A\otimes\mathcal Y.
\]
The two-query purifier has public space
$\operatorname{span}\{\ket0_{\mathsf J}\}\otimes
\mathcal A\otimes\mathcal Y$, and its private space is
$\operatorname{span}\{\ket j_{\mathsf J}:j\ne0\}\otimes
\mathcal A\otimes\mathcal Y$.  Define the counter shift by
$\operatorname{INC}_{\mathsf J}\ket{j}_{\mathsf J}
=\ket{j+1}_{\mathsf J}$.  Since $j$ ranges over $\mathbb Z$, this shift is
unitary.  Set
\begin{align*}
\widehat O
&:=\proj{0}_{\mathsf J}\otimes I
 +(I-\proj{0}_{\mathsf J})\otimes O,\\
\operatorname{INC}_b
&:=\operatorname{INC}_{\mathsf J}\otimes P_b
 +I_{\mathsf J}\otimes(I-P_b),
 \qquad b\in\{0,1\}.
\end{align*}
Here $\widehat O$ applies $O$ when $\mathsf J\ne0$, while
$\operatorname{INC}_b$ increments $\mathsf J$ when $\mathsf A=b$.  Set
\[
B_0:=\operatorname{INC}_0,
\qquad
B_1:=\operatorname{INC}_1\operatorname{INC}_0^\dagger,
\qquad
B_2:=\operatorname{INC}_1^\dagger(2\proj{0}_{\mathsf J}-I).
\]
On the public and private spaces defined above, the purifier is the
transducer
\begin{equation}
\widehat S_{\rm pur}(O)
:=B_2\widehat O B_1\widehat O B_0.
\label{eq:raw-purifier-factorization}
\end{equation}
It makes two calls to $O$ through $\widehat O$.

For the input $\varphi$ in the theorem, set
$\xi:=\ket0_{\mathsf J}\otimes\varphi$.  The projectors $P_0,P_1$ and the
oracle $O$ preserve
$\mathcal K(\varphi)$, so every operator in
\eqref{eq:raw-purifier-factorization} also preserves
$\mathcal J\otimes\mathcal K(\varphi)$.  On $\mathcal K(\varphi)$, $O$ is
the reflection about $\varphi$.  Theorem~3.7 of~\cite{BJ26} therefore
provides a catalyst
\begin{equation}
v\in
\operatorname{span}\{\ket{j}_{\mathsf J}:j\ne0\}
\otimes\mathcal K(\varphi).
\label{eq:raw-purifier-catalyst-space}
\end{equation}
With
\[
\tau:=
\begin{cases}
+\xi,&p(\varphi)<1/2,\\
-\xi,&p(\varphi)>1/2,
\end{cases}
\]
the cited theorem states that
\begin{equation}
\widehat S_{\rm pur}(O)(\xi\oplus v)=\tau\oplus v,
\qquad
\norm{v}^2
=\cO\!\left(\frac1{|p(\varphi)-1/2|}\right).
\label{eq:raw-purifier-transduction}
\end{equation}

The transducer $\widehat S_{\rm pur}(O)$ is not in the canonical form of
Definition~\ref{def:canonical-transducer}, because the oracle occurs twice
in~\eqref{eq:raw-purifier-factorization}.  To put this two-query transducer in
canonical form, use the three-state clock $\mathsf T$ defined above.
The public space has clock value $-1$, so the oracle can be called
at clock values $0$ and $1$.
Following the circuit-to-transducer conversion of
\cite[Theorem~10.3 and Proposition~10.4]{BJY24}, define
\begin{equation}
\mathcal E:=\operatorname{span}\{\ket0_{\mathsf T},\ket1_{\mathsf T}\}
\otimes\operatorname{span}\{\ket{j}_{\mathsf J}:j\ne0\}.
\label{eq:purifier-oracle-label-space}
\end{equation}
On $\mathcal T\otimes\mathcal X$, let
\[
\widetilde O
:=\proj{-1}_{\mathsf T}\otimes I_{\mathcal X}
 +(\proj{0}_{\mathsf T}+\proj{1}_{\mathsf T})\otimes\widehat O.
\]
It acts as $I_{\mathcal E}\otimes O$ on the private subspace
$\mathcal E\otimes\mathcal A\otimes\mathcal Y$ and as the identity on its
orthogonal complement.

Define the oracle-independent unitary
\begin{equation}
S_{\rm pur}^\circ
:=\bigl(\ket0\bra{-1}\bigr)_{\mathsf T}\otimes B_0
 +\bigl(\ket1\bra0\bigr)_{\mathsf T}\otimes B_1
 +\bigl(\ket{-1}\bra1\bigr)_{\mathsf T}\otimes B_2.
\label{eq:purifier-work-unitary}
\end{equation}
It is unitary because it applies a unitary for each clock value and then
cyclically permutes the three values.  Hence
\begin{equation}
S_{\rm pur}(O):=S_{\rm pur}^\circ\widetilde O
\label{eq:canonical-purifier}
\end{equation}
has the canonical form~\eqref{eq:canonical-form}.

We verify that the clock construction preserves the transduction in
\eqref{eq:raw-purifier-transduction}.  When
$\widehat S_{\rm pur}(O)$ acts on $x:=\xi\oplus v$, its two calls to
$\widehat O$ act on
\[
x_0:=B_0x,
\qquad
x_1:=B_1\widehat O x_0,
\]
respectively.  Define
\begin{equation}
\widetilde v
:=\ket{-1}_{\mathsf T}\otimes v
 +\ket0_{\mathsf T}\otimes x_0
 +\ket1_{\mathsf T}\otimes x_1.
\label{eq:purifier-clock-catalyst}
\end{equation}
Since $v$ is supported on $j\ne0$ by
\eqref{eq:raw-purifier-catalyst-space},
$\ket{-1}_{\mathsf T}\otimes v$ lies in the private space.
The other two terms have clock values $0$ and $1$, so they also lie in the
private space.  Thus $\widetilde v\in\mathcal L_{\rm pur}$.  Using the work unitary
defined in~\eqref{eq:purifier-work-unitary}, we compute
\begin{align*}
S_{\rm pur}(O)
\bigl((\ket{-1}_{\mathsf T}\otimes\xi)\oplus\widetilde v\bigr)
&=S_{\rm pur}^\circ
\bigl(
\ket{-1}_{\mathsf T}\otimes x
+\ket0_{\mathsf T}\otimes\widehat O x_0
+\ket1_{\mathsf T}\otimes\widehat O x_1
\bigr)\\
&=\ket0_{\mathsf T}\otimes B_0x
+\ket1_{\mathsf T}\otimes B_1\widehat O x_0
+\ket{-1}_{\mathsf T}\otimes B_2\widehat O x_1\\
&=\ket0_{\mathsf T}\otimes x_0
+\ket1_{\mathsf T}\otimes x_1
+\ket{-1}_{\mathsf T}\otimes\widehat S_{\rm pur}(O)x\\
&=(\ket{-1}_{\mathsf T}\otimes\tau)\oplus\widetilde v.
\end{align*}
The third line uses the definitions of $x_0,x_1$ and
\eqref{eq:raw-purifier-factorization}; the last uses
\eqref{eq:raw-purifier-transduction}.  Hence $\widetilde v$ is a catalyst,
and its norm satisfies
\begin{align*}
W(S_{\rm pur},O,\varphi)
&=\norm{\widetilde v}^2\\
&=\norm{v}^2+\norm{x_0}^2+\norm{x_1}^2\\
&=2+3\norm{v}^2
=\cO\!\left(\frac1{|p(\varphi)-1/2|}\right).
\end{align*}
Here the third equality uses the unitarity of $B_0$, $B_1$, and $\widehat O$,
the orthogonality of $\xi$ and $v$, and $\norm{\xi}=1$.  The final bound
follows from~\eqref{eq:raw-purifier-transduction} and
$|p(\varphi)-1/2|\le1/2$.
\end{proof}

\paragraph{Finite counter register.}

The construction above uses an infinite counter.  For a power of two $D\ge4$,
replace $\mathcal J=\ell^2(\mathbb Z)$ by $\mathbb C^D$ with basis
$\{\ket{j}_{\mathsf J}:0\le j<D\}$, and set
\[
\operatorname{INC}_D\ket{r}_{\mathsf J}
=\ket{(r+1)\bmod D}_{\mathsf J}.
\]
Since $D$ is a power of two, $\operatorname{INC}_D$ is a ripple-carry
increment on $\log_2D$ qubits and uses $\cO(\log D)$ gates.  Reversing
the circuit or adding a fixed number of controls to each gate preserves
this scaling.
Let $S_{\rm pur}^{[D]}$ be the canonical purifier obtained from
\eqref{eq:canonical-purifier} by this replacement.

The following counter bound adapts the finite-counter argument of
\cite[Proposition~4.3]{BJ26} to $S_{\rm pur}$.
\begin{lemma}[Finite-counter purifier]
\label{lem:finite-counter}
Each application of $S_{\rm pur}^\circ$ or $(S_{\rm pur}^\circ)^\dagger$
changes the value of $\mathsf J$ by at most one, and
$T(S_{\rm pur}^{[D]})=\cO(\log D)$.
\end{lemma}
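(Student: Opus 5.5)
The plan is to verify both claims directly from the explicit factorization of the work unitary. By \eqref{eq:purifier-work-unitary},
\[
S_{\rm pur}^\circ=\bigl(\ket0\bra{-1}\bigr)_{\mathsf T}\otimes B_0+\bigl(\ket1\bra0\bigr)_{\mathsf T}\otimes B_1+\bigl(\ket{-1}\bra1\bigr)_{\mathsf T}\otimes B_2 .
\]
Hence on each clock sector it applies exactly one of $B_0$, $B_1$, $B_2$ and then permutes the clock. Its adjoint likewise applies one of $B_0^\dagger$, $B_1^\dagger$, $B_2^\dagger$ on each sector. It therefore suffices to show that each $B_i$ and each $B_i^\dagger$ changes $\mathsf J$ by at most one on every computational basis state of $\mathsf J\otimes\mathsf A$. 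For the finite counter, change is measured modulo $D$.

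The case check goes as follows.
\begin{itemize}
\item $B_0=\operatorname{INC}_0$ shifts $\mathsf J$ by $+1$ if $\mathsf A=0$ and by $0$ otherwise.
\item $B_1=\operatorname{INC}_1\operatorname{INC}_0^\dagger$: because $\operatorname{INC}_0$ and $\operatorname{INC}_1$ are controlled on orthogonal values of $\mathsf A$ and $\mathsf A$ is untouched, the net shift is $-1$ if $\mathsf A=0$ and $+1$ if $\mathsf A=1$.
\item $B_2=\operatorname{INC}_1^\dagger(2\proj0_{\mathsf J}-I)$: the reflection is diagonal in the $\mathsf J$ basis and does not change $\mathsf J$, so the shift is $-1$ or $0$.
\end{itemize}
The adjoints simply negate these shifts. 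The same computation holds verbatim with $\operatorname{INC}_D$ in place of $\operatorname{INC}_{\mathsf J}$, with shifts taken modulo $D$. This proves the first claim.

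For the gate count, I will implement $S_{\rm pur}^{[D]\circ}$ as a product of clock-controlled pieces. I will encode $\mathsf T$ in two qubits, with the unused fourth basis state acted on trivially. The circuit applies, controlled on clock value $-1$, the operator $B_0$; on clock value $0$, the operator $B_1$; on clock value $1$, the operator $B_2$. It then applies a fixed two-qubit permutation sending $-1\mapsto0\mapsto1\mapsto-1$. Because the controls select disjoint clock sectors and the $B_i$ do not touch $\mathsf T$, this product equals \eqref{eq:purifier-work-unitary}.

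Each $B_i$ is built as follows.
\begin{itemize}
\item Up to two increments or decrements of the counter are controlled on $\mathsf A$ and on the clock. Each is a ripple-carry increment on $\log_2D$ qubits with a constant number of extra controls, so it costs $\cO(\log D)$ gates by the remark preceding the lemma.
\item The reflection $2\proj0_{\mathsf J}-I$ is a multi-controlled phase on $\log_2D$ qubits. With one ancilla it costs $\cO(\log D)$ one- and two-qubit gates.
\end{itemize}
Summing a constant number of such pieces gives $T(S_{\rm pur}^{[D]})=\cO(\log D)$.

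I expect no serious obstacle. The only point needing care is the accounting for controls: the multi-controlled Toffolis used inside the ripple-carry increment and the zero-reflection must decompose into $\cO(\log D)$ elementary gates in total, rather than $\cO(\log^2 D)$. I would handle this by using the linear-size decompositions with a clean ancilla (for example, a ladder of Toffolis computing the AND of the controls). The constant number of added clock and $\mathsf A$ controls can be folded into a single control qubit computed once and uncomputed afterwards.
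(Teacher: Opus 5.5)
Your proposal is correct and follows the paper's proof essentially step by step. It uses the same shift check on $B_0,B_1,B_2$ and their inverses, and the same gate count: clock-controlled counter shifts (one each for $B_0,B_2$, two for $B_1$) plus an $\cO(\log D)$ multi-controlled zero test with one work qubit. Your explicit factorization of the work unitary into clock-controlled pieces followed by a cyclic clock permutation is a valid elaboration of the paper's remark that updating the clock costs $\cO(1)$ gates.
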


\begin{proof}
In~\eqref{eq:purifier-work-unitary}, $B_0$ and $B_2$ shift $\mathsf J$
by at most one.  The shifts in $B_1$ are controlled by the orthogonal
projectors $P_0$ and $P_1$, so $B_1$ also shifts $\mathsf J$ by at most
one.  The same is true of their inverses.

Write $b=\log_2D$.  The work unitary~\eqref{eq:purifier-work-unitary}
uses $B_0,B_1,B_2$ controlled by the three clock values.  By definition,
$B_0$ and $B_2$ each require one controlled shift of the $b$-qubit counter,
while $B_1$ requires two; $B_2$ also tests whether the counter is zero.
The controlled shifts use $\cO(b)$ gates.  The zero test uses a
$b$-controlled NOT, which takes $\cO(b)$ one- and two-qubit gates with
one work qubit returned to its initial state~\cite[Corollary~7.4]{BBC95}.
Conditioning on the clock adds a constant number of
controls to each gate, and updating the clock costs $\cO(1)$ gates.  Hence
$T(S_{\rm pur}^{[D]})=\cO(b)=\cO(\log D)$.
\end{proof}

\subsection{Trigonometric polynomial inequalities}
\label{subsec:trigonometric-inequalities}

The query lower bounds use the following growth inequality for real
trigonometric polynomials.
The usual form with $s\le\pi/2$ appears in~\cite[Theorem~5.1.2]{BE95};
the larger range below follows from~\cite[Theorem~1.1]{TY18}.

\begin{lemma}[Trigonometric growth inequality]
\label{lem:trig-growth}
Let
\[
q(\theta)=\sum_{k=-n}^n a_ke^{ik\theta}
\]
be real-valued, and let $|\cdot|$ denote Lebesgue measure.  If, for some
$0<s\le\pi$,
\[
\bigl|\{\theta\in[-\pi,\pi):|q(\theta)|\le1\}\bigr|\ge2\pi-s,
\]
then
\[
\sup_{\theta\in\mathbb R}|q(\theta)|\le\exp(4ns).
\]
\end{lemma}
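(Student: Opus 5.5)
We plan to obtain the lemma from a sharp Remez-type inequality for trigonometric polynomials, splitting on the size of $s$.

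\emph{Case $0<s\le\pi/2$.}  We cite \cite[Theorem~5.1.2]{BE95} directly.  It states that a real trigonometric polynomial of degree at most $n$ satisfies $\sup|q|\le\exp(4ns)\sup_{E}|q|$ whenever $E\subseteq[-\pi,\pi)$ is measurable with $|E|\ge2\pi-s$.  We take $E=\{\theta:|q(\theta)|\le1\}$; the hypothesis on its measure is exactly the one assumed, and $\sup_E|q|\le1$.  This case needs no further argument.

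\emph{Case $\pi/2<s\le\pi$.}  Here we invoke the sharp Remez inequality of Tikhonov and Yuditskii \cite[Theorem~1.1]{TY18}.  As we recall it, the extremal value for a trigonometric polynomial of degree $n$ bounded by $1$ outside a set of measure $s$ is
\[
T_{2n}\bigl(\sec(s/4)\bigr),
\]
where $T_{2n}$ is the Chebyshev polynomial.  This holds for every $s<2\pi$, and in particular for $s\le\pi$.  The first step is to check that the normalization in \cite{TY18} (degree convention, and whether $s$ is the measure of the exceptional set on the full circle) matches ours.  If the constant there is stated differently, we adjust the argument of $T_{2n}$ accordingly.

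The remaining work is an elementary estimate of the Chebyshev value, in three steps.
\begin{enumerate}
\item For $x\ge1$ we have $T_{2n}(x)\le\bigl(x+\sqrt{x^2-1}\bigr)^{2n}$.
\item Put $u=s/4\in(0,\pi/4]$ and $x=\sec u$.  Then
\[
x+\sqrt{x^2-1}=\frac{1+\sin u}{\cos u},
\qquad
\log\frac{1+\sin u}{\cos u}=\int_0^u\sec t\,dt\le u\sec u\le\sqrt2\,u,
\]
because $\sec$ is increasing on $[0,\pi/4]$.
\item Consequently
\[
\sup|q|\le\exp\bigl(2n\sqrt2\,s/4\bigr)\le\exp(ns)\le\exp(4ns).
\]
\end{enumerate}
The same computation also covers $s\le\pi/2$, so the first case could alternatively be absorbed into this one.

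We expect the main obstacle to be reconciling the precise statement and normalization of \cite[Theorem~1.1]{TY18} with the form used here.  The subsequent estimate has a factor-$4$ margin, so moderate discrepancies in constants would still yield the claimed bound $\exp(4ns)$.
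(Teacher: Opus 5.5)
Your proposal is correct and follows the paper's proof. The paper also invokes \cite[Theorem~1.1]{TY18}, whose bound $\cosh\bigl(2n\log(\sec(s/4)+\tan(s/4))\bigr)$ is exactly your $T_{2n}(\sec(s/4))$; it then bounds $\cosh$ by $\exp$ and uses $\int_0^{s/4}\sec u\,du\le s/(2\sqrt2)$ to get $\exp(ns/\sqrt2)\le\exp(4ns)$, which matches your three steps (your separate appeal to \cite{BE95} for $s\le\pi/2$ is redundant, as you note).
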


\begin{proof}
Theorem~1.1 of~\cite{TY18} bounds the supremum by
$\cosh(2n\log(\sec(s/4)+\tan(s/4)))$.
Since $s\le\pi$,
\[
\log(\sec(s/4)+\tan(s/4))
=\int_0^{s/4}\sec u\,du\le\frac{s}{2\sqrt2}.
\]
The supremum is therefore at most $\exp(ns/\sqrt2)\le\exp(4ns)$.
\end{proof}

\subsection{Hybrid argument}

We use the following hybrid argument to compare circuits that differ only in
their oracle gates.
\begin{lemma}[BBBV hybrid argument; cf.~{\cite[Theorem~3.3]{BBBV97}}]
\label{lem:query-hybrid}
Consider circuits indexed by $t=0,1,\ldots,N$ with the same initial state
and inter-query unitaries, but oracle gates $O_{q,t}$ at query position $q$.
Let $\ket{\Psi^t}$ be their final states and $\ket{\Phi_q^0}$ the state
before query $q$ in the reference circuit $t=0$.  Then
\[
\left(\sum_{t=1}^N\norm{\ket{\Psi^t}-\ket{\Psi^0}}^2\right)^{1/2}
\le\sum_q
\left(\sum_{t=1}^N\norm{(O_{q,t}-O_{q,0})\ket{\Phi_q^0}}^2\right)^{1/2}.
\]
In particular, if every query of type $j$ satisfies
$\sum_t\norm{(O_{q,t}-O_{q,0})\ket\Phi}^2\le c_j^2$
for every normalized $\ket\Phi$, then the left-hand side is at most
$\sum_j c_jQ_j$, where $Q_j$ is the number of queries of type $j$.
\end{lemma}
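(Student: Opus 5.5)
We prove the statement by induction on the queries, comparing each circuit $t$ with the reference circuit $t=0$. Fix a query ordering $q=1,\ldots,Q$. Let $\ket{\Psi_q^t}$ be the state of circuit $t$ immediately after the $q$th oracle gate and the subsequent inter-query unitary; with the convention $\ket{\Psi_0^t}$ equal to the common state before the first query, we have $\ket{\Psi_Q^t}=\ket{\Psi^t}$. Let $V_q$ be the common unitary applied after query $q$. Then
\[
\ket{\Psi_q^t}-\ket{\Psi_q^0}
=V_q O_{q,t}\bigl(\ket{\Phi_q^t}-\ket{\Phi_q^0}\bigr)
+V_q\bigl(O_{q,t}-O_{q,0}\bigr)\ket{\Phi_q^0},
\]
where $\ket{\Phi_q^t}=\ket{\Psi_{q-1}^t}$ up to the common pre-query unitary. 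Since $V_qO_{q,t}$ is unitary, the triangle inequality gives $\norm{\ket{\Psi_q^t}-\ket{\Psi_q^0}}\le d_{q-1}^t+e_q^t$. Here $d_{q-1}^t$ is the previous distance and $e_q^t:=\norm{(O_{q,t}-O_{q,0})\ket{\Phi_q^0}}$.

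Next we view $(d_q^t)_{t=1}^N$ as a vector in $\mathbb R^N$ and apply the Euclidean triangle inequality (Minkowski) to the entrywise bound $d_q^t\le d_{q-1}^t+e_q^t$:
\[
\Bigl(\sum_t (d_q^t)^2\Bigr)^{1/2}\le\Bigl(\sum_t (d_{q-1}^t)^2\Bigr)^{1/2}+\Bigl(\sum_t (e_q^t)^2\Bigr)^{1/2}.
\]
The base case is $d_0^t=0$. Telescoping over $q$ gives the displayed inequality. The one point needing care is that $e_q^t$ is measured on the \emph{reference} state $\ket{\Phi_q^0}$, which is exactly why the decomposition above puts the unitary $V_qO_{q,t}$ on the difference term rather than on the reference term.

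For the final claim, we apply the hypothesis with $\ket\Phi=\ket{\Phi_q^0}$, which is normalized. Each query of type $j$ then contributes at most $c_j$, and summing over the $Q_j$ queries of type $j$ gives $\sum_j c_jQ_j$. Nothing here is a genuine obstacle; the only subtlety is the ordering of factors in the hybrid decomposition.
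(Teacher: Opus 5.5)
Your proof is correct and is essentially the paper's argument: the paper also bounds each $\norm{\ket{\Psi^t}-\ket{\Psi^0}}$ by the hybrid sum $\sum_q\norm{(O_{q,t}-O_{q,0})\ket{\Phi_q^0}}$, replacing oracle gates from last to first so that each error term is evaluated on the reference state, and then applies the triangle inequality in $\ell^2$ over $t$. The only difference is that you apply Minkowski at each step of a forward induction rather than once at the end, which changes nothing.
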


\begin{proof}
For each $t$, replace the reference oracle gates by $O_{q,t}$ from last
to first.  When replacing the gate at query position $q$, all preceding
gates still belong to the reference circuit, so the input is
$\ket{\Phi_q^0}$; all subsequent gates are unitary.  Thus
\[
\norm{\ket{\Psi^t}-\ket{\Psi^0}}
\le\sum_q\norm{(O_{q,t}-O_{q,0})\ket{\Phi_q^0}}.
\]
Taking the Euclidean norm over $t$ and using the triangle inequality gives
\[
\begin{aligned}
\left(\sum_{t=1}^N\norm{\ket{\Psi^t}-\ket{\Psi^0}}^2\right)^{1/2}
&\le\left(\sum_{t=1}^N
  \left[\sum_q\norm{(O_{q,t}-O_{q,0})\ket{\Phi_q^0}}\right]^2\right)^{1/2}\\
&\le\sum_q
  \left(\sum_{t=1}^N\norm{(O_{q,t}-O_{q,0})\ket{\Phi_q^0}}^2\right)^{1/2}.
\end{aligned}
\]
Applying the bound $c_j$ to each query of type $j$ gives the second claim.
\end{proof}

\section{Ground-state reflection}
\label{sec:ground-reflection}

We construct a transducer for $R_g=2\proj{\psi_0}-I$ using the purifier of
Theorem~\ref{thm:purifier}.  Let $\mathsf A$ be its answer qubit, and write
$P_b=\proj b_{\mathsf A}\otimes I$ for $b\in\{0,1\}$, where $I$ acts on the
other registers.  We choose a unitary $V$ that maps each
eigenstate $\ket{\psi_k}$, together with zero-initialized ancillas, to a
state $\varphi_k$ whose answer-$1$ probability is below $1/2$ for $k=0$ and
above $1/2$ for $k\ge1$.
If one oracle reflects about each $\varphi_k$ on $\mathcal K(\varphi_k)$,
the purifier transduces $\varphi_0$ into $\varphi_0$ and
$\varphi_k$ into $-\varphi_k$ for $k\ge1$.  Applying $V^\dagger$ then returns
the ancillas to zero and gives the action of $R_g$ on the system.

\subsection{Constant-accuracy spectral test}
\label{subsec:spectral-test}

To construct $V$, take $U:=U_{1/10,\delta}$ and its corresponding polynomial
$f$ from Lemma~\ref{lem:spectral-filter}.  Let $\mathsf R$ be the ancilla register of
$U$.  The outcome $\mathsf R=0$ has probability at least $81/100$ on
$\ket{\psi_0}$ and at most $1/100$ on
each excited eigenstate.  We obtain $V$ by applying $U$ and reversibly
recording in $\mathsf A$ whether $\mathsf R$ is nonzero.

\begin{lemma}[Constant-accuracy spectral test]
\label{lem:spectral-test}
There is a unitary $V$ acting on $\mathsf A$, $\mathsf R$, and the system
such that, for every system state $\ket\psi$,
\begin{equation}
P_0V(\ket0_{\mathsf A\mathsf R}\otimes\ket\psi)
=\ket0_{\mathsf A\mathsf R}\otimes f(H)\ket\psi.
\label{eq:spectral-test-zero-component}
\end{equation}
For each eigenstate $\ket{\psi_k}$, define
\begin{equation}
\varphi_k:=V(\ket0_{\mathsf A\mathsf R}\otimes\ket{\psi_k}),
\qquad
p_k:=\norm{P_1\varphi_k}^2.
\label{eq:spectral-test-action}
\end{equation}
Under the normalized spectral promise~\eqref{eq:normalized-promise},
$p_k=1-|f(E_k)|^2$, and
\begin{equation}
p_0\le\frac{19}{100},
\qquad
p_k\ge\frac{99}{100}\qquad k\ge1,
\label{eq:spectral-test-probabilities}
\end{equation}
so
\[
\min_k|p_k-1/2|\ge\frac{31}{100}.
\]
Each of $V$ and $V^\dagger$ uses
$\cO(1/\delta)$ queries to the block-encoding of $H$ and
$\cO((a_H+1)/\delta)$ one- and two-qubit gates, where $a_H$ is the number of
ancilla qubits in that block-encoding.
\end{lemma}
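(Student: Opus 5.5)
We would define $V:=C\cdot(I_{\mathsf A}\otimes U)$, where $U=U_{1/10,\delta}$ acts on $\mathsf R$ and the system, and $C$ is the reversible "nonzero test" that flips $\mathsf A$ exactly when $\mathsf R\ne0$. Concretely, $C=X_{\mathsf A}\otimes(I-\proj0_{\mathsf R})\otimes I+I_{\mathsf A}\otimes\proj0_{\mathsf R}\otimes I$. It can be implemented as an $X$ on $\mathsf A$, followed by a multi-controlled NOT on $\mathsf A$ conditioned on $\mathsf R=0$.

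\textbf{Zero component.} Since $\mathsf A$ starts in $\ket0$, after $C$ the $\mathsf A=0$ branch is exactly the $\mathsf R=0$ branch of $U(\ket0_{\mathsf R}\ket\psi)$. Lemma~\ref{lem:spectral-filter} gives $(\bra0_{\mathsf R}\otimes I)U(\ket0_{\mathsf R}\otimes I)=f(H)$, and this yields \eqref{eq:spectral-test-zero-component} immediately.

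\textbf{Probabilities.} The state $\varphi_k$ is normalized, because $V$ is unitary. By \eqref{eq:spectral-test-zero-component}, $\norm{P_0\varphi_k}^2=\norm{f(H)\ket{\psi_k}}^2=|f(E_k)|^2$, so $p_k=1-|f(E_k)|^2$. With $\eta=1/10$, the filter bounds under \eqref{eq:normalized-promise} give $|f(E_0)|\ge9/10$, hence $p_0\le19/100$. They also give $|f(E_k)|\le1/10$ for $k\ge1$, hence $p_k\ge99/100$. Therefore $|p_0-1/2|\ge31/100$ and $|p_k-1/2|\ge49/100$, which gives the stated minimum. The lemma requires $0<\delta<1$, and this is ensured by Lemma~\ref{lem:hamiltonian-rescaling}.

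\textbf{Cost.} $V$ uses one application of $U$. By Lemma~\ref{lem:spectral-filter} with $\eta=1/10$, $U$ costs $\cO(1/\delta)$ block-encoding queries and $\cO((a_H+1)/\delta)$ gates. The test $C$ is a NOT controlled on $|\mathsf R|=a_H+\cO(1)$ qubits. By the multi-controlled-NOT construction already cited \cite[Corollary~7.4]{BBC95}, it costs $\cO(a_H+1)$ gates with one borrowed work qubit, and this is absorbed into the bound. $V^\dagger=(I\otimes U^\dagger)C^\dagger$ has the same counts. The only point needing care is that $C$ is reversible and leaves $\mathsf R$ unchanged, and this follows from its form. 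There is no real obstacle: the lemma is a direct bookkeeping consequence of Lemma~\ref{lem:spectral-filter}.
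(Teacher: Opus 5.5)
Your proposal is correct and follows the paper's proof essentially verbatim: the same $V=(N_{\ne0}\otimes I)(I_{\mathsf A}\otimes U_{1/10,\delta})$ with the same $X$-then-multi-controlled-NOT implementation, the same derivation of the zero component from the block of $U$, the same computation $p_k=1-|f(E_k)|^2$ with the resulting bounds, and the same cost accounting. Nothing is missing.
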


\begin{proof}
Define
\[
N_{\ne0}\ket a_{\mathsf A}\ket z_{\mathsf R}
:=\ket{a\mathbin\oplus\mathbf 1_{\{z\ne0\}}}_{\mathsf A}
  \ket z_{\mathsf R},
\qquad
V:=(N_{\ne0}\otimes I)(I_{\mathsf A}\otimes U).
\]
On inputs with $\mathsf A$ initialized to $\ket0$, the answer is $0$
exactly when $\mathsf R=0$.  Hence
\[
\begin{aligned}
P_0V(\ket0_{\mathsf A\mathsf R}\otimes\ket\psi)
&=\ket0_{\mathsf A\mathsf R}\otimes
  \bigl[(\bra0_{\mathsf R}\otimes I)U
  (\ket0_{\mathsf R}\otimes\ket\psi)\bigr]\\
&=\ket0_{\mathsf A\mathsf R}\otimes f(H)\ket\psi,
\end{aligned}
\]
where the second equality uses Lemma~\ref{lem:spectral-filter}.

Applying~\eqref{eq:spectral-test-zero-component} to $\ket{\psi_k}$ gives
\[
P_0\varphi_k
=\ket0_{\mathsf A\mathsf R}\otimes f(H)\ket{\psi_k}
=f(E_k)\ket0_{\mathsf A\mathsf R}\ket{\psi_k}.
\]
The state $\varphi_k$ is normalized and $P_0+P_1=I$, so
\[
p_k=1-\norm{P_0\varphi_k}^2=1-|f(E_k)|^2.
\]
Lemma~\ref{lem:spectral-filter} gives $|f(E_0)|\ge9/10$ and
$|f(E_k)|\le1/10$ for $k\ge1$.  Thus
\[
p_0\le1-(9/10)^2=19/100,
\qquad
p_k\ge1-(1/10)^2=99/100\quad(k\ge1).
\]
The smaller margin from $1/2$ is $1/2-19/100=31/100$.

Since $\eta=1/10$ is fixed, Lemma~\ref{lem:spectral-filter} shows that $U$ uses
$\cO(1/\delta)$ Hamiltonian queries and
$\cO((a_H+1)/\delta)$ one- and two-qubit gates.  Let $r$ be the number of
qubits in $\mathsf R$, so $r=a_H+\cO(1)$.  To implement $N_{\ne0}$, apply
$X$ to $\mathsf A$, then an $r$-controlled NOT targeting $\mathsf A$,
conjugating each control in $\mathsf R$ by $X$.  This controlled NOT can be
implemented with $\cO(r)$ one- and two-qubit gates, so $N_{\ne0}$ adds
$\cO(a_H+1)$ gates, absorbed in
$\cO((a_H+1)/\delta)$ because $\delta<1$.  Taking adjoints preserves both
bounds, which proves the stated costs for $V$ and $V^\dagger$.
\end{proof}

\subsection{Reflection about the test states}
\label{subsec:orthogonal-invariant-subspaces}

To apply Theorem~\ref{thm:purifier} to every $\varphi_k$, we need one oracle
whose restriction to each $\mathcal K(\varphi_k)$ is the reflection about
$\varphi_k$.  Define
\begin{equation}
\Pi:=V(\proj{0}_{\mathsf A\mathsf R}\otimes I)V^\dagger,
\qquad
O_{\rm ref}:=2\Pi-I
=V\bigl(2(\proj0_{\mathsf A\mathsf R}\otimes I)-I\bigr)V^\dagger.
\label{eq:spectral-test-subspaces}
\end{equation}

\begin{lemma}[Reflection oracle]
\label{lem:spectral-test-subspaces}
The spaces $\mathcal K(\varphi_k)$ are mutually orthogonal.  Each
$\mathcal K(\varphi_k)$ is
invariant under $P_0$, $P_1$, $\Pi$, and $O_{\rm ref}$, and
\begin{equation}
\Pi|_{\mathcal K(\varphi_k)}=\varphi_k\varphi_k^\dagger,
\qquad
O_{\rm ref}|_{\mathcal K(\varphi_k)}
=2\varphi_k\varphi_k^\dagger-I_{\mathcal K(\varphi_k)}.
\label{eq:reflecting-oracle-on-subspace}
\end{equation}
\end{lemma}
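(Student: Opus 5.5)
The plan is to write everything explicitly in the eigenbasis $\{\ket{\psi_k}\}$ and use that $V$ is unitary. Set $\chi_k:=\ket0_{\mathsf A\mathsf R}\otimes\ket{\psi_k}$, so $\varphi_k=V\chi_k$, and these are orthonormal. By Lemma~\ref{lem:spectral-test}, $P_0\varphi_k=f(E_k)\chi_k$, and hence $P_1\varphi_k=\varphi_k-f(E_k)\chi_k$. Consequently
\[
\mathcal K(\varphi_k)\subseteq\operatorname{span}\{\chi_k,\varphi_k\}=\operatorname{span}\{\chi_k,V\chi_k\}.
\]

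\emph{Orthogonality.} For $k\ne l$ we need $\chi_k\perp\chi_l$, $\varphi_k\perp\varphi_l$, $\chi_k\perp\varphi_l$, and $\varphi_k\perp\chi_l$. The first two follow from orthonormality of the $\psi_k$ and unitarity of $V$. For the cross terms, note that $\chi_k$ lies in the range of $P_0$. Therefore
\[
\langle\chi_k,\varphi_l\rangle=\langle\chi_k,P_0\varphi_l\rangle=f(E_l)\langle\chi_k,\chi_l\rangle=0.
\]
The other cross term is handled symmetrically. Thus the two-dimensional spans, and hence the spaces $\mathcal K(\varphi_k)$, are mutually orthogonal.

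\emph{Invariance and the restrictions.} We treat the generic case, where $P_0\varphi_k$ and $P_1\varphi_k$ are both nonzero. This holds whenever $0<|f(E_k)|<1$, which is guaranteed for $k=0$ since $|f(E_0)|\ge 9/10$. In this case $\mathcal K(\varphi_k)=\operatorname{span}\{\chi_k,\varphi_k\}$. It is invariant under $P_0,P_1$ by definition, because it is spanned by $P_0\varphi_k$ and $P_1\varphi_k$.

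For $\Pi$, observe that $\Pi\varphi_k=V(\proj0\otimes I)\chi_k=\varphi_k$. Moreover, $\Pi$ is the projector onto $\operatorname{span}\{V\chi_l\}_l=\operatorname{span}\{\varphi_l\}$. Since $\mathcal K(\varphi_k)\perp\varphi_l$ for $l\ne k$, for any $x\in\mathcal K(\varphi_k)$ we get $\Pi x=\langle\varphi_k,x\rangle\varphi_k$. This gives both invariance and $\Pi|_{\mathcal K(\varphi_k)}=\varphi_k\varphi_k^\dagger$, and the formula for $O_{\rm ref}=2\Pi-I$ follows immediately.

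\emph{Main obstacle: degenerate cases.} The delicate point is when $f(E_k)=0$, so that $P_0\varphi_k=0$, or when $|f(E_k)|=1$, so that $P_1\varphi_k=0$. In either case $\mathcal K(\varphi_k)=\operatorname{span}\{\varphi_k\}$ is one-dimensional. The arguments above still go through, and in fact more simply. Orthogonality uses only the two inner-product computations already done. Invariance under $P_0,P_1$ holds because $\varphi_k$ is then an eigenvector of each. The projector computation for $\Pi$ is unchanged. So the only thing to state carefully is that $\mathcal K(\varphi_k)\subseteq\operatorname{span}\{\chi_k,\varphi_k\}$ in all cases, and that the $\Pi$ argument never used the dimension.
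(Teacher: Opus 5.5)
Your proof is correct and follows essentially the paper's route. Both arguments rest on $P_0\varphi_k=f(E_k)\ket0_{\mathsf A\mathsf R}\ket{\psi_k}$, the orthonormality of the $\varphi_k$, and $\Pi=\sum_\ell\varphi_\ell\varphi_\ell^\dagger$. The only difference is bookkeeping: you use the containment $\mathcal K(\varphi_k)\subseteq\operatorname{span}\{\chi_k,\varphi_k\}$ where the paper directly computes $(P_b\varphi_k)^\dagger(P_{b'}\varphi_\ell)$.

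Your generic/degenerate case split is unnecessary, since $\operatorname{span}\{P_0\varphi_k,P_1\varphi_k\}$ is always invariant under $P_0,P_1$ and always contains $\varphi_k$. Also, $|f(E_0)|\ge 9/10$ does not rule out $|f(E_0)|=1$, though your degenerate-case paragraph covers that case anyway.
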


\begin{proof}
To show that the spaces $\mathcal K(\varphi_k)$ are orthogonal, we compare
their spanning vectors $P_0\varphi_k$ and $P_1\varphi_k$.
The vectors $\varphi_k$ are orthonormal because $V$ maps the orthonormal
inputs $\ket0_{\mathsf A\mathsf R}\ket{\psi_k}$ to them.  Using
$P_0^2=P_0$ and the $P_0$ action in
\eqref{eq:spectral-test-zero-component}, we compute, for every $k$
and $\ell$,
\begin{equation}
\begin{aligned}
\varphi_k^\dagger P_0\varphi_\ell
&=(P_0\varphi_k)^\dagger(P_0\varphi_\ell)\\
&=\bigl(\ket0_{\mathsf A\mathsf R}\otimes f(H)\ket{\psi_k}\bigr)^\dagger
  \bigl(\ket0_{\mathsf A\mathsf R}\otimes f(H)\ket{\psi_\ell}\bigr)\\
&=f(E_k)^*f(E_\ell)\langle\psi_k|\psi_\ell\rangle\\
&=|f(E_k)|^2\delta_{k\ell}
=(1-p_k)\delta_{k\ell},\\
\varphi_k^\dagger P_1\varphi_\ell
&=\varphi_k^\dagger\varphi_\ell
  -\varphi_k^\dagger P_0\varphi_\ell
=\delta_{k\ell} -(1-p_k)\delta_{k\ell}=p_k\delta_{k\ell}.
\end{aligned}
\label{eq:answer-component-inner-products}
\end{equation}
For $k\ne\ell$,
\[
(P_b\varphi_k)^\dagger(P_{b'}\varphi_\ell)
=\delta_{bb'}\varphi_k^\dagger P_b\varphi_\ell=0,
\qquad b,b'\in\{0,1\}.
\]
Since $P_0\varphi_k$ and $P_1\varphi_k$ span $\mathcal K(\varphi_k)$, this proves
$\mathcal K(\varphi_k)\perp\mathcal K(\varphi_\ell)$.

Using $I=\sum_\ell\proj{\psi_\ell}$ in the definition of $\Pi$,
\[
\begin{aligned}
\Pi
&=V\left(\proj0_{\mathsf A\mathsf R}\otimes
  \sum_\ell\proj{\psi_\ell}\right)V^\dagger\\
&=\sum_\ell
  V(\ket0_{\mathsf A\mathsf R}\ket{\psi_\ell})
  (\bra0_{\mathsf A\mathsf R}\bra{\psi_\ell})V^\dagger\\
&=\sum_\ell\varphi_\ell\varphi_\ell^\dagger.
\end{aligned}
\]
For $x\in\mathcal K(\varphi_k)$, the relation
$\mathcal K(\varphi_k)\perp\mathcal K(\varphi_\ell)$ gives
$\varphi_\ell^\dagger x=0$ when $\ell\ne k$, so
\[
\Pi x
=\sum_\ell\varphi_\ell(\varphi_\ell^\dagger x)
=\varphi_k(\varphi_k^\dagger x).
\]
Since $\varphi_k=P_0\varphi_k+P_1\varphi_k$ belongs to $\mathcal K(\varphi_k)$,
$\Pi x\in\mathcal K(\varphi_k)$ for every $x\in\mathcal K(\varphi_k)$, and
$\Pi|_{\mathcal K(\varphi_k)}=\varphi_k\varphi_k^\dagger$.  Moreover,
\[
P_1(P_0\varphi_k)=0,
\qquad
P_1(P_1\varphi_k)=P_1\varphi_k,
\]
so $P_1$ preserves $\mathcal K(\varphi_k)$.  Since $P_0=I-P_1$, the projector $P_0$
also preserves $\mathcal K(\varphi_k)$.  Finally,
$O_{\rm ref}=2\Pi-I$ preserves $\mathcal K(\varphi_k)$ and satisfies
\[
O_{\rm ref}|_{\mathcal K(\varphi_k)}
=2\varphi_k\varphi_k^\dagger-I_{\mathcal K(\varphi_k)}.
\]
\end{proof}

\subsection{Purifier on superpositions}
\label{subsec:coherent-purifier}

The preceding lemma gives one reflection oracle $O_{\rm ref}$ for all
$\varphi_k$.  We use it in the canonical transducer $S_{\rm pur}(O_{\rm ref})$
of Theorem~\ref{thm:purifier}, with $\mathcal Y$ the joint Hilbert space of
$\mathsf R\mathsf S$.  Its public states have $\mathsf T=-1$ and $\mathsf J=0$
by the definition of $\mathcal H_{\rm pur}$.  We must also check that the
transduction complexity remains constant on superpositions of the
$\varphi_k$.

\begin{lemma}[Purifier on superpositions]
\label{lem:purifier-direct-sum}
For every normalized vector $\sum_k a_k\varphi_k$,
\begin{equation}
\sum_k a_k\varphi_k
\transduce{S_{\rm pur}(O_{\rm ref})}
a_0\varphi_0-\sum_{k\ge1}a_k\varphi_k.
\label{eq:coherent-purifier-action}
\end{equation}
Moreover,
\begin{equation}
W\!\left(S_{\rm pur},O_{\rm ref},
\sum_k a_k\varphi_k\right)
=\cO(1).
\end{equation}
\end{lemma}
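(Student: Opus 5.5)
The plan is to build the catalyst for the superposition as the weighted sum of the per-eigenstate catalysts from Theorem~\ref{thm:purifier}, and to use the orthogonality of Lemma~\ref{lem:spectral-test-subspaces} so that both linearity and the norm bound go through. For each $k$, Lemma~\ref{lem:spectral-test-subspaces} shows $O_{\rm ref}|_{\mathcal K(\varphi_k)}=2\varphi_k\varphi_k^\dagger-I_{\mathcal K(\varphi_k)}$. Lemma~\ref{lem:spectral-test} gives $p(\varphi_k)=p_k\neq1/2$, with $p_0<1/2$ and $p_k>1/2$ for $k\ge1$. Theorem~\ref{thm:purifier} then yields a catalyst $\widetilde v_k\in\mathcal L_{\rm pur}$ with
\[
S_{\rm pur}(O_{\rm ref})(\varphi_k\oplus\widetilde v_k)=\pm\varphi_k\oplus\widetilde v_k,
\]
where the sign is $+$ for $k=0$ and $-$ for $k\ge1$. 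Moreover,
\[
\norm{\widetilde v_k}^2=\cO(1/|p_k-1/2|)=\cO(1),
\]
uniformly in $k$, since $|p_k-1/2|\ge31/100$.

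Next, set $\widetilde v:=\sum_k a_k\widetilde v_k$. Linearity of the unitary $S_{\rm pur}(O_{\rm ref})$ immediately gives the transduction~\eqref{eq:coherent-purifier-action} with catalyst $\widetilde v$. What remains is to show $\norm{\widetilde v}^2=\sum_k|a_k|^2\norm{\widetilde v_k}^2$, which is at most $\max_k\norm{\widetilde v_k}^2=\cO(1)$ because $\sum_k|a_k|^2=1$. This requires the catalysts $\widetilde v_k$ to be mutually orthogonal.

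The orthogonality step is the main obstacle, since Theorem~\ref{thm:purifier} only asserts a norm bound and says nothing about where its catalyst lives. I would open up its proof. There, the raw catalyst $v$ lies in
\[
\operatorname{span}\{\ket j_{\mathsf J}:j\ne0\}\otimes\mathcal K(\varphi_k)
\]
by \eqref{eq:raw-purifier-catalyst-space}. The clock catalyst \eqref{eq:purifier-clock-catalyst} adds the vectors $x_0=B_0x$ and $x_1=B_1\widehat O x_0$, where $x=\xi\oplus v$ and $\xi=\ket0_{\mathsf J}\varphi_k$.

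Each of $B_0$, $B_1$, $B_2$ and $\widehat O$ preserves $\mathcal J\otimes\mathcal K(\varphi_k)$. The reason is that $P_0$, $P_1$ and $O_{\rm ref}$ preserve $\mathcal K(\varphi_k)$ by Lemma~\ref{lem:spectral-test-subspaces}, and $\operatorname{INC}_{\mathsf J}$ acts only on $\mathsf J$. Hence every component of $\widetilde v_k$ lies in $\mathcal T\otimes\mathcal J\otimes\mathcal K(\varphi_k)$. These spaces are mutually orthogonal for distinct $k$, because the $\mathcal K(\varphi_k)$ are (Lemma~\ref{lem:spectral-test-subspaces}). This gives the orthogonality of the $\widetilde v_k$.

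If citing \cite[Theorem~3.7]{BJ26} for the location of $v$ feels insufficient, the fallback is the same invariance argument. Restrict the raw purifier to the invariant subspace $\mathcal J\otimes\mathcal K(\varphi_k)$. The catalyst produced there by the cited theorem then lies in that subspace, which is exactly what \eqref{eq:raw-purifier-catalyst-space} records.
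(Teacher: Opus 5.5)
Your proposal is correct and follows essentially the same route as the paper's proof. Like the paper, you place each per-eigenstate catalyst in $\mathcal T\otimes\mathcal J\otimes\mathcal K(\varphi_k)$ by opening up the proof of Theorem~\ref{thm:purifier} and using the invariance of $\mathcal K(\varphi_k)$ under $P_0$, $P_1$, $O_{\rm ref}$; you then get orthogonality from Lemma~\ref{lem:spectral-test-subspaces}, take $\sum_k a_k\widetilde v_k$ as the catalyst, and bound its squared norm by $\sum_k|a_k|^2\cO(1)$, where $\sum_k|a_k|^2=1$ holds because the $\varphi_k$ are orthonormal.
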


\begin{proof}
By Lemmas~\ref{lem:spectral-test-subspaces} and~\ref{lem:spectral-test},
\[
O_{\rm ref}|_{\mathcal K(\varphi_k)}
=2\varphi_k\varphi_k^\dagger-I_{\mathcal K(\varphi_k)},
\qquad
p_0<\frac12<p_k\quad(k\ge1).
\]
The reflection identity and probability bounds above allow us to apply
Theorem~\ref{thm:purifier} to each $\varphi_k$ with the same oracle
$O_{\rm ref}$.  To combine the resulting
transductions, we choose their catalysts for $S_{\rm pur}(O_{\rm ref})$ in
orthogonal spaces.

Fix $k$.  In the proof of Theorem~\ref{thm:purifier}, the input
$\xi=\ket0_{\mathsf J}\otimes\varphi_k$ and the catalyst $v$ of
$\widehat S_{\rm pur}(O_{\rm ref})$ both lie in
$\mathcal J\otimes\mathcal K(\varphi_k)$, by
\eqref{eq:raw-purifier-catalyst-space}.  Lemma~\ref{lem:spectral-test-subspaces}
shows that $P_0$, $P_1$, and $O_{\rm ref}$ preserve
$\mathcal K(\varphi_k)$.  Consequently $B_0$, $B_1$, and $\widehat O$
preserve $\mathcal J\otimes\mathcal K(\varphi_k)$, so the vectors
$x_0=B_0(\xi\oplus v)$ and $x_1=B_1\widehat O x_0$ lie there as well.
The clock construction~\eqref{eq:purifier-clock-catalyst} therefore gives a
catalyst for $S_{\rm pur}(O_{\rm ref})$ on input $\varphi_k$ of the form
\[
v_k=\ket{-1}_{\mathsf T}\otimes v
    +\ket0_{\mathsf T}\otimes x_0
    +\ket1_{\mathsf T}\otimes x_1
\in\mathcal T\otimes\mathcal J\otimes\mathcal K(\varphi_k).
\]
Substituting $p_0<1/2<p_k$ into the action of
Theorem~\ref{thm:purifier}, with the catalyst $v_k$ above, gives
\[
S_{\rm pur}(O_{\rm ref})
\bigl(\varphi_k\oplus v_k\bigr)
=
\begin{cases}
\varphi_k\oplus v_k,&k=0,\\
-\varphi_k\oplus v_k,&k\ge1,
\end{cases}
\]
and $\norm{v_k}^2=\cO(1/|p_k-1/2|)$.

For $k\ne\ell$, Lemma~\ref{lem:spectral-test-subspaces} gives
$\mathcal K(\varphi_k)\perp\mathcal K(\varphi_\ell)$, hence
$v_k\perp v_\ell$.  The oracle
$O_{\rm ref}$ is fixed, so the transductions for all $k$ use the same
unitary $S_{\rm pur}(O_{\rm ref})$.  For input
$\sum_k a_k\varphi_k$, set $v:=\sum_k a_kv_k$.  By linearity,
\begin{align*}
S_{\rm pur}(O_{\rm ref})
\left(\left(\sum_k a_k\varphi_k\right)\oplus v\right)
&=\sum_k a_k S_{\rm pur}(O_{\rm ref})(\varphi_k\oplus v_k)\\
&=\left(a_0\varphi_0-\sum_{k\ge1}a_k\varphi_k\right)\oplus v.
\end{align*}
Thus $v$ is a catalyst for~\eqref{eq:coherent-purifier-action}.  Since
$\varphi_k\in\mathcal K(\varphi_k)$ and these spaces are orthogonal, the
normalization of $\sum_k a_k\varphi_k$ gives $\sum_k|a_k|^2=1$.  The
squared catalyst norm therefore satisfies
\begin{align*}
W\!\left(S_{\rm pur},O_{\rm ref},\sum_k a_k\varphi_k\right)
=\norm{v}^2
&=\sum_k|a_k|^2\norm{v_k}^2\\
&=\cO\!\left(
\sum_k\frac{|a_k|^2}{|p_k-1/2|}
\right)=\cO(1),
\end{align*}
using $|p_k-1/2|\ge31/100$ in the last step.
\end{proof}

\subsection{Ground-state reflection transducer}
\label{subsec:ground-state-reflection}

We now compose the purifier with $V$ and $V^\dagger$ to obtain a canonical
transducer for $R_g$.  Write $O_V:=V\oplus V^\dagger$ so that a label selects $V$ or
$V^\dagger$ in each oracle query.

\begin{proposition}[Ground-state reflection]
\label{prop:ground-reflection}
There is a canonical transducer $S_g(O_V)$ such that, for every normalized
system state $\ket{\psi}$,
\begin{equation}
\ket0_{\mathsf A\mathsf R}\ket{\psi}
\transduce{S_g(O_V)}
\ket0_{\mathsf A\mathsf R}R_g\ket{\psi},
\qquad
W(S_g,O_V,\ket0_{\mathsf A\mathsf R}\ket{\psi})=\cO(1).
\label{eq:ground-reflection-complexity}
\end{equation}
\end{proposition}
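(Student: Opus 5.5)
The plan is to build $S_g$ as a sequential composition of three canonical transducers on the public space $\mathcal A\otimes\mathcal Y$: first one that applies $V$, then the purifier with its reflection oracle, then one that applies $V^\dagger$.

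For the outer two, apply Lemma~\ref{lem:circuit-to-transducer} to the one-query circuits $V=O_V$ (label $V$) and $V^\dagger=O_V$ (label $V^\dagger$). This gives canonical transducers $S_V(O_V)$ and $S_{V^\dagger}(O_V)$ with transduction actions $V$ and $V^\dagger$. Their complexity is at most $\norm{\xi}^2$ on every input $\xi$.

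For the middle transducer, take the purifier $S_{\rm pur}(O_{\rm ref})$ of Theorem~\ref{thm:purifier} and functionally compose it with a transducer for $O_{\rm ref}$. The circuit $O_{\rm ref}=V(2(\proj0_{\mathsf A\mathsf R}\otimes I)-I)V^\dagger$ makes two queries to $O_V$. Lemma~\ref{lem:circuit-to-transducer} therefore gives a canonical $S_{\rm ref}(O_V)$ with action $O_{\rm ref}$ and $W(S_{\rm ref},O_V,z)\le2\norm z^2$. By~\eqref{eq:transducer-tensor-identity}, the same bound holds for $I_{\mathcal E}\otimes S_{\rm ref}$. Proposition~\ref{prop:functional-composition} with $c=2$ then gives a canonical transducer $S_{\rm pur}\circ S_{\rm ref}$ with input oracle $O_V$. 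It has the same transduction action as $S_{\rm pur}(O_{\rm ref})$ on $\mathcal H_{\rm pur}$, and complexity at most $3\,W(S_{\rm pur},O_{\rm ref},\cdot)$. Because the clock and counter are omitted from public inputs, I identify all three public spaces with $\mathcal A\otimes\mathcal Y$ with $\mathsf T=-1$, $\mathsf J=0$ fixed, and the $V$ transducers leave those registers alone.

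Define $S_g:=S_{V^\dagger}*(S_{\rm pur}\circ S_{\rm ref})*S_V$ as in Proposition~\ref{prop:sequential-composition}. All three components share the oracle $O_V$ and are canonical, so $S_g$ is canonical.

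To verify the transduction and its cost, write $\ket\psi=\sum_k a_k\ket{\psi_k}$.
\begin{itemize}
\item The first step gives $\xi_2=V\ket0\ket\psi=\sum_ka_k\varphi_k$, which is normalized.
\item By Lemma~\ref{lem:purifier-direct-sum} and~\eqref{eq:functional-composition-action}, the middle step transduces $\xi_2$ into $\xi_3=a_0\varphi_0-\sum_{k\ge1}a_k\varphi_k$. Its complexity is $\cO(1)$, since Lemma~\ref{lem:purifier-direct-sum} bounds the purifier complexity by $\cO(1)$ and the functional composition multiplies this by at most $3$.
\item The last step gives $V^\dagger\xi_3=\ket0_{\mathsf A\mathsf R}\bigl(a_0\ket{\psi_0}-\sum_{k\ge1}a_k\ket{\psi_k}\bigr)=\ket0_{\mathsf A\mathsf R}R_g\ket\psi$.
\end{itemize}
Equation~\eqref{eq:sequential-composition} then sums $\norm{\xi_2}^2+\norm{\xi_3}^2=2$ with component complexities $1+\cO(1)+1$, giving $W=\cO(1)$.

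The main obstacle is bookkeeping rather than analysis. I need to check that the functional-composition hypothesis holds on the whole oracle subspace $\mathcal E\otimes\mathcal M$ and not only on the relevant vectors; the uniform two-query bound from Lemma~\ref{lem:circuit-to-transducer} handles this. I also need to check that the public spaces of the three pieces are aligned, with $\mathsf T,\mathsf J$ fixed at $-1,0$, so that sequential composition applies.
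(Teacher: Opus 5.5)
Your proposal is correct and matches the paper's proof essentially step for step. It uses the same decomposition $S_g=S_{V^\dagger}*(S_{\rm pur}\circ S_{\rm ref})*S_V$, the same use of Lemma~\ref{lem:circuit-to-transducer} together with the tensor identity to supply $c=2$ in functional composition, and the same final count $(1+1)+(1+\cO(1)+1)$; the two bookkeeping points you flag (the uniform bound on $\mathcal E\otimes\mathcal M$ and the public spaces aligned with $\mathsf T=-1$, $\mathsf J=0$) are handled the same way in the paper.
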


\begin{proof}
We first express $R_g$ using the purifier's transduction action, then
construct $S_g$ by functional and sequential composition.
Write an arbitrary system state as
$\ket\psi=\sum_k a_k\ket{\psi_k}$, where $\sum_k|a_k|^2=1$.
By linearity of $V$ and~\eqref{eq:spectral-test-action},
\[
V(\ket0_{\mathsf A\mathsf R}\otimes\ket\psi)
=\sum_k a_kV(\ket0_{\mathsf A\mathsf R}\otimes\ket{\psi_k})
=\sum_k a_k\varphi_k.
\]
By Lemma~\ref{lem:purifier-direct-sum}, $S_{\rm pur}(O_{\rm ref})$ transduces
$\sum_k a_k\varphi_k$ into
$a_0\varphi_0-\sum_{k\ge1}a_k\varphi_k$.  Applying $V^\dagger$ and using
\eqref{eq:spectral-test-action},
\[
V^\dagger\left(a_0\varphi_0-\sum_{k\ge1}a_k\varphi_k\right)
=\ket0_{\mathsf A\mathsf R}
 \left(a_0\ket{\psi_0}-\sum_{k\ge1}a_k\ket{\psi_k}\right)
=\ket0_{\mathsf A\mathsf R}R_g\ket\psi.
\]
Together, these three steps give
\begin{equation}
\ket0_{\mathsf A\mathsf R}\ket\psi
\xrightarrow{V}\sum_k a_k\varphi_k
\transduce{S_{\rm pur}(O_{\rm ref})}
a_0\varphi_0-\sum_{k\ge1}a_k\varphi_k
\xrightarrow{V^\dagger}
\ket0_{\mathsf A\mathsf R}R_g\ket\psi.
\label{eq:ground-reflection-clean-action}
\end{equation}

We now construct the canonical transducer $S_g$ for the transformation
in~\eqref{eq:ground-reflection-clean-action} and bound its transduction
complexity.
Let $\mathcal M=\mathcal A\otimes\mathcal Y$ be the space on which
$O_{\rm ref}$ and $V$ act.
By~\eqref{eq:spectral-test-subspaces}, $O_{\rm ref}$ uses one $V$ and one
$V^\dagger$, hence two queries to $O_V$.  Use
Lemma~\ref{lem:circuit-to-transducer} to obtain a canonical transducer
$S_{\rm ref}$ for the circuit implementing $O_{\rm ref}$.  Similarly, let
$S_V$ and $S_{V^\dagger}$ be the transducers for the circuits that apply $V$ and
$V^\dagger$ with one query to $O_V$.  For every $x\in\mathcal M$, the conversion
gives
\begin{align*}
W(S_{\rm ref},O_V,x)&\le2\norm{x}^2,\\
W(S_V,O_V,x)&\le\norm{x}^2,\\
W(S_{V^\dagger},O_V,x)&\le\norm{x}^2.
\end{align*}
Define
\begin{equation}
S_g:=S_{V^\dagger}*(S_{\rm pur}\circ S_{\rm ref})*S_V.
\label{eq:ground-reflection-transducer}
\end{equation}
The transduction actions of $S_{\rm ref}(O_V)$, $S_V$, and $S_{V^\dagger}$
are $O_{\rm ref}$, $V$, and $V^\dagger$, respectively.  Functional
composition replaces the $O_{\rm ref}$ call in $S_{\rm pur}$ with
$S_{\rm ref}$, while sequential composition combines the resulting action
with $V$ and $V^\dagger$.  Propositions~\ref{prop:functional-composition}
and~\ref{prop:sequential-composition} show that $S_g$ is canonical and
has the action in~\eqref{eq:ground-reflection-clean-action}.  Thus it transduces
$\ket0_{\mathsf A\mathsf R}\ket\psi$ into
$\ket0_{\mathsf A\mathsf R}R_g\ket\psi$.

It remains to bound the transduction complexity of $S_g$ on
$\ket0_{\mathsf A\mathsf R}\ket\psi$.  We first bound the middle factor
$S_{\rm pur}\circ S_{\rm ref}$.  Lemma~\ref{lem:purifier-direct-sum} gives
\[
W\!\left(S_{\rm pur},O_{\rm ref},\sum_k a_k\varphi_k\right)
=\cO(1).
\]
By~\eqref{eq:purifier-oracle-label-space}, the purifier calls
$O_{\rm ref}$ on $\mathcal E\otimes\mathcal M$, leaving $\mathcal E$
unchanged.  The bound $W(S_{\rm ref},O_V,x)\le2\norm{x}^2$ and
\eqref{eq:transducer-tensor-identity} imply, for every
$z\in\mathcal E\otimes\mathcal M$,
\[
W(I_{\mathcal E}\otimes S_{\rm ref},O_V,z)
\le2\norm{z}^2.
\]
Proposition~\ref{prop:functional-composition} therefore yields
\[
W\!\left(S_{\rm pur}\circ S_{\rm ref},O_V,
\sum_k a_k\varphi_k\right)
\le3W\!\left(S_{\rm pur},O_{\rm ref},\sum_k a_k\varphi_k\right)
=\cO(1).
\]
The two intermediate public states in the sequential composition defining
$S_g$ in~\eqref{eq:ground-reflection-transducer} are
$\sum_k a_k\varphi_k$ and $a_0\varphi_0-\sum_{k\ge1}a_k\varphi_k$.
Both are normalized because the $\varphi_k$ are orthonormal.
By~\eqref{eq:sequential-composition}, their squared norms add $1+1$ to
the transduction complexity, while the three component transduction
complexities are bounded by $1$, $\cO(1)$, and $1$, respectively.  Hence
\[
W(S_g,O_V,\ket0_{\mathsf A\mathsf R}\ket\psi)
\le(1+1)+(1+\cO(1)+1)
=\cO(1),
\]
which proves~\eqref{eq:ground-reflection-complexity}.
\end{proof}

\section{Ground-state overlap amplification}
\label{sec:constant-precision-amplification}

We now use the ground-state reflection transducer from the preceding
section for amplitude amplification.

\subsection{Randomized Grover iterations}
\label{subsec:ideal-randomized-amplification}

Amplitude amplification uses $R_g$ and the reflection about
$\ket\phi$~\cite{BHMT02}.
Set
\begin{equation}
R_\phi:=2\proj\phi-I
=U_I(2\proj0-I)U_I^\dagger,
\qquad
G:=R_\phi R_g.
\label{eq:ideal-grover-iterate}
\end{equation}
The number of iterations needed to amplify the ground-state component depends
on the initial overlap
$\beta=|\langle\psi_0|\phi\rangle|$, but the algorithm knows only
$\beta\ge\gamma$.
We therefore choose the count at random from a range of
size $\cO(1/\gamma)$~\cite{BBHT98,BHMT02}.

To analyze the randomized procedure, we first calculate $G^j\ket\phi$.
Choose the global phase
of $\ket{\psi_0}$ so that $\langle\psi_0|\phi\rangle=\beta$.  For
$\beta<1$, put
\[
\ket{\phi_\perp}
:=\frac{\ket\phi-\beta\ket{\psi_0}}{\sqrt{1-\beta^2}},
\qquad
\theta=\arcsin\beta.
\]
Then $\ket{\phi_\perp}$ is normalized and orthogonal to $\ket{\psi_0}$, and
\[
\ket\phi
=\sin\theta\ket{\psi_0}+\cos\theta\ket{\phi_\perp}.
\]
Both reflections in~\eqref{eq:ideal-grover-iterate} preserve
$\operatorname{span}\{\ket{\psi_0},\ket{\phi_\perp}\}$.  Multiplying their
matrices in the ordered basis $(\ket{\psi_0},\ket{\phi_\perp})$ gives
\[
R_g=\begin{pmatrix}1&0\\0&-1\end{pmatrix},
\qquad
G=-\begin{pmatrix}
\cos(2\theta)&\sin(2\theta)\\
-\sin(2\theta)&\cos(2\theta)
\end{pmatrix}.
\]
Applying the matrix for $G$ $j$ times to
$(\sin\theta,\cos\theta)^{\mathsf T}$ gives
\begin{equation}
G^j\ket\phi
=(-1)^j\left[
\sin((2j+1)\theta)\ket{\psi_0}
+\cos((2j+1)\theta)\ket{\phi_\perp}
\right].
\label{eq:ideal-grover-rotation}
\end{equation}

Take $M:=\lceil2/\gamma\rceil$.  The ideal procedure outputs $\ket\phi$
with probability $1/2$; otherwise it samples $j$ uniformly from
$\{0,\ldots,M-1\}$ and outputs $G^j\ket\phi$.  Its average output state is
\begin{equation}
\rho_{\mathrm{ideal}}
:=\frac12\proj\phi
+\frac1{2M}\sum_{j=0}^{M-1}G^j\proj{\phi}(G^{\dagger})^{j}.
\label{eq:ideal-amplified-mixture}
\end{equation}
The following lemma shows that $\rho_{\mathrm{ideal}}$ has constant overlap
with $\ket{\psi_0}$ for every $\beta\ge\gamma$.

\begin{lemma}[Average over Grover iterations]
\label{lem:ideal-averaged-grover-state}
The state $\rho_{\mathrm{ideal}}$ satisfies
\[
\bra{\psi_0}\rho_{\mathrm{ideal}}\ket{\psi_0}\ge\frac3{16}.
\]
\end{lemma}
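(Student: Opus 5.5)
By \eqref{eq:ideal-amplified-mixture} and \eqref{eq:ideal-grover-rotation}, the overlap equals
\[
\bra{\psi_0}\rho_{\mathrm{ideal}}\ket{\psi_0}
=\frac12\sin^2\theta+\frac1{2M}\sum_{j=0}^{M-1}\sin^2((2j+1)\theta).
\]
The case $\beta=1$ is immediate, since then $\ket\phi=\ket{\psi_0}$ up to phase and every term equals one. Assume $\beta<1$, so that $0<\theta<\pi/2$. The identity $\sin^2x=(1-\cos 2x)/2$ turns the average into $\tfrac12-\tfrac1{2M}\sum_{j}\cos((4j+2)\theta)$. The standard BBHT trigonometric sum gives
\[
\sum_{j=0}^{M-1}\cos((4j+2)\theta)=\frac{\sin(4M\theta)}{2\sin(2\theta)}.
\]
Hence
\[
\frac1M\sum_{j=0}^{M-1}\sin^2((2j+1)\theta)
=\frac12-\frac{\sin(4M\theta)}{4M\sin(2\theta)}.
\]
We then split into two cases according to the size of $\theta$, and use the two halves of the mixture to cover complementary regimes.

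\emph{Case 1: $\sin(2\theta)\ge 1/M$.} The error term satisfies $|\sin(4M\theta)|/(4M\sin2\theta)\le 1/4$. So the Grover average is at least $1/4$, and its contribution after the factor $1/2$ is at least $1/8$. We still need another $1/16$, which must come from the term $\tfrac12\sin^2\theta$ or from a sharper bound. Here is the obstacle: $1/8<3/16$, so this crude bound alone is not enough. I would refine the bound by splitting further. If $\sin^2\theta\ge 1/8$, the first term contributes at least $1/16$, and we are done. If $\sin^2\theta<1/8$, then $\theta$ is small, and $M\ge 2/\gamma\ge 2/\beta$. In that regime $4M\sin 2\theta\ge 8M\sin\theta\cos\theta\ge 16\cos\theta$, which is bounded below by a constant near $16\sqrt{7/8}\approx 15$. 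The error term is then at most about $1/15$, so the average is at least $1/2-1/15\ge 3/8$. After halving, this gives at least $3/16$.

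\emph{Case 2: $\sin 2\theta<1/M$.} Since $M\ge 2/\gamma$, this forces either $\theta$ tiny or $\theta$ near $\pi/2$. The tiny case is impossible, because $\sin 2\theta\approx 2\beta\ge 2\gamma\ge 4/M\cdot(\ldots)$ contradicts the hypothesis. Precisely, for $\sin\theta\le 1/\sqrt2$ we have $\sin2\theta\ge\sqrt2\sin\theta\ge\sqrt2\gamma>1/M$ whenever $M\ge 2/\gamma$. So $\theta>\pi/4$, and the first term alone gives $\tfrac12\sin^2\theta\ge 1/4\ge 3/16$.

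Cleaning up the case boundaries is the main thing to check. The subtle step is the regime of moderately small $\theta$ with $\sin^2\theta<1/8$. There the $\ket\phi$ term is useless, and the constant $3/16$ must come entirely from bounding $\sin(4M\theta)/(4M\sin2\theta)$ by $1/8$ using $M\ge 2/\gamma$. I would first verify the inequality $4M\sin2\theta\ge 8$ in that regime: it follows from $M\sin\theta\ge 2$ and $\cos\theta\ge\sqrt{7/8}$, which give $4M\sin2\theta=8M\sin\theta\cos\theta\ge 16\sqrt{7/8}>8$. This yields the Grover average $\ge 1/2-1/8=3/8$, and hence the claimed bound $3/16$.
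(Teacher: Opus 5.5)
Your proof is correct and follows the paper's argument. You use the same closed form $\tfrac12-\sin(4M\theta)/(4M\sin2\theta)$, bound the error term via $4M\sin2\theta=8M\sin\theta\cos\theta$ with $M\sin\theta\ge2$ when $\theta$ is small, and use the $\tfrac12\proj\phi$ term when $\theta$ is large. The paper splits only once, at $\theta=\pi/4$: there $\cos\theta\ge1/\sqrt2$ already gives $4M\sin2\theta\ge8\sqrt2>8$, so your $\sin2\theta\ge1/M$ split and your $\sin^2\theta\ge1/8$ sub-case are unnecessary.
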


\begin{proof}
If $\beta=1$, then $\rho_{\mathrm{ideal}}=\proj{\psi_0}$.  Suppose $\beta<1$
and use the angle $\theta$ above.  Define
\[
p_M(\theta)
:=\frac1M\sum_{j=0}^{M-1}\sin^2((2j+1)\theta).
\]
Using the rotation~\eqref{eq:ideal-grover-rotation} in the mixture
\eqref{eq:ideal-amplified-mixture}, the ground-state overlap of
$\rho_{\mathrm{ideal}}$ is
\[
\bra{\psi_0}\rho_{\mathrm{ideal}}\ket{\psi_0}
=\frac12\sin^2\theta+\frac12p_M(\theta).
\]
Using $\sin^2x=(1-\cos(2x))/2$,
\begin{align}
p_M(\theta)
&=\frac12-\frac1{2M}\sum_{j=0}^{M-1}
\cos((4j+2)\theta)
\notag\\
&=\frac12-\frac{\sin(4M\theta)}{4M\sin(2\theta)}.
\label{eq:averaged-grover-probability}
\end{align}
The second equality follows from
\[
\sum_{j=0}^{M-1}\cos((4j+2)\theta)
=\Re\!\left(e^{2i\theta}\sum_{j=0}^{M-1}e^{4ij\theta}\right)=\Re\!\left(e^{2i\theta}
  \frac{1-e^{4iM\theta}}{1-e^{4i\theta}}\right)
=\frac{\sin(4M\theta)}{2\sin(2\theta)}.
\]
For $0<\theta\le\pi/4$,
\[
\sin(2\theta)
=2\sin\theta\cos\theta
\ge\sqrt2\,\gamma,
\]
because $\sin\theta=\beta\ge\gamma$ and
$\cos\theta\ge1/\sqrt2$.  Since $M\ge2/\gamma$,
\eqref{eq:averaged-grover-probability} implies
\[
p_M(\theta)
\ge\frac12-\frac1{4M\sin(2\theta)}
\ge\frac12-\frac1{4M\sqrt2\gamma}
\ge\frac12-\frac1{8\sqrt2}
>\frac38.
\]
For $0<\theta\le\pi/4$, this gives
$\bra{\psi_0}\rho_{\mathrm{ideal}}\ket{\psi_0}\ge p_M(\theta)/2>3/16$.

For $\pi/4<\theta<\pi/2$, $p_M(\theta)\ge0$ and $\sin^2\theta>1/2$, so
$\bra{\psi_0}\rho_{\mathrm{ideal}}\ket{\psi_0}\ge\sin^2\theta/2>1/4$.
The two cases prove the lemma.
\end{proof}

\subsection{Transducer implementation}
\label{subsec:implementing-amplification}

To prepare the states $G^j\ket\phi$ in~\eqref{eq:ideal-amplified-mixture},
we compose transducers for $R_g$ and $R_\phi$ and then approximate the
resulting transduction action by a circuit.
Set $O_I:=U_I\oplus U_I^\dagger$, and let $S_\phi(O_I)$ be the canonical
transducer obtained from the circuit for $I_{\mathsf A\mathsf R}\otimes R_\phi$
by Lemma~\ref{lem:circuit-to-transducer}.
Both $S_\phi$ and $S_g$ have the joint Hilbert space of
$\mathsf A\mathsf R\mathsf S$ as their public space.

\begin{proposition}[Transducer for a fixed iteration count]
\label{prop:fixed-branch-transducer}
For each integer $j\ge0$, let $S_0$ be the identity transducer and, for
$j\ge1$, define
\begin{equation}
S_j:=(S_\phi*S_g)^j.
\label{eq:amplification-branch-transducer}
\end{equation}
Here $(S_\phi*S_g)^j$ denotes the sequential composition of $j$ copies of
$S_\phi*S_g$.  Then
\begin{equation}
\ket0_{\mathsf A\mathsf R}\ket\phi
\transduce{S_j(O_V\oplus O_I)}
\ket0_{\mathsf A\mathsf R}G^j\ket\phi,
\label{eq:amplification-branch-action}
\end{equation}
and
\begin{equation}
W(S_j,O_V\oplus O_I,\ket0_{\mathsf A\mathsf R}\ket\phi)=\cO(j).
\label{eq:fixed-branch-costs}
\end{equation}
\end{proposition}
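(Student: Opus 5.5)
We prove the statement by induction on $j$, applying Proposition~\ref{prop:sequential-composition} to the $2j$ components $S_g,S_\phi,S_g,S_\phi,\ldots$ in that order. The case $j=0$ is trivial: the identity transducer has catalyst $0$ and complexity $0$.

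Before composing, we place the components on a common footing. Both $S_g(O_V)$ and $S_\phi(O_I)$ must be viewed as canonical transducers with the single input oracle $O_V\oplus O_I$. Each one queries only its own summand, so we regard it as a canonical transducer for $O_V\oplus O_I$ whose oracle-call subspace lies inside the relevant summand's label space. The catalyst and transduction complexity are unchanged by this relabeling.

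Next we record the transduction of each component. For every normalized system state $\ket\psi$, Proposition~\ref{prop:ground-reflection} gives
\[
\ket0_{\mathsf A\mathsf R}\ket\psi\transduce{S_g}\ket0_{\mathsf A\mathsf R}R_g\ket\psi
\]
with complexity $\cO(1)$. Lemma~\ref{lem:circuit-to-transducer}, applied to the two-query circuit for $I_{\mathsf A\mathsf R}\otimes R_\phi$, gives
\[
\ket0_{\mathsf A\mathsf R}\ket\psi\transduce{S_\phi}\ket0_{\mathsf A\mathsf R}R_\phi\ket\psi
\]
with complexity at most $2$. In fact Lemma~\ref{lem:circuit-to-transducer} gives this for every public input, but we only need the inputs with $\mathsf A\mathsf R$ in $\ket0$.

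Now define the intermediate public states. Set $\xi_1=\ket0_{\mathsf A\mathsf R}\ket\phi$. For $t=1,\ldots,j$, set
\[
\xi_{2t}=\ket0_{\mathsf A\mathsf R}R_gG^{t-1}\ket\phi,
\qquad
\xi_{2t+1}=\ket0_{\mathsf A\mathsf R}G^{t}\ket\phi.
\]
Every one of these states has the form $\ket0_{\mathsf A\mathsf R}\ket\psi$ with $\ket\psi$ normalized, because $R_g$ and $R_\phi$ are unitary. This is exactly the point that lets us use the state-independent bounds above at every stage. Note that Proposition~\ref{prop:ground-reflection} applies only to inputs with $\ket0_{\mathsf A\mathsf R}$; the ancillas return to zero after each $S_g$, so the invariant is preserved.

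With these states, Proposition~\ref{prop:sequential-composition} gives transduction action $G^j$ restricted to this chain, which is \eqref{eq:amplification-branch-action}. It also gives
\[
W(S_j,O_V\oplus O_I,\xi_1)=\sum_{t=2}^{2j}\norm{\xi_t}^2+j\,\cO(1)+2j=(2j-1)+\cO(j)=\cO(j),
\]
which is \eqref{eq:fixed-branch-costs}.

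There is one subtlety. Nesting the binary compositions $(S_\phi*S_g)^j$ and the flat $2j$-fold composition yield slightly different transducers. One either reads $(S_\phi*S_g)^j$ as the flat composition, or applies Proposition~\ref{prop:sequential-composition} twice: first for each $S_\phi*S_g$, giving complexity at most $1+\cO(1)+2$, and then for the $j$ copies, which adds $j-1$ more. Either route gives $\cO(j)$.

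The main care point is the common-oracle issue. Proposition~\ref{prop:sequential-composition} requires a shared input oracle, and we handle it with the relabeling to $O_V\oplus O_I$ described at the start. The rest is bookkeeping.
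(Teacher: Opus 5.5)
Your proposal is correct and follows essentially the same route as the paper's proof. Both feed the chain of normalized public states $\ket0_{\mathsf A\mathsf R}R_gG^{t-1}\ket\phi$ and $\ket0_{\mathsf A\mathsf R}G^{t}\ket\phi$, each with $\mathsf A\mathsf R$ returned to zero, into Proposition~\ref{prop:sequential-composition} to obtain the bound $(2j-1)+j(2+\cO(1))=\cO(j)$; your remarks on the shared oracle $O_V\oplus O_I$ and on flat versus nested composition only make explicit bookkeeping the paper leaves implicit, and no induction is actually used despite your opening phrase.
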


\begin{proof}
The circuit $R_\phi=U_I(2\proj0-I)U_I^\dagger$ acts on $\mathsf S$ and
uses two queries to $O_I$.
Lemma~\ref{lem:circuit-to-transducer} therefore gives $S_\phi$ the
transduction action $I_{\mathsf A\mathsf R}\otimes R_\phi$ and transduction
complexity at most $2$ on every normalized public input.

For $j\ge1$, we first verify the action of $S_j$.  Its public input has
$\mathsf A\mathsf R$ in $\ket0_{\mathsf A\mathsf R}$.  For
$1\le t\le j$, Proposition~\ref{prop:ground-reflection} and the action of
$S_\phi$ give
\[
\ket0_{\mathsf A\mathsf R}G^{t-1}\ket\phi
\transduce{S_g}
\ket0_{\mathsf A\mathsf R}R_gG^{t-1}\ket\phi
\transduce{S_\phi}
\ket0_{\mathsf A\mathsf R}G^t\ket\phi.
\]
Thus every public input to $S_g$ in this transduction sequence has
$\mathsf A\mathsf R$ in $\ket0_{\mathsf A\mathsf R}$, and sequential
composition gives
\eqref{eq:amplification-branch-action}.

We next bound the transduction complexity on
$\ket0_{\mathsf A\mathsf R}\ket\phi$.  For each
$1\le t\le j$, Proposition~\ref{prop:ground-reflection} and
Lemma~\ref{lem:circuit-to-transducer} give
\[
\begin{aligned}
W(S_g,O_V,\ket0_{\mathsf A\mathsf R}G^{t-1}\ket\phi)&=\cO(1),\\
W(S_\phi,O_I,\ket0_{\mathsf A\mathsf R}R_gG^{t-1}\ket\phi)&\le2.
\end{aligned}
\]
All public states in the sequence are normalized.  The first sum in
\eqref{eq:sequential-composition} is therefore $2j-1$; its second sum has
$j$ terms from $S_\phi$ and $j$ from $S_g$.  Hence
\[
W(S_j,O_V\oplus O_I,\ket0_{\mathsf A\mathsf R}\ket\phi)
\le(2j-1)+j\bigl(2+\cO(1)\bigr)=\cO(j),
\]
as claimed in~\eqref{eq:fixed-branch-costs}.
\end{proof}

We now turn the transduction in~\eqref{eq:amplification-branch-action}
into a finite circuit for each $j<M$.
\begin{proposition}[Circuit for a fixed iteration count]
\label{prop:fixed-branch-circuit}
For each $0\le j<M$, there is a circuit $C_j$.  On input
$\ket0_{\mathsf A\mathsf R}\ket\phi$, its reduced output on $\mathsf S$,
denoted $\rho_j$, satisfies
\begin{equation}
\Dtr\!\left(\rho_j,G^j\proj{\phi}(G^{\dagger})^j\right)
\le\frac1{32}.
\label{eq:amplification-branch-error}
\end{equation}
$C_j$ uses $\cO(j/\delta)$ queries to $U_H$ and $U_H^\dagger$ and
$\cO(j)$ queries to $U_I$ and $U_I^\dagger$.  If $\mathsf S$ has $n$
qubits, the number of additional one- and two-qubit gates in $C_j$ is
\begin{equation}
\cO\!\left(
\frac{(a_H+1)j}{\delta}
+j(n+a_H+\log M)
\right).
\label{eq:fixed-branch-gate-cost}
\end{equation}
\end{proposition}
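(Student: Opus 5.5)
For $j=0$ we let $C_0$ do nothing: the input already contains $\ket\phi$ on $\mathsf S$, so $\rho_0=\proj\phi$, the error is zero, and no queries are used. For $1\le j<M$ we apply the circuit implementation theorem, Lemma~\ref{lem:circuit-implementation}, to the transducer $S_j(O_V\oplus O_I)$ of Proposition~\ref{prop:fixed-branch-transducer}. We take the input $\xi=\ket0_{\mathsf A\mathsf R}\ket\phi$, the complexity bound $W_0=cj$ from \eqref{eq:fixed-branch-costs}, and a constant Euclidean accuracy $\varepsilon_{\rm c}$ fixed below. The lemma then produces, from $\xi\oplus0$, a state $\ket{\Psi}$ within Euclidean distance $\varepsilon_{\rm c}$ of $\ket0_{\mathsf A\mathsf R}G^j\ket\phi\oplus0$. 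It does so with $K=\cO(1+j/\varepsilon_{\rm c}^2)=\cO(j)$ controlled executions of $S_j$.

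To bound the error, we note that for pure states the trace distance is at most the Euclidean distance. Partial trace onto $\mathsf S$ does not increase trace distance. The ancilla and private-space labels (clock, counter, direct-sum flags) are traced out along with everything else. Hence
\[
\Dtr\!\left(\rho_j,G^j\proj\phi(G^\dagger)^j\right)\le\varepsilon_{\rm c}.
\]
Here the target is a product state with $\mathsf S$ in $G^j\ket\phi$, so its reduced state is exactly $G^j\proj\phi(G^\dagger)^j$. Choosing $\varepsilon_{\rm c}=1/32$ gives \eqref{eq:amplification-branch-error}. In the actual circuit the $\mathsf S$ register is embedded in the larger space, and the direct-sum structure $\mathcal H\oplus\mathcal L$ is realized with flag qubits. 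I would state the reduced state after tracing out all these extra registers.

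\textbf{Query counting.} $S_j$ is canonical, and each execution makes one controlled call to the oracle $O_V\oplus O_I$. I need to check this from the sequential-composition construction \eqref{eq:sequential-composition-construction}: the parallel implementation applies $\bigoplus_t\widetilde O_t$ once. Each such call is either one call to $V$ or $V^\dagger$, costing $\cO(1/\delta)$ Hamiltonian queries by Lemma~\ref{lem:spectral-test}, or one call to $U_I$ or $U_I^\dagger$. Controlled versions count the same. So $K=\cO(j)$ executions yield $\cO(j/\delta)$ queries to $U_H,U_H^\dagger$ and $\cO(j)$ queries to $U_I,U_I^\dagger$. Lemma~\ref{lem:hamiltonian-rescaling} charges one original query per normalized block-encoding query.

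\textbf{Gate counting.} This is the main bookkeeping obstacle. I would bound $T(S_j)$ using \eqref{eq:sequential-composition-gate-cost}: it is $\cO(\log j+T(\bigoplus_t S_t^\circ))$, where the $2j$ components are copies of $S_g^\circ$ and $S_\phi^\circ$.
\begin{itemize}
\item Inside $S_g$, the purifier must use the finite counter $S^{[D]}_{\rm pur}$. By Lemma~\ref{lem:finite-counter}, with $K$ total executions and the counter changing by at most one per step, choosing $D$ a power of two of order $K=\cO(M)$ makes the finite-counter implementation agree with the infinite one on the run. This agreement is the point I would verify most carefully, adapting \cite[Proposition~4.3]{BJ26}. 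It contributes $\cO(\log M)$ gates.
\item The converted circuits for $V$, $V^\dagger$, $O_{\rm ref}$ and $R_\phi$ contribute $\cO(1)$ gates beyond the oracle calls, plus $\cO(n+a_H)$ for the reflections $2\proj0-I$ on $\mathsf A\mathsf R$ and on $\mathsf S$.
\end{itemize}
Because the direct sum is controlled by the index register, $T(\bigoplus_t S_t^\circ)$ should be $\cO(n+a_H+\log M)$ rather than $j$ times that. Multiplying by $K=\cO(j)$, and adding the $\cO((a_H+1)/\delta)$ gates per $V$ call from Lemma~\ref{lem:spectral-test}, together with the $\cO(K)$ overhead of Lemma~\ref{lem:circuit-implementation}, gives \eqref{eq:fixed-branch-gate-cost}.
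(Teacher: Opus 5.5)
Your proposal follows the paper's proof essentially step by step. You apply Lemma~\ref{lem:circuit-implementation} to $S_j$ with $\varepsilon_{\rm c}=1/32$ and $K=\cO(j)$ executions, and pass from Euclidean to trace distance before tracing out ancillas. You charge one controlled call each to $O_V$ and $O_I$ per execution. You count gates using the index-controlled direct sum of identical copies of $S_g^\circ$ and $S_\phi^\circ$, together with the finite counter $S_{\rm pur}^{[D]}$.

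The step you only sketch, counter truncation, is where the paper is most careful. It checks that one execution of $S_j$ applies each purifier work unitary once on its own labelled space. It also checks that the public-space permutations, the oracle controls, $I_{\mathcal E}\otimes S_{\rm ref}^\circ$, and the extra gates of the implementation circuit do not increase $|r|$, so $|r|\le K$ throughout. It then takes $D=2^{\lceil\log_2(2K+1)\rceil}>2K$ so that the tests for $r=0$ are unaffected by reduction modulo $D$.
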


\begin{proof}
\emph{Circuit error.}
For $j\ge1$, the transduction complexity of $S_j$ on
$\ket0_{\mathsf A\mathsf R}\ket\phi$ is $\cO(j)$ by
\eqref{eq:fixed-branch-costs}.  Lemma~\ref{lem:circuit-implementation},
applied with error $1/32$, gives a circuit with $K=\cO(j)$ controlled
executions of $S_j$.  Its output is within Euclidean distance $1/32$
of $\ket0_{\mathsf A\mathsf R}G^j\ket\phi$.  Since pure-state trace
distance is at most Euclidean distance and tracing out ancillas cannot
increase trace distance, the reduced system output satisfies
\eqref{eq:amplification-branch-error}.

\emph{Counter truncation.}
The purifier uses the infinite-dimensional space
$\mathcal J=\ell^2(\mathbb Z)$.  To obtain a finite circuit, we show that
replacing every occurrence of $\mathcal J$ by $\mathbb C^D$, with
$D=\cO(M)$, preserves the circuit output.
We first locate the counter spaces in $S_j=(S_\phi*S_g)^j$.
The circuit-to-transducer constructions for $S_\phi$, $S_V$,
$S_{V^\dagger}$, and $S_{\rm ref}$ introduce no infinite-dimensional
counter.  Thus, in each $S_g$ in~\eqref{eq:ground-reflection-transducer},
only the domain of $S_{\rm pur}\circ S_{\rm ref}$ contains a $\mathcal J$
factor.
Let $\mathcal M$ be the joint Hilbert space of $\mathsf A\mathsf R\mathsf S$,
and write $\mathcal L_{\rm ref}$ for the private space of $S_{\rm ref}$.
By Propositions~\ref{prop:sequential-composition}
and~\ref{prop:functional-composition}, the $j$ occurrences of
$S_{\rm pur}\circ S_{\rm ref}$ act on the orthogonal spaces
\begin{equation}
\bigoplus_{t=1}^j
\left[(\mathcal T\otimes\mathcal J\otimes\mathcal M)
\oplus(\mathcal E\otimes\mathcal L_{\rm ref})\right]_t.
\label{eq:amplification-counter-spaces}
\end{equation}
The subscript $t$ labels the occurrence within the $t$th $S_g$.
The space $\mathcal T\otimes\mathcal J\otimes\mathcal M$
is the domain of $S_{\rm pur}$, while
$\mathcal E\otimes\mathcal L_{\rm ref}$ is the additional private
space introduced by functional composition.  By
\eqref{eq:purifier-oracle-label-space}, $\mathcal E$ has the tensor
factor $\operatorname{span}\{\ket r:r\ne0\}\subseteq\mathcal J$.

We now bound the change in the counter index $r$ during one execution
of $S_j$, for a fixed $t$:
\begin{itemize}
\item The outer sequential composition has work unitary
\begin{equation}
S_j^\circ=P\left(\bigoplus_{t=1}^j
  (S_g^\circ\oplus S_\phi^\circ)\right).
\label{eq:fixed-branch-work-unitary}
\end{equation}
On the space labelled $t$ in~\eqref{eq:amplification-counter-spaces},
the direct sum applies the $t$th $S_g^\circ$ once; its other terms
act on orthogonal spaces.
\item Within $S_g^\circ$, the three-factor sequential composition
in~\eqref{eq:ground-reflection-transducer} applies
$S_V^\circ$, $(S_{\rm pur}\circ S_{\rm ref})^\circ$, and
$S_{V^\dagger}^\circ$ on their respective orthogonal spaces.
Consequently, $(S_{\rm pur}\circ S_{\rm ref})^\circ$ is applied once
on the space labelled $t$.
\item In $(S_{\rm pur}\circ S_{\rm ref})^\circ$, functional composition
applies $I_{\mathcal E}\otimes S_{\rm ref}^\circ$ and then applies
$S_{\rm pur}^\circ$ once
(Proposition~\ref{prop:functional-composition}).  The operator
$I_{\mathcal E}\otimes S_{\rm ref}^\circ$ preserves $r$, since it acts
as the identity on $\mathcal E$.  By Lemma~\ref{lem:finite-counter},
$S_{\rm pur}^\circ$ changes $r$ by at most one.
\end{itemize}
At both levels of sequential composition, the public-space
permutations fix private vectors.  Any intermediate public state with a
$\mathcal J$ factor lies in $\mathcal H_{\rm pur}$ and has $r=0$.
These permutations therefore do not increase $|r|$.  The controls on
oracle calls depend on $r$ only through the test $r=0$ and do not change it.
Hence one
execution of $S_j$ increases $|r|$ by at most one in each counter space.
The implementation circuit starts from the public input with zero
private component, so no counter state with $r\ne0$ is present initially.
The gates other than the controlled executions of $S_j$ act on a separate
control register and leave the purifier counter $\mathsf J$ unchanged
\cite[Proof of Theorem~5.5]{BJY24}.  Hence, during the first $K$
executions, the state has support only on
$\operatorname{span}\{\ket r:|r|\le K\}$ in each $\mathcal J$ factor.

Take
$D=2^{\lceil\log_2(2K+1)\rceil}=\cO(M)$, since $K=\cO(j)$ and $j<M$.
Since $D>2K$, the residues of $-K,\ldots,K$ are distinct, and only
$r=0$ maps to $0$.  The circuit depends on $r$ only through shifts
and $\proj0$, which agree under
$\ket r\mapsto\ket{r\bmod D}$ on the states reached during these $K$
executions.  Replacing $\mathcal J$ by $\mathbb C^D$ throughout the
compositions therefore preserves the
reduced system output and the bound
in~\eqref{eq:amplification-branch-error}.

\emph{Query count.}
Each controlled execution of $S_j$ queries $O_V\oplus O_I$ once.
This direct-sum oracle uses at most one controlled call each to
$O_V$ and $O_I$.  A call to $O_V=V\oplus V^\dagger$ can be implemented using
$\cO(1/\delta)$ calls to
$U_H,U_H^\dagger$ by Lemmas~\ref{lem:spectral-test}
and~\ref{lem:hamiltonian-rescaling}; a call to
$O_I=U_I\oplus U_I^\dagger$ uses at most one call each to
$U_I,U_I^\dagger$.  Over $K=\cO(j)$ executions, $C_j$ therefore uses $\cO(j/\delta)$ calls to
$U_H,U_H^\dagger$ and $\cO(j)$ calls to $U_I,U_I^\dagger$.

\emph{Gate count.}
After truncation, the spaces in~\eqref{eq:amplification-counter-spaces}
are encoded by index registers and one
$\log_2D$-qubit counter register.
In this gate-count analysis, $S_g$ and $S_j$ denote the transducers obtained
by replacing $S_{\rm pur}$ with $S_{\rm pur}^{[D]}$ in their constructions.
The work-unitary gate costs obey the
following bounds:
\begin{center}
\begin{tabular}{@{}l@{\qquad}l@{}}
\toprule
Gate-cost relation & Bound\\
\midrule
$T(S_\phi)$ & $\cO(n+1)$\\
$T(S_{\rm ref})$ & $\cO(a_H+1)$\\
$T(S_V)+T(S_{V^\dagger})$ & $\cO(1)$\\
$T(S_{\rm pur}^{[D]})$ & $\cO(\log D)$\\
\midrule
$T(S_{\rm pur}^{[D]}\circ S_{\rm ref})
 =\cO\bigl(T(S_{\rm pur}^{[D]})+T(S_{\rm ref})+\log D\bigr)$
 & $\cO(a_H+\log D)$\\
\midrule
$T(S_g)
 =\cO\bigl(T(S_V)+T(S_{V^\dagger})
    +T(S_{\rm pur}^{[D]}\circ S_{\rm ref})+1\bigr)$
 & $\cO(a_H+\log D)$\\
\midrule
$T(S_j)=\cO\bigl(T(S_g)+T(S_\phi)+\log(j+1)\bigr)$
 & $\cO(n+a_H+\log M)$\\
\bottomrule
\end{tabular}
\end{center}
The circuits for $R_\phi=U_I(2\proj0-I)U_I^\dagger$ and
$O_{\rm ref}$ in~\eqref{eq:spectral-test-subspaces} use zero-state
reflections on $n$ and $a_H+\cO(1)$ qubits, respectively.  Their gate
counts are $\cO(n)$ and $\cO(a_H+1)$: a zero-state reflection uses a
multi-controlled NOT conjugated by single-qubit gates.
Lemma~\ref{lem:circuit-to-transducer}
then gives the first two rows of the table.  The $S_V,S_{V^\dagger}$ and
$S_{\rm pur}^{[D]}$ rows follow from
Lemmas~\ref{lem:circuit-to-transducer} and~\ref{lem:finite-counter},
respectively.

By Proposition~\ref{prop:functional-composition}, the work unitary of
$S_{\rm pur}^{[D]}\circ S_{\rm ref}$ applies
$I_{\mathcal E}\otimes S_{\rm ref}^\circ$ and $(S_{\rm pur}^{[D]})^\circ$.
By~\eqref{eq:purifier-oracle-label-space}, the control on
$S_{\rm ref}^\circ$ tests whether $\mathsf T\in\{0,1\}$ and
$\mathsf J\ne0$.  The clock test has constant cost; computing and
uncomputing the nonzero test on $\mathsf J$ adds $\cO(\log D)$ gates.
Controlling the input-oracle call uses the same test and adds another
$\cO(\log D)$ gates, which are not included in
$T(S_{\rm pur}^{[D]}\circ S_{\rm ref})$.
The three-factor composition in~\eqref{eq:ground-reflection-transducer}
has a constant-size index, giving the $S_g$ row.

For $S_j$, the $2j$ summands in
\eqref{eq:fixed-branch-work-unitary} alternate $S_g^\circ$ and
$S_\phi^\circ$.  Numbering them $0,\ldots,2j-1$, even labels select
$S_g^\circ$ and odd labels select $S_\phi^\circ$.  Thus the least
significant bit of the $\cO(\log(j+1))$-qubit index selects the work
unitary, and the remaining bits encode $t-1$.  The controlled increment
modulo $2j$ in $P$ uses $\cO(\log(j+1))$ gates.  Thus the
selection among the $2j$ terms adds only $\cO(\log(j+1))$ gates, rather
than a factor of $j$.  With $j<M$ and $D=\cO(M)$, this proves
$T(S_j)=\cO(n+a_H+\log M)$.  The $\cO(\log D)$ gates controlling
the oracle call obey the same bound.

Each of the $K=\cO(j)$ executions of $S_j$ contains at most one call
to $O_V=V\oplus V^\dagger$.  Implementing this call uses
$\cO((a_H+1)/\delta)$ one- and two-qubit gates by
Lemma~\ref{lem:spectral-test}.  Lemma~\ref{lem:circuit-implementation}
adds $\cO(K)$ gates outside these executions.  Hence the total is
\[
\cO\!\left(K(n+a_H+\log M)+K\frac{a_H+1}{\delta}\right)
=\cO\!\left(j(n+a_H+\log M)+\frac{(a_H+1)j}{\delta}\right),
\]
as in~\eqref{eq:fixed-branch-gate-cost}.
\end{proof}

\noindent\begin{minipage}{\linewidth}
We use the circuits $C_j$ in the randomized procedure of
Section~\ref{subsec:ideal-randomized-amplification}, as described in
Algorithm~\ref{alg:constant-precision-amplification}.
\begin{algorithm}[H]
\caption{Ground-state overlap amplification}
\label{alg:constant-precision-amplification}
\begin{algorithmic}[1]
\Require $\gamma$, $\delta$, $U_I$, and a $1$-block-encoding of $H$
\Ensure A state on the system register $\mathsf S$
\State Initialize $\mathsf S$ in $\ket0_{\mathsf S}$ and apply $U_I$ to prepare $\ket\phi$
\State $M\gets\lceil2/\gamma\rceil$
\State Sample $c$ uniformly from $\{0,1\}$
\If{$c=1$}
  \State Sample $j$ uniformly from $\{0,\ldots,M-1\}$
  \State Initialize $\mathsf A\mathsf R$ in $\ket0_{\mathsf A\mathsf R}$
  \State Apply $C_j$ to $\mathsf A\mathsf R\mathsf S$ with its work registers initialized to zero
    \Comment{Proposition~\ref{prop:fixed-branch-circuit}}
  \State Discard all registers except $\mathsf S$
\EndIf
\State \Return $\mathsf S$
\end{algorithmic}
\end{algorithm}
\end{minipage}
\par\smallskip

We now check that averaging the fixed-$j$ circuits retains the overlap
guarantee of Lemma~\ref{lem:ideal-averaged-grover-state}.
\begin{proposition}[Overlap amplification]
\label{prop:constant-ground-weight}
Let $\rho_c$ be the system state returned by
Algorithm~\ref{alg:constant-precision-amplification}, averaged over its
random choices.  Then
\[
\bra{\psi_0}\rho_c\ket{\psi_0}\ge\frac{11}{64}>\frac18.
\]
Its worst-case query counts satisfy $Q_I=\cO(1/\gamma)$ and
$Q_H=\cO(1/(\gamma\delta))=\cO(\alpha/(\gamma\Delta))$.
If $\mathsf S$ has $n$ qubits, the number of additional one- and
two-qubit gates in the algorithm is
\begin{equation}
\cO\!\left(
\frac{a_H+1}{\gamma\delta}
+\frac{n+a_H+\log(2/\gamma)}{\gamma}
\right).
\label{eq:coarse-amplification-gate-cost}
\end{equation}
\end{proposition}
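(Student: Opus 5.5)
The plan is to compare $\rho_c$ with $\rho_{\mathrm{ideal}}$ from~\eqref{eq:ideal-amplified-mixture} branch by branch, and then transfer the overlap bound of Lemma~\ref{lem:ideal-averaged-grover-state} through the trace distance.

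By construction, Algorithm~\ref{alg:constant-precision-amplification} outputs
\[
\rho_c=\frac12\proj\phi+\frac1{2M}\sum_{j=0}^{M-1}\rho_j ,
\]
where $\rho_j$ is the reduced system output of $C_j$ on $\ket0_{\mathsf A\mathsf R}\ket\phi$. For $j=0$ one may take $C_0$ to be the identity, so $\rho_0=\proj\phi$ exactly; in any case the bound~\eqref{eq:amplification-branch-error} covers it.

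First, the overlap. Convexity of the trace norm together with~\eqref{eq:amplification-branch-error} gives
\[
\Dtr(\rho_c,\rho_{\mathrm{ideal}})\le\frac12\cdot\frac1{32}=\frac1{64}.
\]
For any states $\rho,\sigma$ and projector $\Pi$, we have $|\Tr\Pi(\rho-\sigma)|\le\Dtr(\rho,\sigma)$. Applying this with $\Pi=\proj{\psi_0}$ and using Lemma~\ref{lem:ideal-averaged-grover-state},
\[
\bra{\psi_0}\rho_c\ket{\psi_0}\ge\frac3{16}-\frac1{64}=\frac{11}{64}>\frac18 .
\]
The key point is that the errors enter only through the averaged mixture, so one use of trace distance suffices.

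Second, the query and gate counts, which are worst-case over the random choices. Preparing $\ket\phi$ costs one call to $U_I$. Every sampled $j$ satisfies $j\le M-1<2/\gamma+1$, so Proposition~\ref{prop:fixed-branch-circuit} bounds the cost of $C_j$ by:
\begin{itemize}
\item $\cO(M/\delta)=\cO(1/(\gamma\delta))$ calls to $U_H$ and $U_H^\dagger$;
\item $\cO(M)=\cO(1/\gamma)$ calls to $U_I$ and $U_I^\dagger$.
\end{itemize}
Since $\gamma\le1/\sqrt2$, we have $M=\Theta(1/\gamma)$. Lemma~\ref{lem:hamiltonian-rescaling} gives $\delta^{-1}\le4\alpha/\Delta$, which converts the $U_H$ count to $\cO(\alpha/(\gamma\Delta))$.

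For gates, substitute $j<M$ into~\eqref{eq:fixed-branch-gate-cost} and use $\log M=\cO(\log(2/\gamma))$. The remaining work is small:
\begin{itemize}
\item preparing $\ket0_{\mathsf S}$ is free, and $U_I$ is an oracle call;
\item sampling $c$ and $j$ uniformly is classical preprocessing;
\item the classically controlled choice of $C_j$ adds no quantum gates beyond those of the selected circuit.
\end{itemize}
Together these yield~\eqref{eq:coarse-amplification-gate-cost}.

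The only delicate point is the sampling step. Since $M$ need not be a power of two, uniform sampling should be treated as classical randomness rather than as quantum gates; if it is counted instead, rejection sampling on $\cO(\log M)$ bits is absorbed into the $\log(2/\gamma)/\gamma$ term. Apart from this, the proof is routine bookkeeping on top of Proposition~\ref{prop:fixed-branch-circuit}.
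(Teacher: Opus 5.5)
Your proposal is correct and follows the paper's proof essentially step for step. Convexity gives $\Dtr(\rho_c,\rho_{\mathrm{ideal}})\le 1/64$, and the variational bound combined with Lemma~\ref{lem:ideal-averaged-grover-state} then yields $11/64$. The query and gate counts follow by substituting $j<M\le 3/\gamma$ and $\log M=\cO(\log(2/\gamma))$ into Proposition~\ref{prop:fixed-branch-circuit}, plus the single $U_I$ call that prepares $\ket\phi$.
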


\begin{proof}
We compare the actual output of the amplification algorithm with the ideal
average state in~\eqref{eq:ideal-amplified-mixture}.
Averaging over the classical choices in the algorithm gives
\[
\rho_c=\frac12\proj\phi+\frac1{2M}\sum_{j=0}^{M-1}\rho_j.
\]
By convexity, Proposition~\ref{prop:fixed-branch-circuit}, and
\eqref{eq:ideal-amplified-mixture},
\[
\Dtr(\rho_c,\rho_{\mathrm{ideal}})
\le\frac1{2M}\sum_{j=0}^{M-1}
\Dtr\!\left(\rho_j,G^j\proj\phi(G^\dagger)^j\right)
\le\frac1{64}.
\]
Applying the variational bound for trace distance to $\proj{\psi_0}$
and then using Lemma~\ref{lem:ideal-averaged-grover-state}, we obtain
\[
\begin{aligned}
\bra{\psi_0}\rho_c\ket{\psi_0}
&\ge \bra{\psi_0}\rho_{\mathrm{ideal}}\ket{\psi_0}
   -\Dtr(\rho_c,\rho_{\mathrm{ideal}})\\
&\ge\frac3{16}-\frac1{64}=\frac{11}{64}.
\end{aligned}
\]
The preparation of $\ket\phi$ uses one call to $U_I$.  When $c=1$,
Proposition~\ref{prop:fixed-branch-circuit} bounds the
calls made by $C_j$ by $\cO(j/\delta)$ to the Hamiltonian oracle and
$\cO(j)$ to the state-preparation oracle.  Since $j<M\le3/\gamma$, these
bounds give the stated $Q_H$ and $Q_I$.  Substituting $j<M$ and
$\log M=\cO(\log(2/\gamma))$ into~\eqref{eq:fixed-branch-gate-cost}, and
using $\delta^{-1}=\cO(\alpha/\Delta)$, gives
\eqref{eq:coarse-amplification-gate-cost}.
\end{proof}

\section{Final filtering}
\label{sec:final-filtering}

Proposition~\ref{prop:constant-ground-weight} gives a state $\rho_c$ with
$\bra{\psi_0}\rho_c\ket{\psi_0}\ge1/8$.  We first use one spectral filter
to obtain an $\varepsilon$-accurate state with constant success probability.
We then reduce the failure probability to $\zeta$ while keeping a bound
on the queries made in every run.

\subsection{Constant-success preparation}
\label{subsec:constant-success-preparation}

\noindent\begin{minipage}{\linewidth}
Algorithm~\ref{alg:ground-state-preparation-run} applies one spectral filter
with accuracy $\varepsilon/8$ to $\rho_c$ and accepts its all-zero ancilla
outcome.

\begin{algorithm}[H]
\caption{One run of ground-state preparation}
\label{alg:ground-state-preparation-run}
\begin{algorithmic}[1]
\Require $\varepsilon$, $\gamma$, $\delta$, $U_I$, and a $1$-block-encoding of $H$
\Ensure Either \textsc{success} with a system register, or \textsc{failure}
\State Run Algorithm~\ref{alg:constant-precision-amplification} to prepare
  $\rho_c$ in the system register
\State Construct $U_{\varepsilon/8,\delta}$
  \Comment{Lemma~\ref{lem:spectral-filter}}
\State Initialize $\mathsf F$ in $\ket0_{\mathsf F}$ and apply
  $U_{\varepsilon/8,\delta}$ jointly to $\mathsf F$ and the system register
\State Measure $\mathsf F$
\If{the outcome is all zero}
  \State \Return \textsc{success} and the system register
\Else
  \State \Return \textsc{failure}
\EndIf
\end{algorithmic}
\end{algorithm}
\end{minipage}
\par\smallskip

We first bound the final filter's success probability and the error of the
accepted state.
\begin{proposition}[Correctness of the final filter]
\label{prop:final-filter-correctness}
Let $\rho_c$ be any density operator satisfying
\[
\bra{\psi_0}\rho_c\ket{\psi_0}\ge\frac18.
\]
For $0<\varepsilon\le1/10$, the filtering part of
Algorithm~\ref{alg:ground-state-preparation-run} satisfies
\begin{equation}
\Pr[\mathrm{success}]\ge\frac9{128}>\frac1{16},
\qquad
\Dtr(\rho_{\rm succ},\proj{\psi_0})\le\varepsilon.
\label{eq:final-filter-guarantee}
\end{equation}
\end{proposition}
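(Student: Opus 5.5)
Write $\eta:=\varepsilon/8$ and let $f$ be the polynomial of Lemma~\ref{lem:spectral-filter} for $U_{\eta,\delta}$, so $|f(E_0)|\ge1-\eta$ and $|f(E_k)|\le\eta$ for $k\ge1$. Define the unnormalized post-measurement operator on the system as $\sigma:=f(H)\rho_c f(H)^\dagger$. By Lemma~\ref{lem:spectral-filter}, the outcome $\ket0_{\mathsf F}$ yields exactly this system operator, so $\Pr[\mathrm{success}]=\Tr\sigma$ and $\rho_{\rm succ}=\sigma/\Tr\sigma$. Expanding $\rho_c$ in the eigenbasis, put $p:=\bra{\psi_0}\rho_c\ket{\psi_0}\ge1/8$ and let $\Pi_\perp:=I-\proj{\psi_0}$.

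\emph{Success probability.} Since $f(H)$ is diagonal in the eigenbasis, $\Tr\sigma=\sum_k|f(E_k)|^2\bra{\psi_k}\rho_c\ket{\psi_k}\ge(1-\eta)^2p$. With $\eta\le1/80$ and $p\ge1/8$, this gives $\Tr\sigma\ge(79/80)^2/8$, which exceeds $9/128$. (The constant $9/128$ is presumably obtained from a slightly cruder estimate, such as $(1-\eta)^2\ge 9/16$; either way the bound holds.)

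\emph{Error.} I would avoid working with the mixed-state blocks directly and instead purify. Take a purification $\ket{\Psi}$ of $\rho_c$ with a reference register, and write $\ket{\Psi}=\ket{\psi_0}\ket{a}+\ket{b}$ with $(\bra{\psi_0}\otimes I)\ket b=0$, so that $\norm{a}^2=p$. Applying $f(H)\otimes I$ gives $\ket{\Psi'}=f(E_0)\ket{\psi_0}\ket a+\ket{b'}$ with $\norm{b'}\le\eta$, since $f(H)$ is bounded by $\eta$ on the excited subspace. The normalized state $\ket{\Psi'}/\norm{\Psi'}$ has reduced system state $\rho_{\rm succ}$. Its overlap with the subspace $\ket{\psi_0}\otimes(\cdot)$ satisfies
\[
1-\bra{\psi_0}\rho_{\rm succ}\ket{\psi_0}=\frac{\norm{b'}^2}{\norm{\Psi'}^2}\le\frac{\eta^2}{(1-\eta)^2p}.
\]
For any state $\rho$, $\Dtr(\rho,\proj{\psi_0})\le\sqrt{1-\bra{\psi_0}\rho\ket{\psi_0}}$ (Fuchs--van de Graaf for a pure target). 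Therefore
\[
\Dtr(\rho_{\rm succ},\proj{\psi_0})\le\frac{\eta}{(1-\eta)\sqrt p}\le\frac{\sqrt8\,\eta}{1-\eta}.
\]
With $\eta=\varepsilon/8$, this equals $\sqrt8\,\varepsilon/(8(1-\varepsilon/8))<\varepsilon$.

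The main care point is that $\rho_c$ may contain coherences between $\ket{\psi_0}$ and the excited states. The purification argument above handles these automatically, because only the norm of the excited component enters. After that, the remaining work is just checking the constants.
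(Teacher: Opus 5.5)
Your proof is correct and follows the paper's argument. You use the same lower bound $(1-\varepsilon/8)^2p$ on the success probability, the same bound $\eta^2/s$ on the excited-state weight of $\rho_{\rm succ}$, and the same Fuchs--van de Graaf step, arriving at the identical constant $\varepsilon/(2\sqrt2(1-\varepsilon/8))$.

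The purification is harmless but unnecessary. The paper simply notes that, because $f(H)$ is diagonal in the eigenbasis, $\bra{\psi_k}f(H)\rho_c f(H)^\dagger\ket{\psi_k}=|f(E_k)|^2\bra{\psi_k}\rho_c\ket{\psi_k}$. Coherences therefore never enter the fidelity with $\proj{\psi_0}$.
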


\begin{proof}
By Lemma~\ref{lem:spectral-filter},
\[
(\bra0_{\mathsf F}\otimes I)U_{\varepsilon/8,\delta}
(\ket0_{\mathsf F}\otimes I)
=f(H)=\sum_{k\ge0}f(E_k)\proj{\psi_k},
\]
where $|f(E_0)|\ge1-\varepsilon/8$ and
$|f(E_k)|\le\varepsilon/8$ for $k\ge1$.

The probability of the $\ket0_{\mathsf F}$ outcome is
\begin{align*}
s:=\Pr[\mathrm{success}]
&=\Tr\!\bigl(f(H)\rho_c f(H)^\dagger\bigr)
=\Tr\!\bigl(\rho_c f(H)^\dagger f(H)\bigr)\\
&=\sum_{k\ge0}|f(E_k)|^2
  \bra{\psi_k}\rho_c\ket{\psi_k}
\ge(1-\varepsilon/8)^2\bra{\psi_0}\rho_c\ket{\psi_0}\\
&\ge\frac{(1-\varepsilon/8)^2}{8}
\ge\frac9{128}>\frac1{16}.
\end{align*}
Conditioned on this outcome, the system state is
$\rho_{\rm succ}=f(H)\rho_c f(H)^\dagger/s$.  Hence
\begin{align*}
1-\bra{\psi_0}\rho_{\rm succ}\ket{\psi_0}
&=\frac1s\sum_{k\ge1}|f(E_k)|^2
  \bra{\psi_k}\rho_c\ket{\psi_k}\\
&\le\frac{(\varepsilon/8)^2}{s}
\le\frac{\varepsilon^2}{8(1-\varepsilon/8)^2}.
\end{align*}
The Fuchs--van de Graaf inequality~\cite{FvdG99} now gives
\[
\Dtr(\rho_{\rm succ},\proj{\psi_0})
\le\sqrt{1-\bra{\psi_0}\rho_{\rm succ}\ket{\psi_0}}
\le\frac{\varepsilon}{2\sqrt2(1-\varepsilon/8)}
\le\frac{\sqrt2}{3}\varepsilon<\varepsilon.
\]
\end{proof}

We next bound the query and gate costs of the final filter in
Algorithm~\ref{alg:ground-state-preparation-run}.
\begin{proposition}[Cost of the final filter]
\label{prop:final-filter-cost}
The filtering part of Algorithm~\ref{alg:ground-state-preparation-run} makes
no state-preparation query and uses
\begin{equation}
\cO\!\left(\frac1\delta\log\frac1\varepsilon\right)
=\cO\!\left(\frac\alpha\Delta\log\frac1\varepsilon\right)
\label{eq:final-filter-query-cost}
\end{equation}
Hamiltonian queries.  If the block-encoding of the normalized Hamiltonian has
$a_H$ ancilla qubits, it also uses
$\cO((a_H+1)\delta^{-1}\log(1/\varepsilon))$ additional one- and two-qubit gates.
\end{proposition}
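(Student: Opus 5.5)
The filtering part of Algorithm~\ref{alg:ground-state-preparation-run} consists of three steps: construct and apply $U_{\varepsilon/8,\delta}$ once, measure $\mathsf F$, and return. The plan is to read the query and gate costs directly off Lemma~\ref{lem:spectral-filter} and then translate them to the original oracle using Lemma~\ref{lem:hamiltonian-rescaling}.

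First, the state-preparation count. None of these steps calls $U_I$ or $U_I^\dagger$, because Lemma~\ref{lem:spectral-filter} builds $U_{\eta,\delta}$ solely from the block-encoding of $H$. So the filtering part makes zero state-preparation queries.

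Second, the Hamiltonian count. Apply Lemma~\ref{lem:spectral-filter} with $\eta=\varepsilon/8$. The condition $0<\eta<1/2$ holds since $\varepsilon\le1/10$, and $0<\delta<1$ is guaranteed by Lemma~\ref{lem:hamiltonian-rescaling}. The lemma then gives $\cO(\delta^{-1}\log(8/\varepsilon))$ queries to the $1$-block-encoding of the normalized Hamiltonian. Because $\varepsilon\le1/10$, we have $\log(8/\varepsilon)\le\log 8+\log(1/\varepsilon)=\cO(\log(1/\varepsilon))$. Lemma~\ref{lem:hamiltonian-rescaling} charges each query to that block-encoding as one controlled call to $U_H$ or $U_H^\dagger$, and it also gives $\delta^{-1}\le4\alpha/\Delta$. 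Together these yield \eqref{eq:final-filter-query-cost}.

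Third, the gate count. Lemma~\ref{lem:spectral-filter} gives $\cO((a_H+1)\delta^{-1}\log(1/\eta))$ gates, where $a_H$ here is the ancilla count of the normalized block-encoding. The rescaling construction adds at most one ancilla, which only changes the constant in the $a_H+1$ factor. We must also account for measuring $\mathsf F$ and initializing it to $\ket0_{\mathsf F}$. These cost $\cO(a_H+1)$ operations, which is absorbed into the main term because $\delta^{-1}\log(1/\varepsilon)\ge1$.

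The proof has no real obstacle. The only points needing care are that $\log(8/\varepsilon)$ is absorbed into $\log(1/\varepsilon)$ because $\varepsilon$ is bounded away from $1$, and that the parameter ranges required by Lemma~\ref{lem:spectral-filter} are satisfied.
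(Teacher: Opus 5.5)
Your proposal is correct and follows the paper's proof: you read off the query and gate counts from Lemma~\ref{lem:spectral-filter} with $\eta=\varepsilon/8$, then convert them to calls to $U_H$ via Lemma~\ref{lem:hamiltonian-rescaling}, using $\delta^{-1}\le4\alpha/\Delta$. Your extra checks are details the paper leaves implicit: the parameter ranges, absorbing $\log 8$ into $\log(1/\varepsilon)$, and the cost of initializing and measuring $\mathsf F$.
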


\begin{proof}
Lemma~\ref{lem:spectral-filter} gives
$\cO(\delta^{-1}\log(1/\varepsilon))$ queries to the normalized block-encoding
and the stated gate bound.
Each query to the normalized block-encoding or its inverse uses one
controlled query to $U_H$ or $U_H^\dagger$, respectively, by
Lemma~\ref{lem:hamiltonian-rescaling}, and
$\delta^{-1}=\cO(\alpha/\Delta)$.
\end{proof}

The filtering and amplification bounds together give the following guarantees
for Algorithm~\ref{alg:ground-state-preparation-run}.
\begin{proposition}[Constant-success preparation]
\label{prop:constant-success-preparation}
Algorithm~\ref{alg:ground-state-preparation-run} succeeds with probability at
least $1/16$ and, conditioned on success, has trace-distance error at most
$\varepsilon$.  Its worst-case query counts satisfy
\[
Q_I=\cO\!\left(\frac1\gamma\right),
\qquad
Q_H=\cO\!\left(
\frac1{\gamma\delta}
+\frac1\delta\log\frac1\varepsilon
\right).
\]
\end{proposition}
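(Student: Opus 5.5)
The plan is to assemble the proposition directly from Proposition~\ref{prop:constant-ground-weight}, Proposition~\ref{prop:final-filter-correctness}, and Proposition~\ref{prop:final-filter-cost}, since Algorithm~\ref{alg:ground-state-preparation-run} is precisely Algorithm~\ref{alg:constant-precision-amplification} followed by the final filter.

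First I will establish correctness. The amplification stage outputs a system state whose average over the internal random choices is $\rho_c$. By Proposition~\ref{prop:constant-ground-weight}, this state satisfies $\bra{\psi_0}\rho_c\ket{\psi_0}\ge11/64>1/8$.

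The filtering step acts linearly on the system state. The success probability is $\Tr(f(H)\rho f(H)^\dagger)$, and the unnormalized accepted state is $f(H)\rho f(H)^\dagger$; both are linear in the input. Conditioning on success after averaging over the classical randomness of the amplification stage therefore gives the same result as filtering the averaged state $\rho_c$.

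So Proposition~\ref{prop:final-filter-correctness} applies directly to $\rho_c$ with the given $\varepsilon\le1/10$. It yields success probability at least $9/128>1/16$ and $\Dtr(\rho_{\rm succ},\proj{\psi_0})\le\varepsilon$.

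The only point needing care is this averaging step. The amplification stage discards ancillas and samples $c$ and $j$, and the filter must be applied to the resulting mixture. I will state explicitly that the joint distribution of the outcome and the post-measurement state depends only on the averaged input. This is the main, if minor, obstacle.

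Next I will bound the queries. Proposition~\ref{prop:constant-ground-weight} bounds the amplification stage by $Q_I=\cO(1/\gamma)$ and $Q_H=\cO(1/(\gamma\delta))$ on every run. Proposition~\ref{prop:final-filter-cost} shows the filter uses no state-preparation queries and $\cO(\delta^{-1}\log(1/\varepsilon))$ Hamiltonian queries.

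These bounds hold deterministically for every branch, since $j<M$ always. Summing them gives $Q_I=\cO(1/\gamma)$ and $Q_H=\cO\bigl(1/(\gamma\delta)+\delta^{-1}\log(1/\varepsilon)\bigr)$, as claimed.
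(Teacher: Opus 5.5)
Your proposal is correct and follows the paper's proof: Proposition~\ref{prop:constant-ground-weight} supplies the hypothesis $\bra{\psi_0}\rho_c\ket{\psi_0}\ge1/8$ for Proposition~\ref{prop:final-filter-correctness}, and the worst-case query bounds are the sums of those in Propositions~\ref{prop:constant-ground-weight} and~\ref{prop:final-filter-cost}. Your linearity remark, that filtering the averaged state $\rho_c$ is the same as filtering each branch and then averaging, is used only implicitly in the paper, and your justification of it is sound.
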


\begin{proof}
By Proposition~\ref{prop:constant-ground-weight}, the state $\rho_c$
prepared in Step~1 of Algorithm~\ref{alg:ground-state-preparation-run} satisfies the
hypothesis of Proposition~\ref{prop:final-filter-correctness}.  That
proposition gives success probability at least $1/16$ and conditional
trace-distance error at most $\varepsilon$.

Combining Propositions~\ref{prop:constant-ground-weight}
and~\ref{prop:final-filter-cost}
therefore gives
\[
Q_I=\cO\!\left(\frac1\gamma\right),
\qquad
Q_H=\cO\!\left(
\frac1{\gamma\delta}
+\frac1\delta\log\frac1\varepsilon
\right).
\]
Finally, Lemma~\ref{lem:hamiltonian-rescaling} gives
$\delta^{-1}=\cO(\alpha/\Delta)$; substituting this relation expresses the
bounds in the original parameters.
\end{proof}

Proposition~\ref{prop:constant-success-preparation} also gives the
expected-query upper bounds in Theorem~\ref{thm:expected-query-complexity}
by repetition.

\begin{proof}[Proof of the upper bounds in Theorem~\ref{thm:expected-query-complexity}]
Run Algorithm~\ref{alg:ground-state-preparation-run} independently until
success.  Each run succeeds with probability at least $1/16$, so the mean
number of runs is at most $16$.  The per-run query bounds in
Proposition~\ref{prop:constant-success-preparation} therefore give the
expected query counts.  The output state is the state of one run conditioned
on success, so it has the same trace-distance guarantee.
\end{proof}

\subsection{Reducing the failure probability}
\label{subsec:bounded-query-preparation}

To reduce the failure probability without repeating the highest-accuracy
filter after every rejection, we apply filters of increasing accuracy and
restart as soon as one rejects.  Let
\[
K:=\min\{j\ge0:16^{-2^j}\le\varepsilon/4\},
\qquad
\eta_j:=16^{-2^j},\quad 0\le j\le K.
\]
By Lemma~\ref{lem:spectral-filter} and the query accounting in
Lemma~\ref{lem:hamiltonian-rescaling}, there is an absolute constant $C$
such that $U_{\eta_j,\delta}$ uses at most $(C/\delta)2^j$ Hamiltonian
queries.

\noindent\begin{minipage}{\linewidth}
In Algorithm~\ref{alg:increasing-accuracy-filtering}, the remaining filtering
budget $B$ is measured in units of $C/\delta$ queries, so applying
$U_{\eta_j,\delta}$ decreases $B$ by $2^j$.

\begin{algorithm}[H]
\caption{Ground-state preparation with reduced failure probability}
\label{alg:increasing-accuracy-filtering}
\begin{algorithmic}[1]
\Require $0<\varepsilon\le\zeta\le1/10$, $\gamma$, $\delta$, $U_I$, and a $1$-block-encoding of $H$
\Ensure Either \textsc{success} with the system register $\mathsf S$, or \textsc{failure}
\State $R\gets\left\lceil\log(2/\zeta)/\log(16/15)\right\rceil$
\State $B\gets 2^{K+1}-1+2R+\left\lceil\log_2(2/\zeta)\right\rceil$
  \Comment{Remaining filtering budget}
\For{$t=1,\ldots,R$}
  \State Run Algorithm~\ref{alg:constant-precision-amplification} independently
    to prepare $\rho_c$ in $\mathsf S$
  \For{$j=0,\ldots,K$}
    \If{$B<2^j$}
      \State \Return \textsc{failure}
    \EndIf
    \State $B\gets B-2^j$
    \State Initialize a fresh $\mathsf F$ in $\ket0_{\mathsf F}$ and apply
      $U_{\eta_j,\delta}$ to $\mathsf F\mathsf S$
    \State Measure $\mathsf F$ in the computational basis
    \If{the outcome is not all zero}
      \State Discard $\mathsf S$ and continue with the next $t$
    \EndIf
  \EndFor
  \State \Return \textsc{success} and $\mathsf S$
\EndFor
\State \Return \textsc{failure}
\end{algorithmic}
\end{algorithm}
\end{minipage}
\par\smallskip

We now prove the success and error guarantees and bound both the queries
and the one- and two-qubit gates used by this algorithm.

\begin{proposition}[Preparation with bounded queries]
\label{prop:increasing-accuracy-filtering}
Let $0<\varepsilon\le\zeta\le1/10$.  Algorithm~\ref{alg:increasing-accuracy-filtering}
succeeds with probability at least $1-\zeta$ and, conditioned on success,
satisfies
$\Dtr(\rho_{\rm succ},\proj{\psi_0})\le\varepsilon$.  Its worst-case query
counts satisfy
\[
Q_I=\cO\!\left(\frac1\gamma\log\frac1\zeta\right),
\qquad
Q_H=\cO\!\left(
\frac1{\gamma\delta}\log\frac1\zeta
+\frac1\delta\log\frac1\varepsilon
\right).
\]
If $\mathsf S$ has $n$ qubits, the number of additional one- and two-qubit
gates in the worst case is
\begin{equation}
\cO\!\left[
\frac{a_H+1}{\delta}
\left(\frac1\gamma\log\frac1\zeta+\log\frac1\varepsilon\right)
+\frac{n+a_H+\log(2/\gamma)}{\gamma}\log\frac1\zeta
\right].
\label{eq:increasing-accuracy-gate-cost}
\end{equation}
\end{proposition}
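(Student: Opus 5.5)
The plan is to analyze Algorithm~\ref{alg:increasing-accuracy-filtering} by coupling it to an idealized process without the budget cutoff. In that process, attempts $t=1,2,\dots$ are independent: each prepares a fresh $\rho_c$ with $\bra{\psi_0}\rho_c\ket{\psi_0}\ge1/8$ (Proposition~\ref{prop:constant-ground-weight}) and applies $U_{\eta_0,\delta},\dots,U_{\eta_K,\delta}$ until the first rejection. The real algorithm fails only if one of two events occurs: the first $R$ idealized attempts all reject, or the budget $B$ is exhausted before an accepting attempt completes. I will bound each event by $\zeta/2$.

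\emph{Per-attempt analysis.} Let $f_j$ be the filter polynomial of $U_{\eta_j,\delta}$. Accepting filters $0,\dots,j$ applies $F_j:=f_j(H)\cdots f_0(H)$ to $\rho_c$. This operator is diagonal in the eigenbasis, with ground coefficient of modulus at least $\prod_{i\le j}(1-\eta_i)$ and excited coefficients of modulus at most $\eta_j$.

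First, acceptance of a whole attempt has probability
\[
s\ge\tfrac18\prod_{i\ge0}(1-16^{-2^i})^2\ge\tfrac1{16}.
\]
This gives $\Pr[\text{all $R$ attempts reject}]\le(15/16)^R\le\zeta/2$.

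Second, the conditional error follows exactly as in Proposition~\ref{prop:final-filter-correctness}. The excited weight of the accepted state is at most $\eta_K^2/s\le16(\varepsilon/4)^2$, and Fuchs--van de Graaf then gives trace distance at most $\varepsilon$. The cutoff does not affect this conditional state, since success still requires all filters to accept.

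Third, for $j\ge1$, the probability of passing filters $0,\dots,j-1$ and then rejecting at $j$ is at most
\[
\Tr(\text{excited part of }F_{j-1}\rho_cF_{j-1}^\dagger)+\bigl(1-(1-\eta_j)^2\bigr)\le\eta_{j-1}^2+2\eta_j=3\eta_j,
\]
using $\eta_{j-1}^2=\eta_j$. Ground-state leakage and excited-state survival are both tiny at later stages.

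\emph{Budget.} An attempt whose first rejection is at filter $j$ consumes $2^{j+1}-1$ budget units, and a successful attempt consumes $2^{K+1}-1$. For the cost $C$ of a rejected attempt, I will show
\[
\mathbb E\bigl[2^{C-2}\mathbf 1_{\text{reject}}\bigr]\le\tfrac12+\sum_{j\ge1}2^{2^{j+1}-3}\cdot3\cdot16^{-2^j}\le1.
\]
By independence of attempts and Markov's inequality applied to $2^{\sum_t(C_t-2)}$, the total cost of the at most $R$ rejected attempts then exceeds $2R+\lceil\log_2(2/\zeta)\rceil$ with probability at most $\zeta/2$. Off this event, the remaining $2^{K+1}-1$ units suffice for the final successful attempt. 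So the cutoff triggers with probability at most $\zeta/2$, and the union bound gives success probability at least $1-\zeta$. This step is the main obstacle: the geometric cost growth must be dominated by the doubly exponential decay of $\eta_j$ strongly enough that the moment generating function stays at most $1$, and the coupling with the truncated process must be stated carefully. If the constant in the displayed sum does not come out below $1$, I will fall back to a smaller base for the exponential moment, for instance $2^{\lambda(C-2)}$ with $\lambda<1$, and rescale the $\log_2(2/\zeta)$ slack accordingly.

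\emph{Costs.} The worst case has $R=\cO(\log(1/\zeta))$ runs of Algorithm~\ref{alg:constant-precision-amplification}. By Proposition~\ref{prop:constant-ground-weight}, these give $Q_I=\cO(\gamma^{-1}\log(1/\zeta))$ and $\cO((\gamma\delta)^{-1}\log(1/\zeta))$ Hamiltonian queries. The filters use at most $(C/\delta)B$ queries. Since $2^K=\cO(\log(1/\varepsilon))$ and $\varepsilon\le\zeta$, we have $B=\cO(\log(1/\varepsilon)+\log(1/\zeta))=\cO(\log(1/\varepsilon))$, which gives the stated $Q_H$. The gate bound~\eqref{eq:increasing-accuracy-gate-cost} follows the same way: multiply~\eqref{eq:coarse-amplification-gate-cost} by $R$, and bound the filter gates by $\cO((a_H+1)B/\delta)$ using Lemma~\ref{lem:spectral-filter}.
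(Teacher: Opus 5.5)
Your proposal is correct and follows the paper's proof essentially step by step. It uses the same ingredients: the pass-probability bound $p_{\rm pass}>1/16$, Fuchs--van de Graaf for the conditional error, the late-rejection bound $q_j\le 2\eta_j+\eta_{j-1}^2=3\eta_j$, a base-$2$ exponential moment of the rejected-attempt cost with independence and Markov's inequality, and a union bound with $(15/16)^R\le\zeta/2$. One bookkeeping point: if you set $C_t=0$ for an accepted attempt, then $\mathbb E[2^{C_t-2}]$ also contains $\tfrac14 p_{\rm pass}$. Your first term absorbs this via $\tfrac14 p_{\rm pass}+\tfrac12 q_0\le\tfrac12$, so the moment is at most $\tfrac12+\tfrac38\sum_{j\ge1}4^{-2^j}<0.53$ and the fallback with $\lambda<1$ is not needed.
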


\begin{proof}
\emph{Acceptance probability and output error.}
We first compute the probability of passing all filters and the resulting
state.  By Lemma~\ref{lem:spectral-filter},
\[
f_j(H):=(\bra0_{\mathsf F}\otimes I)U_{\eta_j,\delta}
(\ket0_{\mathsf F}\otimes I)
=\sum_{k\ge0}f_j(E_k)\proj{\psi_k},
\]
where
\begin{equation}
|f_j(E_0)-1|\le\eta_j,
\qquad
|f_j(E_k)|\le\eta_j\quad\text{for }k\ge1.
\label{eq:multistage-filter-bounds}
\end{equation}
Set $F_j(H):=f_j(H)\cdots f_0(H)$.
When all $K+1$ measurements of $\mathsf F$ give outcome $0$, the
unnormalized output on $\mathsf S$ is
\[
f_K(H)\cdots f_0(H)\rho_c f_0(H)^\dagger\cdots f_K(H)^\dagger
=F_K(H)\rho_cF_K(H)^\dagger.
\]
Write $p:=\bra{\psi_0}\rho_c\ket{\psi_0}\ge1/8$.  The probability that
all measurements give $0$ is
\begin{align}
p_{\rm pass}
&=\Tr\!\left(F_K(H)\rho_cF_K(H)^\dagger\right)\notag\\
&=\sum_{k\ge0}\bra{\psi_k}\rho_c\ket{\psi_k}
  \prod_{j=0}^K|f_j(E_k)|^2\notag\\
&\ge p\prod_{j=0}^K|f_j(E_0)|^2
\ge p\prod_{j=0}^K(1-\eta_j)^2\notag\\
&\ge p\left(1-\sum_{j=0}^K\eta_j\right)^2
\ge\frac18\left(\frac{14}{15}\right)^2
>\frac1{16}.
\label{eq:multistage-pass-probability}
\end{align}
The third line uses~\eqref{eq:multistage-filter-bounds}; the last line uses
$\sum_{j=0}^K\eta_j<\sum_{r\ge1}16^{-r}=1/15$.
The conditional state is
\[
\rho_{\rm pass}
:=\frac{F_K(H)\rho_cF_K(H)^\dagger}{p_{\rm pass}}.
\]
Using~\eqref{eq:multistage-filter-bounds}, we bound its excited-state
probability as
\begin{align*}
1-\bra{\psi_0}\rho_{\rm pass}\ket{\psi_0}
&=\sum_{k\ge1}\bra{\psi_k}\rho_{\rm pass}\ket{\psi_k}\\
&=\frac1{p_{\rm pass}}\sum_{k\ge1}
  \bra{\psi_k}\rho_c\ket{\psi_k}
  \prod_{j=0}^K|f_j(E_k)|^2\\
&\le\frac{\prod_{j=0}^K\eta_j^2}{p_{\rm pass}}
  \sum_{k\ge1}\bra{\psi_k}\rho_c\ket{\psi_k}\\
&=\frac{(1-p)\prod_{j=0}^K\eta_j^2}{p_{\rm pass}}
\le16\prod_{j=0}^K\eta_j^2.
\end{align*}
Here $\Tr\rho_c=1$, and the final bound uses $p_{\rm pass}>1/16$
from~\eqref{eq:multistage-pass-probability}.
The Fuchs--van de Graaf inequality therefore gives
\[
\Dtr(\rho_{\rm pass},\proj{\psi_0})
\le4\prod_{j=0}^K\eta_j
\le4\eta_K
\le\varepsilon.
\]

\emph{Oracle calls made by filters in rejected attempts.}
Let $q_j$ be the
probability that the measurements for filters $0,\ldots,j-1$ all give $0$
and the measurement for filter $j$ does not.  For $j\ge1$, the preceding
zero outcomes have unnormalized system output
$F_{j-1}(H)\rho_cF_{j-1}(H)^\dagger$.  Subtracting the probability that
filter $j$ also gives $0$ gives
\begin{align}
q_j
&=\Tr\!\left(F_{j-1}(H)\rho_cF_{j-1}(H)^\dagger\right)
  -\Tr\!\left(F_j(H)\rho_cF_j(H)^\dagger\right)\notag\\
&=\Tr\!\left[(I-f_j(H)^\dagger f_j(H))
  F_{j-1}(H)\rho_cF_{j-1}(H)^\dagger\right]\notag\\
&=\sum_{k\ge0}\bra{\psi_k}\rho_c\ket{\psi_k}
  \prod_{\ell<j}|f_\ell(E_k)|^2
  \bigl(1-|f_j(E_k)|^2\bigr)\notag\\
&=p\prod_{\ell<j}|f_\ell(E_0)|^2
  \bigl(1-|f_j(E_0)|^2\bigr)
  +\sum_{k\ge1}\bra{\psi_k}\rho_c\ket{\psi_k}
    \prod_{\ell<j}|f_\ell(E_k)|^2
    \bigl(1-|f_j(E_k)|^2\bigr)\notag\\
&\le2p\eta_j+\left(\prod_{\ell<j}\eta_\ell^2\right)
  \sum_{k\ge1}\bra{\psi_k}\rho_c\ket{\psi_k}\notag\\
&=2p\eta_j+(1-p)\prod_{\ell<j}\eta_\ell^2\notag\\
&\le2\eta_j+16^{-2(2^j-1)}
\le3\eta_j=3\cdot16^{-2^j}.
\label{eq:late-rejection-probability}
\end{align}
The first inequality uses $|f_\ell(E_k)|\le1$ from
Lemma~\ref{lem:spectral-filter}, \eqref{eq:multistage-filter-bounds}, and
$1-|f_j(E_0)|^2\le1-(1-\eta_j)^2\le2\eta_j$.
The last inequality uses
$2(2^j-1)\ge2^j$ for $j\ge1$.

To bound the total number of calls to $U_H$ and $U_H^\dagger$ made by the
filters in rejected attempts, let $X$ be the amount subtracted from $B$
during an attempt if it is rejected, and set
$X=0$ if it is accepted.  Rejection at filter $j$ gives
\[
X=\sum_{\ell=0}^j2^\ell=2^{j+1}-1.
\]
The amount subtracted for the accepted attempt will be added separately.
The small probabilities of late rejection keep $\mathbb E[2^X]$ bounded:
using $q_0\le1$ and~\eqref{eq:late-rejection-probability},
\begin{align}
\mathbb E[2^X]
&=p_{\rm pass}+2q_0
  +\sum_{j=1}^K q_j2^{2^{j+1}-1}\notag\\
&\le1+2+\sum_{j\ge1}
3\cdot16^{-2^j}2^{2^{j+1}-1}\notag\\
&=3+\frac32\sum_{j\ge1}4^{-2^j}
\le3+\frac32\sum_{r\ge2}4^{-r}<4.
\label{eq:failed-filter-moment}
\end{align}

\emph{Failure probability and query counts.}
Consider $R$ independent attempts without the budget check in
Algorithm~\ref{alg:increasing-accuracy-filtering}.
By~\eqref{eq:multistage-pass-probability}, the probability that none is
accepted is at most $(15/16)^R\le\zeta/2$.
Writing $X_t$ for the value of $X$ in attempt $t$, independence gives
\[
\mathbb E\!\left[2^{\sum_{t=1}^R X_t}\right]
=\prod_{t=1}^R\mathbb E[2^{X_t}]
<4^R.
\]
Markov's inequality therefore gives
\begin{equation}
\Pr\!\left[
\sum_{t=1}^R X_t>2R+\log_2\frac2\zeta
\right]
\le\frac{\mathbb E\!\left[2^{\sum_{t=1}^R X_t}\right]}
{2^{2R+\log_2(2/\zeta)}}
<\frac{4^R}{4^R(2/\zeta)}=\frac\zeta2.
\label{eq:failed-filter-tail}
\end{equation}

An accepted attempt subtracts $2^{K+1}-1$ from $B$.
If at least one of the $R$ attempts is accepted and
$\sum_{t=1}^R X_t\le2R+\log_2(2/\zeta)$, the total amount subtracted
through the first accepted attempt is at most
\[
2^{K+1}-1+\sum_{t=1}^R X_t
\le2^{K+1}-1+2R+\log_2\frac2\zeta,
\]
which does not exceed the initial value of $B$.  All deductions are
nonnegative, so the budget check cannot halt the algorithm before the first
accepted attempt finishes.  A union bound and~\eqref{eq:failed-filter-tail} now give
\[
\Pr[\mathrm{failure}]
\le(1-p_{\rm pass})^R
  +\Pr\!\left[\sum_{t=1}^R X_t>2R+\log_2\frac2\zeta\right]
\le\frac\zeta2+\frac\zeta2=\zeta.
\]

An attempt can return success only if the budget available at its start
is sufficient to apply all $K+1$ filters.
The budget depends only on preceding attempts, while $\rho_c$ is prepared
independently, so conditioning on sufficient budget leaves the input state
$\rho_c$ unchanged.  Hence $\rho_{\rm succ}=\rho_{\rm pass}$ and its
trace-distance error is at most $\varepsilon$.

Finally, the algorithm prepares $\rho_c$ at most
$R=\cO(\log(1/\zeta))$ times.  By
Proposition~\ref{prop:constant-ground-weight}, these preparations use
$\cO(R/\gamma)$ state-preparation queries and
$\cO(R/(\gamma\delta))$ Hamiltonian queries in total.
The total number of Hamiltonian queries used by the filters is at most
$C/\delta$ times the initial value of $B$.
Since $\eta_{K-1}>\varepsilon/4$, minimality of
$K$ gives $2^K<2\log_{16}(4/\varepsilon)=\cO(\log(1/\varepsilon))$.
Consequently,
\begin{align*}
Q_I&=\cO(R/\gamma)
=\cO\!\left(\frac1\gamma\log\frac1\zeta\right),\\
Q_H&=\cO\!\left(\frac{R}{\gamma\delta}
   +\frac{2^K+R+\log(1/\zeta)}\delta\right)\\
&=\cO\!\left(\frac1{\gamma\delta}\log\frac1\zeta
   +\frac1\delta\log\frac1\varepsilon\right),
\end{align*}
where the last line uses $\gamma\le1$.

\emph{Gate count.}
There are at most $R$ preparations of $\rho_c$, each with the gate count in
\eqref{eq:coarse-amplification-gate-cost}.
By Lemma~\ref{lem:spectral-filter}, applying $U_{\eta_j,\delta}$ uses
$\cO((a_H+1)2^j/\delta)$ additional one- and two-qubit gates.
The sum of $2^j$ over all applied filters is at most the initial value of
$B$.  Hence the total gate count is
\[
\cO\!\left[
R\left(\frac{a_H+1}{\gamma\delta}
  +\frac{n+a_H+\log(2/\gamma)}{\gamma}\right)
+\frac{a_H+1}{\delta}\left(2^K+R+\log\frac1\zeta\right)
\right].
\]
Substituting $R=\cO(\log(1/\zeta))$ and
$2^K=\cO(\log(1/\varepsilon))$, and using $\gamma\le1$, gives
\eqref{eq:increasing-accuracy-gate-cost}.
\end{proof}

We now express the query bounds of
Proposition~\ref{prop:increasing-accuracy-filtering} in the original
Hamiltonian parameters to prove Theorem~\ref{thm:upper-bound}.

\begin{proof}[Proof of Theorem~\ref{thm:upper-bound}]
Apply Proposition~\ref{prop:increasing-accuracy-filtering} and use
$\delta^{-1}=\cO(\alpha/\Delta)$ from
Lemma~\ref{lem:hamiltonian-rescaling}.
\end{proof}

We next use Theorem~\ref{thm:upper-bound} to bound the trace-distance error
of the average output state, including the state returned on failure.

\begin{proof}[Proof of the upper bounds in Theorem~\ref{thm:unconditional-output}]
Apply Theorem~\ref{thm:upper-bound} with error target $\varepsilon/2$ on
success and failure probability $\zeta=\varepsilon/2$.  On
failure, return an arbitrary system state and omit the flag.  The resulting
average output state $\rho$ is a mixture of $\rho_{\rm succ}$ and the state returned on
failure.  Convexity of trace distance gives
\[
\Dtr(\rho,\proj{\psi_0})
\le\frac\varepsilon2+\Pr[\mathrm{failure}]
\le\varepsilon.
\]
Since $\gamma\le1$, the query bounds in
Theorem~\ref{thm:upper-bound} reduce to the bounds stated in
Theorem~\ref{thm:unconditional-output}.
\end{proof}

\section{Optimality of the query bounds}
\label{sec:optimality}

The first two lemmas give lower bounds on $Q_H$ for algorithms with constant
success probability and are used
to prove Theorem~\ref{thm:expected-query-complexity}.  Both bounds hold with
unrestricted access to $U_I$ and $U_I^\dagger$.  We next obtain the
worst-case lower bound on $Q_H$ from Somma and de Wolf~\cite{SdW26}, then
prove the lower bounds on $U_I$ queries in both settings.

We work first with exact $1$-block-encodings of Hamiltonians $\widetilde H$
satisfying
\begin{equation}
\norm{\widetilde H}\le1,
\qquad
\widetilde E_0\le-\delta,
\qquad
\widetilde E_k\ge\delta\quad\text{for }k\ge1,
\qquad
|\langle\psi_0|U_I|0\rangle|\ge\gamma.
\label{eq:lower-normalized-promise}
\end{equation}
To recover the parameters of Section~\ref{sec:model}, set
\begin{equation}
H:=\alpha\widetilde H,
\qquad
\mu:=0,
\qquad
\Delta:=2\alpha\delta.
\label{eq:lower-bound-rescaling}
\end{equation}
The same oracle $U_H$ is then an $\alpha$-block-encoding of $H$, and
\eqref{eq:lower-normalized-promise} becomes the spectral promise
\eqref{eq:separator-promise}.  We therefore
prove the technical bounds in terms of $\delta$ and use
$\delta=\Delta/(2\alpha)$ when returning to the statements in
Section~\ref{sec:model}.

For the hybrid arguments, we write the algorithm in the standard form for
quantum query algorithms~\cite{BBBV97}.  Let $\mathsf B$ be the
success-flag qubit, and $\mathsf W$ the remaining workspace.  Classical
random choices and intermediate measurement outcomes can be kept coherently
in $\mathsf W$.  Branches with fewer queries use controls in $\mathsf W$ to
disable the remaining calls.  If $x$ labels the input oracles, then after $m$
oracle calls the state before the final measurement can be written as
\begin{equation}
\ket{\Psi_m^x}_{\mathsf B\mathsf S\mathsf W}
=A_m\widehat U_{m,x}A_{m-1}\cdots
A_1\widehat U_{1,x}A_0\ket0_{\mathsf B\mathsf S\mathsf W},
\label{eq:lower-query-circuit}
\end{equation}
where $A_0,\ldots,A_m$ are oracle-independent unitaries and
\[
\widehat U_{q,x}
=\proj0_{\mathsf Q}\otimes I
+\proj1_{\mathsf Q}\otimes U_{q,x},
\qquad
U_{q,x}\in
\{U_{H,x},U_{H,x}^\dagger,U_{I,x},U_{I,x}^\dagger\}.
\]
Here $\mathsf Q$ is a one-qubit control in $\mathsf W$; identity operators on
registers not acted on by $U_{q,x}$ are implicit.  Setting $\mathsf Q$ to
$\ket1$ gives an uncontrolled call.  Measuring $\mathsf B$ gives the flag
$b$ from Section~\ref{sec:model}; conditioned on $b=1$, tracing out
$\mathsf B\mathsf W$ gives $\rho_{\rm succ}$ on $\mathsf S$.

\subsection{Hamiltonian queries}
\label{subsec:hamiltonian-query-lower}

\subsubsection{Dependence on the initial overlap}
\label{subsec:overlap-lower}

Arihara and Murao~\cite[Corollary~7]{AM26} establish the joint overlap--gap
lower bound for constant-error ground-state preparation with block-encoding
access.  We use the following version for the access and output model of
Section~\ref{sec:model}.

\begin{lemma}[Overlap lower bound]
\label{lem:overlap-lower}
Fix \(0<s_0\le1\) and \(0<\varepsilon\le9/20\).  Let
\(0<\gamma\le1/\sqrt2\) and \(0<\delta<1\).  Suppose an algorithm succeeds
with probability at least $s_0$ and satisfies
$\Dtr(\rho_{\rm succ},\proj{\psi_0})\le\varepsilon$ on every instance
satisfying~\eqref{eq:lower-normalized-promise}.  If it uses at most $T$
Hamiltonian queries, then
\[
T=\Omega_{s_0}\!\left(\frac1{\gamma\delta}\right)
\]
even with unrestricted access to $U_I$ and $U_I^\dagger$.
\end{lemma}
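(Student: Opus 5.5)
We reduce to a search-type hybrid argument in which the identity of the ground state is hidden and only the Hamiltonian oracle reveals it. Take $N+1$ basis states with $N+1\approx 1/\gamma^2$, and let $U_I$ prepare the uniform superposition $\ket\phi=(N+1)^{-1/2}\sum_{x}\ket x$, so $|\langle x|\phi\rangle|\ge\gamma$ for every $x$. Since $U_I$ is then instance-independent, access to it is unrestricted. For each marked $x\in\{0,\dots,N\}$ define
\[
\widetilde H_x:=\delta I-2\delta\proj x .
\]
It has ground state $\ket x$ with energy $-\delta$ and all other eigenvalues $\delta$, so it satisfies~\eqref{eq:lower-normalized-promise}. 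The reference instance is $\widetilde H_{\rm ref}:=\delta I$, which carries no information about $x$. We fix a $1$-block-encoding $U_{H,x}$ of each $\widetilde H_x$, chosen so that $U_{H,x}$ and $U_{H,\rm ref}$ differ by an operator of norm $\cO(\sqrt\delta)$ supported only on $\ket x$.

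To build the encoding, write $\widetilde H_x=\cos\Theta_x$ with $\Theta_x:=\arccos\delta$ off $\ket x$ and $\arccos(-\delta)$ on $\ket x$. On one ancilla, implement $\cos\Theta_x$ by a controlled rotation. The rotation angles on $\ket x$ differ from the reference by $\arccos(-\delta)-\arccos\delta=\cO(\delta)$, which would even give an $\cO(\delta)$ operator difference. The target, however, is only $1/(\gamma\delta)$, so an $\cO(\delta)$ difference per query would yield the stronger bound $1/(\gamma\delta)$ directly.

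The main step is the hybrid argument of Lemma~\ref{lem:query-hybrid}, with $\ket{\Psi^x}$ the final states~\eqref{eq:lower-query-circuit}. For any normalized $\ket\Phi$ and any query of Hamiltonian type,
\[
\sum_x\norm{(U_{H,x}-U_{H,\rm ref})\ket\Phi}^2\le c^2\delta^2\sum_x\norm{(\proj x\otimes I)\ket\Phi}^2\le c^2\delta^2 .
\]
Queries to $U_I$ contribute nothing. Lemma~\ref{lem:query-hybrid} therefore gives
\[
\Bigl(\sum_x\norm{\ket{\Psi^x}-\ket{\Psi^{\rm ref}}}^2\Bigr)^{1/2}\le c\,\delta T .
\]

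Next comes distinguishability. On instance $x$, the algorithm succeeds with probability at least $s_0$, and the accepted system state has overlap at least $1-\varepsilon$ with $\ket x$ (the Fuchs--van de Graaf bound gives $\bra x\rho_{\rm succ}\ket x\ge 1-\varepsilon$ at least). Let $P_x:=\proj1_{\mathsf B}\otimes\proj x_{\mathsf S}\otimes I$. Then $\norm{P_x\ket{\Psi^x}}^2\ge s_0(1-\varepsilon)$, whereas $\sum_x\norm{P_x\ket{\Psi^{\rm ref}}}^2\le1$. By Cauchy--Schwarz, at most a small fraction of the $x$ can have $\norm{P_x\ket{\Psi^{\rm ref}}}^2\ge s_0(1-\varepsilon)/4$. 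For all other $x$, $\norm{\ket{\Psi^x}-\ket{\Psi^{\rm ref}}}\ge c'\sqrt{s_0(1-\varepsilon)}$, so the left-hand side above is at least $c''\sqrt{s_0(N+1)}$.

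Combining the two bounds gives $T=\Omega(\sqrt{s_0N}/\delta)=\Omega_{s_0}(1/(\gamma\delta))$ for $N+1=\lfloor\gamma^{-2}\rfloor$. The condition $\gamma\le1/\sqrt2$ ensures $N\ge1$, and $\varepsilon\le9/20$ keeps $1-\varepsilon$ bounded away from zero.

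The main obstacle is the $\cO(\delta)$-close block-encodings. We must check that a single-ancilla rotation encoding, with $\Theta$ depending on the projector onto $\ket x$, can be written as $U_{H,\rm ref}$ times a unitary that differs from the identity only on $\ket x\otimes\mathbb C^2$ by norm $2\sin(\cdot)=\cO(\delta)$; controlled versions and adjoints inherit this bound. If exact rotations cause trouble, we would instead use the standard $\begin{pmatrix}A&\sqrt{I-A^2}\\ \sqrt{I-A^2}&-A\end{pmatrix}$ encoding and bound its difference directly. Either way, this is a Lipschitz estimate in $A$ near eigenvalues $\pm\delta$, far from $\pm1$.
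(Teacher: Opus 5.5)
Your construction and hybrid estimate coincide with the main case of the paper's proof. Both use the Hamiltonians $\delta(I-2\proj x)$, the uninformative reference $\delta I$, an instance-independent uniform guiding state, and the bound $(\sum_x\norm{\ket{\Psi^x}-\ket{\Psi^{\rm ref}}}^2)^{1/2}\le2\delta T$. The block-encoding you flag as the main obstacle is immediate. Since $\widetilde H_x^2=\delta^2I$ for every $x$, the standard encoding is $\sqrt{1-\delta^2}\,X\otimes I+\delta Z\otimes(I-2\proj x)$, which differs from the reference by exactly $-2\delta Z\otimes\proj x$. Drop your intermediate $\cO(\sqrt\delta)$ claim, which would only give $1/(\gamma\sqrt\delta)$.

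The genuine gap is the counting step, and it fails when $\gamma$ is large.
\begin{itemize}
\item The step is really a Markov count. At most $4/(s_0(1-\varepsilon))$ labels can have $\norm{P_x\ket{\Psi^{\rm ref}}}^2\ge s_0(1-\varepsilon)/4$. This is a small fraction of the $N+1=\lfloor\gamma^{-2}\rfloor$ labels only when $\gamma^{-2}\ge C/s_0$.
\item When $\gamma$ is bounded below by a constant depending on $s_0$, your inequalities give nothing. For example, take $N+1=2$ and $s_0\le1/2$. The three inequalities you use are all satisfied with $T=0$ by $\ket{\Psi^x}=\ket{\Psi^{\rm ref}}$ putting weight $1/2$ on each flagged label. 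The reference is not a valid instance, so nothing rules this out.
\item The lemma still asserts $T=\Omega_{s_0}(1/\delta)$ in this regime. You cannot inherit it from smaller $\gamma$, because a larger $\gamma$ is a stronger promise with fewer instances.
\end{itemize}

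The paper handles this case separately by comparing two valid instances $t=1,2$ directly, using the observable $M=\proj1_{\mathsf B}\otimes(P_1-P_2)$. Correctness on both instances forces the expectations of $M$ to differ by at least $2s_0(1-2\varepsilon)$. This uses the conditional bound on the wrong label, which your argument discards. Meanwhile $\norm{U_{H,1}-U_{H,2}}=2\delta$ bounds the difference by $4\delta T$.

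This is where $\varepsilon\le9/20$ is actually needed: it keeps $1-2\varepsilon\ge1/10$. For your many-label counting, any $\varepsilon<1$ would do.

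To complete the proof, add this second case. Then either restrict your counting to $\gamma^{-2}\ge C/s_0$, or use the paper's averaged form $Nc-1\le2\delta T\sqrt N$ over $N$ labels with $c=s_0(1-\varepsilon)$, which is valid once $N\ge2/c$.
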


A proof covering controlled Hamiltonian queries is given in
Appendix~\ref{app:overlap-lower}.

\subsubsection{Dependence on the target precision}
\label{subsec:precision-lower}

We adapt the trigonometric-polynomial argument of Mande and de
Wolf~\cite[Claims~5.2--5.3]{MdW26} to block-encoding queries and
success-conditioned ground-state preparation.

\begin{lemma}[Precision lower bound]
\label{lem:precision-lower}
Let $0<s_0\le1$, $0<\varepsilon<1/2$, $0<\gamma\le1/\sqrt2$, and
$0<\delta\le1/16$.  Suppose an algorithm succeeds with probability at
least $s_0$ and satisfies
$\Dtr(\rho_{\rm succ},\proj{\psi_0})\le\varepsilon$ on every instance
satisfying~\eqref{eq:lower-normalized-promise}.  If it uses at most $T$
Hamiltonian queries, then
\[
T\ge\frac1{448\delta}\log\frac{s_0^2}{4\varepsilon}
\]
even with unrestricted access to $U_I$ and $U_I^\dagger$.
\end{lemma}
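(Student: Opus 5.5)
We will use the polynomial method, comparing a one-parameter family of instances whose amplitudes are trigonometric polynomials in the parameter, following Mande and de Wolf~\cite[Claims~5.2--5.3]{MdW26}.

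\emph{Instance family.} Take a two-dimensional system with basis $\ket0,\ket1$, and let $U_I$ prepare a fixed state with overlap at least $\gamma$ with both basis vectors. For example, use $\ket\phi=(\ket0+\ket1)/\sqrt2$; since $\gamma\le1/\sqrt2$, this is valid whichever basis vector is the ground state. For a parameter $\theta\in[-\pi,\pi)$, set
\[
\widetilde H_\theta:=\cos\theta\,Z,
\]
and realize it by the exact $1$-block-encoding
\[
U_\theta=\begin{pmatrix}\cos\theta Z& \sin\theta I\\ \sin\theta I&-\cos\theta Z\end{pmatrix}
\]
with one ancilla. The ground state of $\widetilde H_\theta$ is $\ket1$ when $\cos\theta\ge\delta$ and $\ket0$ when $\cos\theta\le-\delta$. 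The gap promise holds whenever $|\cos\theta|\ge\delta$, and it fails only on a set of $\theta$ of measure $O(\delta)$. The entries of $U_\theta$ and $U_\theta^\dagger$, including their controlled versions, are trigonometric polynomials of degree $1$ in $\theta$. Since $U_I$ does not depend on $\theta$, it is harmless and may be called without restriction.

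\emph{Polynomial from the algorithm.} Write the algorithm in the form~\eqref{eq:lower-query-circuit} with at most $T$ Hamiltonian queries. Every amplitude of the final state is then a trigonometric polynomial of degree at most $T$. Define
\[
q(\theta):=\Pr[b=1,\ \text{system}=\ket0],
\]
which is a real trigonometric polynomial of degree at most $2T$ with $0\le q\le1$. Consider the instances with $\cos\theta\ge\delta$, where the ground state is $\ket1$. Success probability at least $s_0$ together with $\Dtr(\rho_{\rm succ},\proj1)\le\varepsilon$ gives $q(\theta)\le\varepsilon\Pr[b=1]\le\varepsilon$. Now consider the instances with $\cos\theta\le-\delta$, where the ground state is $\ket0$. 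There $q(\theta)\ge(1-\varepsilon)s_0\ge s_0/2$.

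\emph{Growth argument.} The argument needs a polynomial that is small on most of the circle and large somewhere, so this step takes care. Let $r:=q/\varepsilon$. Then $|r|\le1$ on the arc $\{\cos\theta\ge\delta\}$, but that arc covers only about half the circle. To fix this, reparametrize so that the good region for the ``$\ket1$'' instances occupies all but $O(\delta)$ of the circle. Concretely, let
\[
\cos\theta\mapsto\cos\theta_0+\kappa(\cos\theta-\cos\theta_0)
\]
be an affine substitution. More simply, set
\[
\widetilde H_\theta:=(\cos\theta-1+2\delta')Z
\]
with $\delta'\approx2\delta$. The resulting operator has norm at most $1$ and is block-encoded as a linear combination using $O(1)$ degree. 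Its ground state is $\ket1$ except on an arc of length $O(\sqrt\delta)$ around $\theta=0$. The degree then stays $O(T)$. However, the measure of the bad arc becomes $s=O(\sqrt\delta)$, which would give only $T\gtrsim\delta^{-1/2}\log$.

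To recover $\delta^{-1}\log$, I would instead follow Mande--de Wolf and use a Hamiltonian linear in $\theta$ near its zero. Use $\widetilde H_\theta=\sin\theta\,Z$, which gives a degree-$1$ block encoding. Then substitute $\theta=\lambda\omega$, where $\omega$ ranges over the circle and $\lambda\approx1/(8\delta)$ is an integer. The map is
\[
\omega\mapsto\sin(\lambda\omega)\ \text{near}\ \dots
\]
but this does not produce a clean polynomial either.

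The cleanest route I would commit to uses Lemma~\ref{lem:trig-growth} applied to $r=q/\varepsilon$, after checking that the instances where $r>1$ can occur (the $\ket0$-ground instances together with the promise-violating ones) have measure $s\le c\delta$. This is arranged by taking $\widetilde H_\theta=\bigl(\delta-\tfrac12(1+\cos\theta)\cdot c\bigr)$-type families whose excited side covers all but $O(\delta)$ of $\theta$. Given that, Lemma~\ref{lem:trig-growth} yields
\[
\frac{s_0}{2\varepsilon}\le\sup|r|\le\exp(8Ts)=\exp(O(T\delta)).
\]
Taking logarithms gives
\[
T\ge\Omega\!\left(\frac1\delta\right)\log\frac{s_0}{2\varepsilon},
\]
and careful constants should yield $\frac1{448\delta}\log\frac{s_0^2}{4\varepsilon}$; the factor $s_0^2$ comes from bounding failure cases via Cauchy--Schwarz on amplitudes.

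\emph{Main obstacle.} The hard step is designing a degree-$O(1)$ block-encoded family in which the region where the $\ket0$ outcome must be likely (or where the promise fails) has measure only $O(\delta)$, while $\delta\le1/16$ keeps the gap valid elsewhere. The constant $448$ would then come out of the degree bookkeeping and Lemma~\ref{lem:trig-growth}.
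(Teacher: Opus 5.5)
Your framework matches the paper's: a two-level system, $U_I\ket0=\ket+$, a degree-one exact block-encoding, and Lemma~\ref{lem:trig-growth}. There is, however, a genuine gap at exactly the step you flag as the main obstacle: you never exhibit a family whose exceptional set has measure $\cO(\delta)$. The family you finally commit to, with coefficient $\delta-\tfrac c2(1+\cos\theta)$, has a quadratic extremum at $\theta=\pi$, just like the $(\cos\theta-1+2\delta')Z$ family you rejected. It therefore leaves an exceptional arc of length $\Theta(\sqrt\delta)$ and yields only $T=\Omega(\delta^{-1/2}\log(1/\varepsilon))$.

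This is not a matter of finding a better single family. Suppose you run once on $g(\theta)Z$, with $g$ a real trigonometric polynomial of degree $d$ and $|g|\le1$. You know $q\le\varepsilon$ only where $g\ge\delta$, and you need $g\le-\delta$ somewhere (the case with $\ket0$ and $\ket1$ swapped is symmetric). A global minimizer $\theta_m$ then has $g(\theta_m)\le-\delta$ and $g'(\theta_m)=0$. Bernstein's inequality $\norm{g''}_\infty\le d^2$ gives $g<\delta$ on an interval of length $4\sqrt\delta/d$ around $\theta_m$. Since $q$ has degree $2dT$, the factor $d$ cancels in the growth inequality and the $\sqrt\delta$ barrier persists. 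Relatedly, the $s_0^2$ in the statement does not come from Cauchy--Schwarz on amplitudes. It is the probability that two independent runs both succeed, and that is the missing idea.

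The paper runs the algorithm twice, independently, on the oppositely shifted Hamiltonians
\[
H_{\theta,\pm}=\frac{\sin\theta\pm2h}{1+2h}\,Z,\qquad h=\frac{\delta}{1-2\delta}.
\]
Each is exactly $1$-block-encoded, with degree one in $\theta$, by a linear combination of a unitary like your $U_\theta$ with $\pm Z$. Both runs use $U_I\ket0=\ket+$. The combined procedure accepts when both runs succeed and the measured outputs are $1$ and $0$, respectively. The acceptance probability $p(\theta)$ is a real trigonometric polynomial of degree at most $4T$.

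\begin{itemize}
\item If $|\sin\theta|\ge3h$, both Hamiltonians satisfy the promise and have the \emph{same} ground state. Acceptance then requires one run to err, so $p(\theta)\le\varepsilon$.
\item At $\theta=0$ they are $\pm2\delta Z$, which have \emph{opposite} ground states. Independence gives $p(0)\ge s_0^2(1-\varepsilon)^2\ge s_0^2/4$.
\end{itemize}

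The exceptional set $\{|\sin\theta|<3h\}$ now surrounds transversal zero crossings of $\sin\theta$ rather than an extremum, so its measure is $4\arcsin(3h)<28\delta$. Applying Lemma~\ref{lem:trig-growth} to $p/\varepsilon$ with degree $4T$ and $s=28\delta$ gives $s_0^2/(4\varepsilon)\le\exp(448T\delta)$. This is the stated bound, with $448=4\cdot4\cdot28$.
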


\begin{proof}
We compare two Hamiltonians whose ground states differ near $\theta=0$ but
agree when $|\sin\theta|$ is large.  For $\theta\in[-\pi,\pi)$ and $h>0$,
define
\[
H_{\theta,\pm}:=\frac{\sin\theta\pm2h}{1+2h}\,Z_{\mathsf S}.
\]
Write $c:=1+2h$.  Both Hamiltonians have norm at most one, since
$|\sin\theta\pm2h|\le1+2h=c$.

These Hamiltonians have exact $1$-block-encodings
\[
B_{\theta,\pm}:=(U_h^\dagger\otimes I_{\mathsf A\mathsf S})
\left[
\proj0_{\mathsf C}\otimes U_\theta
+\proj1_{\mathsf C}\otimes(I_{\mathsf A}\otimes(\pm Z_{\mathsf S}))
\right]
(U_h\otimes I_{\mathsf A\mathsf S}),
\]
where $\mathsf C$ and $\mathsf A$ are single-qubit ancillas, $U_h$ is a
unitary on $\mathsf C$, and
\[
U_\theta:=\sin\theta\,Z_{\mathsf A}\otimes Z_{\mathsf S}
+\cos\theta\,X_{\mathsf A}\otimes I_{\mathsf S},
\qquad
U_h\ket0_{\mathsf C}
=\frac1{\sqrt c}\ket0_{\mathsf C}
+\sqrt{\frac{2h}{c}}\ket1_{\mathsf C}.
\]
The two Pauli terms in $U_\theta$ anticommute, so $U_\theta$ is a Hermitian
unitary.  The bracketed operator is therefore a block-diagonal Hermitian
unitary, so $B_{\theta,\pm}$ is also a Hermitian unitary.  Its block on the
ancilla state $\ket0_{\mathsf C}\ket0_{\mathsf A}$ is
\begin{equation}
(\bra0_{\mathsf C}\bra0_{\mathsf A}\otimes I_{\mathsf S})B_{\theta,\pm}
(\ket0_{\mathsf C}\ket0_{\mathsf A}\otimes I_{\mathsf S})
=\frac{\sin\theta\pm2h}{c}\,Z_{\mathsf S}
=H_{\theta,\pm}.
\label{eq:shifted-block-encodings}
\end{equation}

Whenever $|\sin\theta|\ne2h$, both Hamiltonians are nonzero scalar
multiples of $Z$ and have unique ground states.  A positive coefficient
makes $\ket1$ the ground state, and a negative coefficient makes $\ket0$
the ground state.  The two ground states differ when
$|\sin\theta|<2h$; when $\sin\theta>2h$ they are both $\ket1$, and when
$\sin\theta<-2h$ they are both $\ket0$.

Use the same state-preparation oracle $U_I\ket0=\ket+$ for both families,
where $\ket+:=(\ket0+\ket1)/\sqrt2$.  Run the assumed preparation algorithm
independently on $B_{\theta,+}$ and $B_{\theta,-}$, and measure both output
system qubits in the computational basis.  Write $b_\pm$ for the success
flags and $z_\pm$ for the measurement outcomes.  Accept when both
runs succeed and the outcomes are $1,0$, respectively, and let
\[
p(\theta):=\Pr[b_+=b_-=1,\ z_+=1,\ z_-=0].
\]
If each run makes at most $T$ Hamiltonian queries, the combined circuit
makes at most $2T$ queries.  Every matrix entry of $B_{\theta,\pm}$ and its
inverse has trigonometric degree at most one in $\theta$, whereas all other
operations, including $U_I$ and $U_I^\dagger$, are independent of $\theta$.
Consequently, each final amplitude has degree at most $2T$, and taking
squared moduli and summing over the accepting outcomes shows that
$p(\theta)$ is a real trigonometric polynomial of degree at most $4T$.
This is the polynomial-method argument of~\cite[Claim~5.3]{MdW26}.
The oracle unitaries, and hence this polynomial, are defined for every
$\theta$, including values outside the spectral promise.

For $|\sin\theta|\ge3h$, the magnitude of each coefficient of $Z$ is at
least $h/c$, so the two Hamiltonians have spectral gaps at least $2h/c$.
Set
\[
h:=\frac{\delta}{1-2\delta},
\qquad
c=\frac1{1-2\delta},
\qquad
\frac hc=\delta.
\]
Their ground energies are then at most $-\delta$ and their excited energies
are at least $\delta$, so their gaps are at least $2\delta$ and they satisfy
\eqref{eq:lower-normalized-promise} with separator zero.  Moreover,
$\ket+$ has overlap $1/\sqrt2\ge\gamma$ with either ground state.  If
$\sin\theta\ge3h$, both ground states are $\ket1$, so acceptance requires
the minus run to produce an incorrect outcome.  The conditional
trace-distance guarantee implies
\[
p(\theta)\le\Pr[b_-=1,z_-=0]
=\Pr[b_-=1]\Pr[z_-=0\mid b_-=1]
\le\varepsilon.
\]
If $\sin\theta\le-3h$, both ground states are $\ket0$; acceptance then
requires the plus run to output $1$, so
$p(\theta)\le\Pr[b_+=1,z_+=1]\le\varepsilon$.  Therefore
\begin{equation}
0\le p(\theta)\le\varepsilon
\qquad\text{whenever }|\sin\theta|\ge3h.
\label{eq:acceptance-away-from-zero}
\end{equation}

At $\theta=0$, the two Hamiltonians are $H_{0,\pm}=\pm2\delta Z$, so they
also satisfy the promise, with spectral gaps $4\delta$ and ground states
$\ket1$ and $\ket0$, respectively.  By independence, the probability that
both runs succeed is at least $s_0^2$.  Conditioned on their success, both specified
measurement outcomes occur with probability at least $(1-\varepsilon)^2$.
Hence
\begin{equation}
p(0)\ge s_0^2(1-\varepsilon)^2,
\label{eq:acceptance-at-zero}
\end{equation}
which is at least $s_0^2/4$ because $\varepsilon<1/2$.

Thus $p(\theta)$ is at most $\varepsilon$ outside
\[
E:=\{\theta\in[-\pi,\pi):|\sin\theta|<3h\},
\]
but is at least $s_0^2/4$ at zero.  Since $\delta\le1/16$, we have
$h\le1/14$, $h\le8\delta/7$, and
\[
|E|=4\arcsin(3h)\le24h<28\delta,
\qquad
|E|\le4\arcsin(3/14)<\frac\pi2.
\]
Apply the growth inequality from Lemma~\ref{lem:trig-growth} to
$q(\theta):=p(\theta)/\varepsilon$, with degree at most $4T$ and an
exceptional set of length $|E|$.  The lower bound at $\theta=0$ from
\eqref{eq:acceptance-at-zero}, together with the bound outside $E$ in
\eqref{eq:acceptance-away-from-zero}, yields
\[
\frac{s_0^2(1-\varepsilon)^2}{\varepsilon}
\le\sup_\theta|q(\theta)|
\le\exp\!\bigl(4(4T)(28\delta)\bigr).
\]
Taking logarithms and using $(1-\varepsilon)^2\ge1/4$ gives
\begin{equation}
448T\delta\ge
\log\frac{s_0^2(1-\varepsilon)^2}{\varepsilon}
\ge\log\frac{s_0^2}{4\varepsilon}.
\label{eq:precision-explicit-lower}
\end{equation}
This proves the lemma.
\end{proof}

Combining this precision lower bound with Lemma~\ref{lem:overlap-lower}
gives the lower bound on $\mathbb E[q_H]$ in
Theorem~\ref{thm:expected-query-complexity}.  We first truncate an algorithm
with a random query count to one with a fixed query bound.

\begin{proof}[Proof of the {$\mathbb E[q_H]$} lower bound in Theorem~\ref{thm:expected-query-complexity}]
Set $\delta:=\Delta/(2\alpha)\le1/16$, and suppose an
algorithm outputs a state within trace distance $\varepsilon$ of
$\proj{\psi_0}$ with
$\mathbb E[q_H]\le T$ on every valid instance.  Stop it before its
$(\lfloor2T\rfloor+1)$st Hamiltonian query and report failure if it has not
yet terminated.  The truncated algorithm uses at most $2T$ queries, and
Markov's inequality gives success probability $p\ge1/2$.  If $\rho$ is the
original output and $\rho_{\rm succ}$ is the output conditioned on
termination within the budget, then
\begin{equation}
\Dtr(\rho_{\rm succ},\proj{\psi_0})
\le\sqrt{1-\bra{\psi_0}\rho_{\rm succ}\ket{\psi_0}}
\le\sqrt{\frac{1-\bra{\psi_0}\rho\ket{\psi_0}}p}
\le\sqrt{2\varepsilon}=: \eta.
\label{eq:expected-truncation-error}
\end{equation}
Since $\eta<9/20$, Lemma~\ref{lem:overlap-lower} with $s_0=1/2$ gives
$T=\Omega(1/(\gamma\delta))$.  For sufficiently small $\varepsilon$,
Lemma~\ref{lem:precision-lower} also applies and gives
$T=\Omega(\delta^{-1}\log(1/\varepsilon))$, because
$\log(1/\eta)=\tfrac12\log(1/\varepsilon)-\cO(1)$.
For the remaining $\varepsilon\le1/10$,
$\delta^{-1}\log(1/\varepsilon)=\cO((\gamma\delta)^{-1})$, so the
overlap lower bound suffices.  Taking the maximum of the
two bounds and substituting $\delta=\Delta/(2\alpha)$ yields
\[
T=\Omega\!\left(
\frac{\alpha}{\gamma\Delta}
+\frac{\alpha}{\Delta}\log\frac1\varepsilon
\right).
\]

\end{proof}

For the $Q_H$ lower bound in Theorem~\ref{thm:unconditional-output}, we use
the hard family of Somma and de Wolf for small $\gamma$ and
Lemma~\ref{lem:precision-lower} for constant $\gamma$.

\begin{proof}[Proof of the $Q_H$ lower bound in Theorem~\ref{thm:unconditional-output}]
Somma and de Wolf reduce their unique-ground-state hard family to block
encodings~\cite[Section~1.2]{SdW26}.  After negating the encoded
Hamiltonian, its eigenvalue on the unique ground state is $-\sin\theta$ and
every other eigenvalue is zero, where the nonzero eigenphase $\theta$ is
fixed across the family.  Scaling the Hamiltonian by $\alpha$ and taking
\[
\theta=\arcsin(\Delta/\alpha),
\qquad
\mu=-\frac\Delta2,
\]
gives ground energy $-\Delta$, excited energy zero, and the threshold
promise~\eqref{eq:separator-promise}.  Because
$\theta=\Theta(\Delta/\alpha)$ for $\Delta\le\alpha/8$, their lower bound
gives, for sufficiently small $\gamma$,
\[
Q_H=\Omega\!\left(
\frac{\alpha}{\gamma\Delta}\log\frac1\varepsilon
\right)
\]
in dimension $\cO(\gamma^{-2}\log^2(1/\varepsilon))$.
For $\gamma$ bounded below by an absolute constant,
Lemma~\ref{lem:precision-lower} with $s_0=1$ and
$\delta=\Delta/(2\alpha)$ gives
$Q_H=\Omega((\alpha/\Delta)\log(1/\varepsilon))$, which is the same bound
because $1/\gamma=\cO(1)$.
\end{proof}

\subsection{State-preparation queries}
\label{subsec:initial-query-lower}

We prove lower bounds on queries to $U_I$ when
$Q_H=o(\sqrt N/\delta)$, where the system dimension is $N+1$ and
$2\delta$ is the normalized gap.  Without calling $U_I$, one can prepare
a state within trace distance $\varepsilon$ of the ground state on success,
with constant success probability, using
$\cO((\sqrt N+\log(1/\varepsilon))/\delta)$ calls to $U_H$.  Prepare
the maximally entangled state
\[
\ket\Phi=\frac1{\sqrt{N+1}}\sum_{x=0}^N\ket x\ket x,
\qquad
\norm{(\proj{\psi_0}\otimes I)\ket\Phi}^2=\frac1{N+1},
\]
where the second register is a reference.  The reflection about the target
subspace is $R_g\otimes I$.  Tensoring the transducer of
Proposition~\ref{prop:ground-reflection} with the identity on the reference
gives this transduction action with constant transduction complexity.
Amplitude amplification from $\ket\Phi$, implemented
as in Section~\ref{sec:constant-precision-amplification}, raises the overlap with this subspace
to a constant using $\cO(\sqrt N/\delta)$ calls to $U_H$.  The reflection about
$\ket\Phi$ uses no input oracle.  One spectral filter with accuracy
$\Theta(\varepsilon)$ costs $\cO(\delta^{-1}\log(1/\varepsilon))$ further
calls to $U_H$.  On its accepting outcome, discarding the reference gives
the claimed approximation.  We now prove the query tradeoff of
Theorem~\ref{thm:initial-query-tradeoff}.

\begin{proof}[Proof of Theorem~\ref{thm:initial-query-tradeoff}]
We construct $N$ instances with ground state $\ket t$ and compare each
computation with one using fixed oracles independent of $t$.
Set $\delta=\Delta/(2\alpha)$ and $c_0=s_0(1-\varepsilon)$.
On the basis $\{\ket\bot,\ket1,\ldots,\ket N\}$, with $\ket\bot:=\ket0$, define
\[
P_t:=\proj t,
\qquad
\widetilde H_t:=\delta I-2\delta P_t,
\qquad
U_{H,t}:=\sqrt{1-\delta^2}\,X\otimes I
+\delta Z\otimes(I-2P_t).
\]
$\widetilde H_t\ket t=-\delta\ket t$, while
$\widetilde H_t\ket x=\delta\ket x$ for $x\ne t$.  Moreover,
\[
U_{H,t}^2=(1-\delta^2)I+\delta^2(I-2P_t)^2=I,
\qquad
(\bra0\otimes I)U_{H,t}(\ket0\otimes I)
=\delta(I-2P_t)=\widetilde H_t.
\]
The cross terms in $U_{H,t}^2$ vanish because $XZ+ZX=0$.  Thus $\ket t$
is the unique ground state, the gap is $2\delta$, and $U_{H,t}$ is an exact
$1$-block-encoding of $\widetilde H_t$.
Choose the guiding-state oracle to act as
\[
U_{I,t}\ket\bot=\sqrt{1-\gamma^2}\ket\bot+\gamma\ket t,
\qquad
U_{I,t}\ket t=-\gamma\ket\bot+\sqrt{1-\gamma^2}\ket t,
\]
and as the identity on the orthogonal subspace.  Rescaling to
$H_t=\alpha\widetilde H_t$ gives valid instances for the theorem.
A successful preparation followed by a computational-basis measurement
returns the label $t$ with joint probability at least $c_0$.

Introduce the reference oracles
\[
U_{I,0}=I,
\qquad
U_{H,0}=\sqrt{1-\delta^2}\,X\otimes I+\delta Z\otimes I.
\]
For every normalized state $\ket\Phi$,
\begin{align*}
\sum_{t=1}^N\norm{(U_{I,t}-U_{I,0})\ket\Phi}^2
&\le N\max_t\norm{U_{I,t}-I}^2
\le2N\gamma^2,\\
\sum_{t=1}^N\norm{(U_{H,t}-U_{H,0})\ket\Phi}^2
&=4\delta^2\sum_{t=1}^N\norm{(I\otimes P_t)\ket\Phi}^2
\le4\delta^2.
\end{align*}
For inverse queries, $U_{H,t}^\dagger=U_{H,t}$ and
$U_{I,t}^\dagger-I=(U_{I,t}-I)^\dagger$.  For a controlled call,
\[
\widehat U_{q,t}-\widehat U_{q,0}
=\proj1_{\mathsf Q}\otimes(U_{q,t}-U_{q,0}).
\]
Thus the inequalities for $U_{I,t}$ and $U_{H,t}$ also hold for inverse and
controlled queries.  Write $\ket{\Psi^t}$ and $\ket{\Psi^0}$ for the final
states on these instances and on the reference oracles.  Apply
Lemma~\ref{lem:query-hybrid} first to $U_{I,t}$ with $U_{H,t}$ fixed for
each $t$, then to $U_{H,t}$ with $U_I=I$ fixed.  The triangle inequality gives
\begin{equation}
D:=\left(\sum_{t=1}^N\norm{\ket{\Psi^t}-\ket{\Psi^0}}^2\right)^{1/2}
\le2\gamma\sqrt N\,Q_I+2\delta Q_H.
\label{eq:two-oracle-hybrid-upper}
\end{equation}

Let $p_t^0$ be the reference probability of success and outcome $t$.
Since $\sum_t p_t^0\le1$, at least $N/2$ labels satisfy
$p_t^0\le2/N\le c_0/2$.  For each such label, the actual joint probability
is at least $c_0$, so contractivity of trace distance gives
$\norm{\ket{\Psi^t}-\ket{\Psi^0}}\ge c_0/2$.  Consequently,
\[
2\gamma\sqrt N\,Q_I+2\delta Q_H
\ge D\ge\frac{c_0\sqrt N}{2\sqrt2},
\]
and hence
\[
\gamma Q_I+\frac{\delta Q_H}{\sqrt N}
\ge\frac{c_0}{4\sqrt2}.
\]
Substituting $\delta=\Delta/(2\alpha)$ proves the theorem.
\end{proof}

The tradeoff gives Corollary~\ref{cor:initial-query-lower} when its $Q_H$
term vanishes.

\begin{proof}[Proof of Corollary~\ref{cor:initial-query-lower}]
The Hamiltonian-query term in Theorem~\ref{thm:initial-query-tradeoff}
is $o(1)$, while $s_0(1-\varepsilon)\ge9s_0/10$.
Subtracting that term gives $Q_I=\Omega_{s_0}(1/\gamma)$.
\end{proof}

For the lower bound on $\mathbb E[q_I]$ in
Theorem~\ref{thm:expected-query-complexity}, we truncate both query counts
and apply the same tradeoff.

\begin{proof}[Proof of the {$\mathbb E[q_I]$} lower bound in Theorem~\ref{thm:expected-query-complexity}]
Let $B_H,B_I$ be the suprema of the expected query counts over
valid $(N+1)$-dimensional instances.  If $B_I$ is infinite, the claim is
immediate.  Stop before either count exceeds $\lfloor4B_H\rfloor$ or
$\lfloor4B_I\rfloor$ and report failure.  Markov's inequality and the union bound give
success probability at least $1/2$, and~\eqref{eq:expected-truncation-error}
gives conditional error at most $\sqrt{2\varepsilon}$.
Theorem~\ref{thm:initial-query-tradeoff} applied to this truncated algorithm yields
\[
4\gamma B_I+\frac{2\Delta B_H}{\alpha\sqrt N}
\ge\frac{1-\sqrt{2\varepsilon}}{8\sqrt2}.
\]
Since $B_H=o((\alpha/\Delta)\sqrt N)$, this proves
$B_I=\Omega(1/\gamma)$ for all sufficiently large $N$.
\end{proof}

For the worst-case query model, we also need a state-preparation-query lower
bound with its dependence on $\varepsilon$.
\begin{lemma}[Unconditional state-preparation lower bound]
\label{lem:unconditional-initial-query-lower}
Fix $0<\varepsilon\le1/10$.  For each $N$, let $0<\gamma\le1/\sqrt2$ and
$0<\delta<1$, possibly depending on $N$.  Suppose an algorithm outputs a
state $\rho$ satisfying $\Dtr(\rho,\proj{\psi_0})\le\varepsilon$ on every
$(N+1)$-dimensional instance satisfying~\eqref{eq:lower-normalized-promise}.
If $Q_H=o(\sqrt N/\delta)$ uniformly over these instances, then for all
sufficiently large $N$ some instance satisfies
\begin{equation}
Q_I=\Omega\!\left(\frac1\gamma\log\frac1\varepsilon\right).
\label{eq:unconditional-initial-query-lower}
\end{equation}
\end{lemma}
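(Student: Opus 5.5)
We will reuse the hard family from the proof of Theorem~\ref{thm:initial-query-tradeoff}: instances $t=1,\ldots,N$ with ground state $\ket t$, block encodings $U_{H,t}$, guiding oracles $U_{I,t}$ rotating $\ket\bot$ toward $\ket t$ by overlap $\gamma$, and the $t$-independent reference oracles $U_{H,0},U_{I,0}=I$. The difference from the tradeoff proof is that we need the $\log(1/\varepsilon)$ factor. Requiring only $\norm{\ket{\Psi^t}-\ket{\Psi^0}}=\Omega(1)$ loses that factor, so we will instead run a hybrid argument that keeps track of the \emph{amplitude} on the label $t$ and exploits the fact that an unconditional error of $\varepsilon$ forces almost all of the output weight onto $\ket t$.

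The first step removes the Hamiltonian oracle. We apply Lemma~\ref{lem:query-hybrid} to $U_{H,t}$ with the guiding oracle fixed. Averaged over $t$, this changes the final states by at most $2\delta Q_H/\sqrt N=o(1)$ in root-mean-square Euclidean norm. For a $1-o(1)$ fraction of labels, the algorithm with oracle $U_{H,0}$ and $U_{I,t}$ is therefore still within trace distance $\varepsilon+o(1)$ of $\proj t$. It is now an algorithm that uses only $U_{I,t}$ and $U_{I,t}^\dagger$, and these act nontrivially only on the two-dimensional space $\operatorname{span}\{\ket\bot,\ket t\}$, as a rotation by angle $\arcsin\gamma$.

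The second step is a polynomial-method argument in the style of Lemma~\ref{lem:precision-lower}. We embed the rotation into a one-parameter family $U_{I,t}(\theta)$, a rotation by $\theta$ in $\operatorname{span}\{\ket\bot,\ket t\}$, whose entries are trigonometric polynomials of degree one. The probability $p_t(\theta)$ that the output, measured in the computational basis, equals $t$ is then a real trigonometric polynomial of degree at most $2Q_I$. At $\theta=0$ the oracle is the identity, so the output distribution is independent of $t$. Averaging over $t$ gives $\mathbb E_t\,p_t(0)\le1/N$, and so $\mathbb E_t\,p_t(\theta)$, a single polynomial, is $o(1)$ at $\theta=0$. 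We would also like $1-\mathbb E_t\,p_t(\theta)\le 2\varepsilon$ on a set of $\theta$ of measure $\Omega(\gamma)$ around $\arcsin\gamma$, since every $\theta$ with $|\sin\theta|\ge\gamma$ gives a valid instance. Applying Lemma~\ref{lem:trig-growth} to $q=(1-\mathbb E_t p_t)/(2\varepsilon)$ would then give a bound of the form $\exp(\cO(Q_I\gamma))\ge 1/\varepsilon$, that is, $Q_I=\Omega(\gamma^{-1}\log(1/\varepsilon))$.

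Two points need care here. The first is that $q\le1$ must hold on a set of measure $2\pi-\cO(\gamma)$. For that the good set should be $\{|\sin\theta|\ge\gamma\}$, whose complement has measure $\cO(\gamma)$, with $q$ large at $\theta=0$. This is exactly the configuration required: $q(0)\ge(1-o(1))/(2\varepsilon)$. The second is the Hamiltonian reduction in the first step, which applies only to the fixed value $\theta=\arcsin\gamma$. The reduction must therefore be done uniformly in $\theta$: the hybrid bound $2\delta Q_H/\sqrt N$ does not depend on $U_I$, so it holds for every $\theta$ simultaneously, and the average error it introduces is $o(1)$ uniformly.

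The main obstacle is that this $o(1)$ perturbation is additive, while we need $q\le1$, i.e.\ an excess of at most $2\varepsilon$, and $\varepsilon$ is fixed. For fixed $\varepsilon\le1/10$ this is fine, since $o(1)<\varepsilon$ for large $N$. So we would fix $\varepsilon$, take $N$ large enough that the Hamiltonian hybrid error is below $\varepsilon^2$, and absorb the resulting constants. We will also verify that the growth inequality is applied with the $\varepsilon$-dependent threshold, and that converting from Euclidean to probability error costs only a constant factor.
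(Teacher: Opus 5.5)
\textbf{There is a genuine gap in your first step.} Lemma~\ref{lem:query-hybrid} compares every circuit $t$ with one \emph{common} reference computation. The estimate $\sum_t\norm{(U_{H,t}-U_{H,0})\ket\Phi}^2=4\delta^2\sum_t\norm{(I\otimes P_t)\ket\Phi}^2\le4\delta^2$ needs the same $\ket\Phi$ in every term.

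Once the guiding oracle is $U_{I,t}(\theta)$ with $\sin\theta\ne0$, this fails. The states entering the Hamiltonian queries depend on $t$ and generally carry weight on $\ket t$, since each guiding call rotates amplitude onto $\ket t$. So $\sum_t\norm{P_t\ket{\Phi_q^{t}}}^2$ can be of order $N$. The hybrid then gives only $\norm{\ket{\Psi^t}-\ket{\widetilde\Psi^t}}\le2\delta Q_H$ per label, which is not small. Your bound $2\delta Q_H/\sqrt N$ therefore does depend on $U_I$, contrary to your claim: it holds only at $\theta=0$. In Theorem~\ref{thm:initial-query-tradeoff} this problem is avoided by first replacing $U_{I,t}$ with $I$.

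Your intermediate conclusion is in fact false. The paper's worst-case algorithm has $Q_H=\cO(\gamma^{-1}\delta^{-1}\log(1/\varepsilon))=o(\sqrt N/\delta)$ for constant $\gamma$. Run it with $U_{H,0}$, which block-encodes $\delta I$ and so has no eigenvalue below the threshold. Every spectral filter then rejects with high probability, and the output is far from $\proj t$ for every $t$ and every $\theta$.

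\textbf{The fix, which is what the paper does, is to keep the true Hamiltonian oracle inside the polynomial.} Because $U_{H,t}$ is independent of $\theta$, the probability $p_t(\theta)$ of outcome $t$ is already a real trigonometric polynomial of degree at most $2Q_I$, computed with $U_{H,t}$ and $U_{I,t}(\theta)$.

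The Hamiltonian hybrid is needed only at $\theta=0$. There $U_{I,t}(0)=I$ for all $t$, so the reference computation with $U_{H,0}$ is genuinely common. It gives $\frac1N\sum_tp_t(0)\le\frac1N+\frac{2\delta Q_H}{\sqrt N}\le\frac12$ for large $N$, so some single label $t$ has $p_t(0)\le1/2$.

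For that fixed $t$, the output guarantee gives $p_t(\theta)\ge1-\varepsilon$ exactly on $\{|\sin\theta|\ge\gamma\}$, with no perturbation. Applying Lemma~\ref{lem:trig-growth} to $q=(1-p_t)/\varepsilon$ yields $1/(2\varepsilon)\le\exp(32Q_I\arcsin\gamma)$.

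This also removes what you call the ``main obstacle'': nothing additive enters the set where $q\le1$ is required, and you never need to average over $t$ inside the polynomial.
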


\begin{proof}
To obtain a polynomial whose degree depends only on $Q_I$, keep
$\widetilde H_t$ and $U_{H,t}$ from the proof of
Theorem~\ref{thm:initial-query-tradeoff} fixed and vary the guiding state as
\[
\ket{\psi_{t,\theta}}=\cos\theta\ket\bot+\sin\theta\ket t,
\qquad \theta\in[-\pi,\pi).
\]
Its preparation oracle is
\[
U_{I,t}(\theta)
=I+(\cos\theta-1)(\proj\bot+P_t)
+\sin\theta(\ket t\!\bra\bot-\ket\bot\!\bra t).
\]
This is a valid instance whenever $|\sin\theta|\ge\gamma$.
Let $p_t(\theta)$ be the probability of outcome $t$ when the algorithm's
output is measured in the computational basis.  Each $U_I$ query has
trigonometric degree one in $\theta$, while $U_H$ is independent of
$\theta$, so $p_t$ is a real trigonometric polynomial of degree at most
$2Q_I$.

At $\theta=0$, all the preparation oracles are the identity, so only
$U_{H,t}$ depends on $t$.  Comparing the computations using $U_{H,t}$
with the computation using $U_{H,0}$, Lemma~\ref{lem:query-hybrid} and
Cauchy--Schwarz give
\begin{align*}
\frac1N\sum_{t=1}^N p_t(0)
&\le\frac1N+\frac1N\sum_{t=1}^N
\norm{\ket{\Psi^t(0)}-\ket{\Psi^0}}\\
&\le\frac1N+\frac{2\delta Q_H}{\sqrt N}\le\frac12
\end{align*}
for all sufficiently large $N$.  Fix a label $t$ with $p_t(0)\le1/2$.
For every $|\sin\theta|\ge\gamma$, the output guarantee instead gives
$p_t(\theta)\ge1-\varepsilon$.

Set $q(\theta)=(1-p_t(\theta))/\varepsilon$.
Then $q(0)\ge1/(2\varepsilon)$ and $0\le q(\theta)\le1$ outside a set of
measure $s=4\arcsin\gamma\le\pi$.  Lemma~\ref{lem:trig-growth} therefore gives
\[
\frac1{2\varepsilon}\le q(0)\le\sup_\theta|q(\theta)|
\le\exp\!\left(32Q_I\arcsin\gamma\right).
\]
Taking logarithms and using $\arcsin\gamma=\Theta(\gamma)$ proves the claim.
\end{proof}

Rescaling the Hamiltonians in Lemma~\ref{lem:unconditional-initial-query-lower}
gives the $Q_I$ lower bound in Theorem~\ref{thm:unconditional-output}.

\begin{proof}[Proof of the $Q_I$ lower bound in Theorem~\ref{thm:unconditional-output}]
Set $\delta=\Delta/(2\alpha)$ and apply
Lemma~\ref{lem:unconditional-initial-query-lower}.  Rescaling its
Hamiltonians as in~\eqref{eq:lower-bound-rescaling} leaves both query counts
unchanged and gives the claimed bound.
\end{proof}

\section*{Statement on AI use}

The authors proposed applying the purifier of Belovs and Jeffery~\cite{BJ26} and the transducer composition results of Belovs, Jeffery, and Yolcu~\cite{BJY24} to ground-state preparation with block-encoding access. Large language models were used to develop the detailed constructions and proofs. The authors reviewed and verified the arguments, substantially rewrote the manuscript, and take full responsibility for its claims, proofs, and citations.

\appendix
\section{Proof of the overlap lower bound with controlled queries}
\label{app:overlap-lower}

\begin{proof}[Proof of Lemma~\ref{lem:overlap-lower}]
Set $N=\lfloor\gamma^{-2}\rfloor\ge2$ and $P_t=\proj t$ on the basis
$\{\ket\bot,\ket1,\ldots,\ket N\}$, where $\ket\bot=\ket0$.
For $1\le t\le N$, take
\[
\widetilde H_t=\delta(I-2P_t),\qquad
U_{H,t}=\sqrt{1-\delta^2}\,X\otimes I
          +\delta Z\otimes(I-2P_t).
\]
Anticommutation of $X$ and $Z$ gives $U_{H,t}^2=I$, and the $\ket0$
block of $U_{H,t}$ is $\widetilde H_t$.  Thus $\ket t$ is the unique
ground state and \eqref{eq:lower-normalized-promise} holds.  Use the
same $U_I$ for every $t$, with
$U_I\ket\bot=N^{-1/2}\sum_{r=1}^N\ket r$; its overlap with each
ground state is at least $\gamma$.

Take the reference oracle
$U_{H,0}=\sqrt{1-\delta^2}\,X\otimes I+\delta Z\otimes I$.
For every normalized $\ket\Phi$,
\[
\sum_{t=1}^N\norm{(U_{H,t}-U_{H,0})\ket\Phi}^2
=4\delta^2\sum_{t=1}^N\norm{(Z\otimes P_t)\ket\Phi}^2
\le4\delta^2.
\]
The bound also holds for controlled calls; inverse calls are identical
because the oracles are Hermitian.  As $U_I$ is independent of $t$,
Lemma~\ref{lem:query-hybrid} gives
\[
\left(\sum_{t=1}^N\norm{\ket{\Psi^t}-\ket{\Psi^0}}^2\right)^{1/2}
\le2\delta T.
\]
Let $p_t$ and $p_t^0$ be the probabilities of success and system outcome
$t$ with $U_{H,t}$ and $U_{H,0}$, respectively, and set
$c=s_0(1-\varepsilon)$.  Then $p_t\ge c$ and $\sum_t p_t^0\le1$.
Since $|p_t-p_t^0|\le\norm{\ket{\Psi^t}-\ket{\Psi^0}}$,
Cauchy--Schwarz gives
\[
Nc-1
\le\sum_{t=1}^N|p_t-p_t^0|
\le2\delta T\sqrt N.
\]
If $N\ge2/c$, then $N\ge1/(2\gamma^2)$ gives
$T=\Omega_{s_0}(1/(\gamma\delta))$.

If $N<2/c$, then $\gamma=\Omega_{s_0}(1)$.  Comparing $t=1,2$ with
$M=\proj1_{\mathsf B}\otimes(P_1-P_2)$ and
$\norm{U_{H,1}-U_{H,2}}=2\delta$ gives
\[
2s_0(1-2\varepsilon)
\le\bigl|\bra{\Psi^1}M\ket{\Psi^1}
       -\bra{\Psi^2}M\ket{\Psi^2}\bigr|
\le4\delta T.
\]
This again yields $T=\Omega_{s_0}(1/(\gamma\delta))$.
\end{proof}

\newpage

\bibliographystyle{alphaurl}
\bibliography{ref}

\end{document}